\documentclass[a4paper]{article}

\usepackage{graphicx}
\usepackage{amsmath}
\usepackage{amssymb}

\newcommand{\nat}{I\!\!N}
\newcommand{\real}{I\!\!R}
\newcommand{\eop}{\sqcap\!\!\!\!\sqcup}

\newcommand{\cho}{[]}
\newcommand{\rs}{\ {\sf rs}\ }
\newcommand{\sy}{\ {\sf sy}\ }

\newtheorem{definition}{Definition}[section]
\newtheorem{example}{Example}[section]
\newtheorem{proposition}{Proposition}[section]

\date{}

\title{Discrete time phased Petri box calculus dtphPBC}

\author{{\sc Igor V. Tarasyuk}\\
A.P. Ershov Institute of Informatics Systems,\\
Siberian Branch of the Russian Academy of Sciences,\\
Acad. Lavrentiev pr. 6, 630090 Novosibirsk, Russian Federation\\
{\tt itar@iis.nsk.su}}

\begin{document}

\maketitle

\begin{abstract}
We propose discrete time phased Petri box calculus (dtphPBC), an extension with phase type distributed multiaction
delays of discrete time stochastic and deterministic Petri box calculus (dtsdPBC), previously presented by I.V.
Tarasyuk. In dtphPBC, transition probability matrices (TPMs) of finite absorbing discrete time Markov chains (DTMCs)
with a single absorbing state specify discrete phase type (DPH) distributed delays (including zero delay) of the phased
multiactions that generalize stochastic and deterministic multiactions from dtsdPBC. The positively phased (timed)
multiactions have positive DPH delays represented by the non-empty TPM matrices over transient states (transient TPMs).
The zero phased (immediate) multiactions have zero DPH delay represented by the empty transient TPM.

The step operational semantics of dtphPBC is constructed via labeled probabilistic transition systems. The transition
systems incorporate the absorbing DTMCs of the DPH delays of the executed phased multiactions via the structural
operational semantics (SOS) rules. The SOS rules define a labeling with the empty set on the transitions among
transient states of the absorbing DTMC and on the self-loop in the absorbing state of it. The transitions going from
the transient states (positive phases) to the absorbing state (zero phase) are labeled with the executions, being the
positive phases-superscribed timed multiactions whose (positive) delays are defined by the absorbing DTMC. A series of
examples demonstrates how to construct the transition systems of the dynamic expressions, combined from timed and
immediate multiactions with different operations of the calculus.
\bigskip\\
{\bf Keywords:} stochastic process algebra, Petri box calculus, discrete time, phase type distribution, absorbing
discrete time Markov chain, phased multiaction, timed multiaction, immediate multiaction, execution, transition system,
operational semantics.
\end{abstract}

\tableofcontents

\section{Introduction}
\label{introduction.sec}

Algebraic process calculi, like Communicating Sequential Processes (CSP) \cite{Hoa85}, Algebra of Communicating
Processes (ACP) \cite{BK85} and Calculus of Communicating Systems (CCS) \cite{Mil89} are well-known formal models for
specification of computing systems and analysis of their behaviour. In such process algebras (PAs), systems and
processes are specified by formulas, and verification of their properties is accomplished at a syntactic level via
equivalences, axioms and inference rules. In recent decades, stochastic extensions of PAs were proposed, such as
Markovian TImed Processes and Performability (Performance and dependability) evaluation (MTIPP) \cite{HR94},
Performance Evaluation Process Algebra (PEPA) \cite{Hil94a,Hil96} and Extended Markovian Process Algebra (EMPA)
\cite{BGo98}. Unlike standard PAs, stochastic process algebras (SPAs) do not just specify actions which can occur
(qualitative features), but they associate with the actions the distribution parameters of their random time delays
(quantitative characteristics).

\subsection{Petri box calculus}

PAs specify concurrent systems in a compositional way via an expressive formal syntax. On the other hand, Petri nets
(PNs) provide a graphical representation of such systems and capture explicit asynchrony in their behaviour. To combine
the advantages of both models, a semantics of algebraic formulas via PNs was defined.

Petri box calculus (PBC) \cite{BDH92,BKo95,BDK01,BD24} is a flexible and expressive process algebra developed as a tool
for specification of the PNs structure and their interrelations. Its goal was also to propose a compositional semantics
for high level constructs of concurrent programming languages in terms of elementary PNs. Formulas of PBC are combined
not from single (visible or invisible) actions and variables, like in CCS, but from multisets of elementary actions and
their conjugates, called multiactions ({\em basic formulas}). The empty multiset of actions is interpreted as the
silent multiaction specifying some invisible activity. In contrast to CCS, synchronization is separated from
parallelism ({\em concurrent constructs}). Synchronization is a unary multi-way stepwise operation, based on
communication of actions and their conjugates. This extends the CCS approach with conjugate matching labels.
Synchronization in PBC is asynchronous, unlike that in Synchronous CCS (SCCS) \cite{Mil89}. Other operations are
sequence and choice ({\em sequential constructs}). The calculus includes also restriction and relabeling ({\em
abstraction constructs}). To specify infinite processes, refinement, recursion and iteration operations were added
({\em hierarchical constructs}). Thus, unlike CCS, PBC has an additional iteration operation to specify infinite
behaviour when the semantic interpretation in finite PNs is possible. PBC has a step operational semantics in terms of
labeled transition systems, based on the rules of structural operational semantics (SOS). The operational semantics of
PBC is of step type, since its SOS rules have transitions with (multi)sets of activities, corresponding to simultaneous
executions of activities (steps). A denotational semantics of PBC was proposed via a subclass of PNs equipped with an
interface and considered up to isomorphism, called Petri boxes. For more detailed comparison of PBC with other process
algebras and the reasoning about importance of non-interleaving semantics see \cite{BDH92,BDK01}.

The extensions of PBC with a deterministic, a nondeterministic or a stochastic model of time were presented.

\subsection{Time extensions of Petri box calculus}

To specify systems with time constraints, deterministic (fixed) or nondeterministic (interval) delays are used.

A time extension of PBC with a nondeterministic time model, called time Petri box calculus (tPBC), was proposed in
\cite{Kou00}. In tPBC, timing information is added by associating time intervals (the earliest and the latest firing
time) with instantaneous {\em actions}. tPBC has a step time operational semantics in terms of labeled transition
systems. Its denotational semantics was defined in terms of a subclass of labeled time Petri nets (LtPNs), based on
tPNs \cite{MFa76} and called time Petri boxes (ct-boxes).

Another time enrichment of PBC, called Timed Petri box calculus (TPBC), was defined in \cite{MF00,MF01}, it
ac\-com\-mo\-da\-tes a deterministic model of time. In contrast to tPBC, multiactions of TPBC are not instantaneous,
but have time durations. Additionally, in TPBC there exist no ``illegal'' multiaction occurrences, unlike tPBC. The
com\-ple\-x\-ity of ``illegal'' occurrences mechanism was one of the main intentions to construct TPBC though this
calculus appeared to be more complicated than tPBC. TPBC has a step timed operational semantics in terms of labeled
transition systems. The denotational semantics of TPBC was defined in terms of a subclass of la\-be\-led Timed Petri
nets (LTPNs), based on TPNs \cite{Ram73} and called Timed Petri boxes (T-boxes). tPBC and TPBC differ in ways they
capture time information, and they are not in competition but complement~each~other.

The third time extension of PBC, called arc time Petri box calculus (atPBC), was constructed in \cite{Nia05,NK05}, and
it implements a nondeterministic time. In atPBC, multiactions are associated with time delay intervals. atPBC possesses
a step time operational semantics in terms of labeled transition systems. Its denotational semantics was defined on a
subclass of labeled arc time Petri nets (atPNs), based of those from \cite{BLT90,Hani93}, where time restrictions are
associated with the arcs, called arc time Petri boxes (at-boxes).

tPBC, TPBC and atPBC, all adapt the discrete time approach, but TPBC has no immediate (multi)actions  (those with zero
delays).

\subsection{Stochastic extensions of Petri box calculus}

The set of states for the systems with deterministic or nondeterministic delays often differs drastically from that for
the timeless systems, hence, the analysis results for untimed systems may be not valid for the time ones. To solve this
problem, stochastic delays are considered, which are the random variables with a (discrete or continuous) probability
distribution. If the random variables governing delays have an infinite support then the corresponding SPA can exhibit
all the same behaviour as its underlying untimed PA.

A stochastic extension of PBC, called stochastic Petri box calculus (sPBC), was proposed in \cite{MVF01,MVCC04}. In
sPBC, multiactions have stochastic delays that follow (negative) exponential distribution. Each multiaction is equipped
with a rate that is a parameter of the corresponding exponential distribution. The instantaneous execution of a
stochastic multiaction is possible only after the corresponding stochastic time delay. The calculus has an interleaving
operational semantics defined via transition systems labeled with multiactions and their rates. Its denotational
semantics was defined in terms of a subclass of labeled continuous time stochastic PNs, based on CTSPNs
\cite{Mol82,Mar90,Bal01,Buc95,Bal07} and called stochastic Petri boxes (s-boxes). In sPBC, performance of the processes
is evaluated by analyzing their underlying continuous time Markov chains (CTMCs). In \cite{MVCF08}, a number of new
equivalence relations were proposed for regular terms of sPBC to choose later a suitable candidate for a congruence.

sPBC was enriched with immediate multiactions having zero delays in \cite{MVCR08,MVCRT16}. We call such an extension
generalized sPBC (gsPBC). An interleaving operational semantics of gsPBC was constructed via transition systems labeled
with stochastic or immediate multiactions together with their rates or probabilities. A denotational semantics of gsPBC
was defined via a subclass of labeled generalized stochastic PNs, based on GSPNs \cite{Mar90,MBCDF95,Buc98,Bal01,Bal07}
and called generalized stochastic Petri boxes (gs-boxes). The performance analysis in gsPBC is based on the semi-Markov
chains (SMCs).

PBC has a step operational semantics, whereas sPBC has an interleaving one. In step semantics, parallel executions of
activities (steps) are permitted while in interleaving semantics, we can execute only single activities. Hence, a
stochastic extension of PBC with a step semantics was needed to keep the concurrency degree of behavioural analysis at
the same level as in PBC. As mentioned in \cite{Mol81,Mol85}, in contrast to continuous time approach (used in sPBC),
discrete time approach allows for constructing models of common clock systems and clocked devices. In such models,
multiple transition firings (or executions of multiple activities) at time moments (ticks of the central clock) are
possible, resulting in a step semantics. Moreover, employment of discrete stochastic time fills the gap between the
models with deterministic (fixed) time delays and those with continuous stochastic time delays. As argued in
\cite{AHR00}, arbitrary delay distributions are much easier to handle in a discrete time domain. In
\cite{MVi08,MVi09,MABV12}, discrete stochastic time was preferred to enable simultaneous expiration of multiple delays.

In \cite{Tar05,Tar06,Tar07a,Tar14}, we presented an extension of the algebra PBC, called discrete time stochastic PBC
(dtsPBC). In dtsPBC, the residence time in the process states is geometrically distributed. A step operational
semantics of dtsPBC was constructed via labeled probabilistic transition systems. Its denotational semantics was
defined in terms of a subclass of labeled discrete time stochastic PNs (LDTSPNs), based on DTSPNs \cite{Mol81,Mol85}
and called discrete time stochastic Petri boxes (dts-boxes). The performance evaluation in dtsPBC is accomplished via
the underlying discrete time Markov chains (DTMCs) of the algebraic processes. A variety of stochastic equivalences
were proposed to identify stochastic processes with similar behaviour which are differentiated by the semantic
equivalence. The interrelations of all the introduced equivalences were studied. Since dtsPBC has a discrete time
semantics and geometrically distributed sojourn time in the process states, unlike sPBC with continuous time semantics
and exponentially distributed delays, the calculi apply two different approaches to the stochastic extension of PBC, in
spite of some similarity of their syntax and semantics inherited from PBC. The main advantage of dtsPBC is that
concurrency is treated like in PBC having step semantics, whereas in sPBC parallelism is simulated by interleaving,
obliging one to collect the information on causal independence of activities before constructing the semantics.

In \cite{TMV10,TMV13,TMV14,TMV15,TMV18}, an enhanced calculus discrete time stochastic and immediate PBC (dtsiPBC) was
proposed as an extension with immediate multiactions of dtsPBC. Immediate multiactions increase the specification
capability: they can model logical conditions, probabilistic branching, instantaneous probabilistic choices and
activities whose durations are negligible in comparison with those of others. They are also used to specify urgent
activities and the ones that are not relevant for performance evaluation. Thus, immediate multiactions can be
considered as a kind of instantaneous dynamic state adjustment and, in many cases, they result in a simpler and more
clear system representation. The step operational semantics of dtsiPBC was constructed with the use of labeled
probabilistic transition systems. Its denotational semantics was defined in terms of a subclass of labeled discrete
time stochastic and immediate PNs (LDTSIPNs), based on the extension of DTSPNs \cite{Mol81,Mol85} with transition
labeling and immediate transitions, called dtsi-boxes. The corresponding stochastic process, the underlying SMC, was
constructed and investigated, with the purpose of performance evaluation. In addition, the alternative solution methods
were develo\-ped, based on the underlying ordinary and reduced DTMCs. Step stochastic bisimulation equivalence of the
process expressions was defined to compare and reduce their transition systems and Markov chains, as well as to
identify the stationary behaviour.

In \cite{Tar19,Tar20a,Tar20b,Tar21,Tar23,Tar24,Tar25a,Tar25b}, we defined discrete time stochastic and deterministic
Petri box calculus (dtsdPBC) that enhances dtsiPBC with de\-ter\-mi\-nistic mul\-ti\-ac\-ti\-ons. In dtsdPBC, besides
the probabilities from the real-valued interval $(0;1)$, applied to calculate discrete time delays of sto\-chas\-tic
multiactions, also non-ne\-ga\-ti\-ve integers were used to specify fixed de\-lays of deterministic multiactions
(in\-clu\-ding zero delay, which is the case of im\-me\-di\-ate multiactions). To resolve conflicts among deterministic
multiactions, they were  equipped with positive real-valued weights. dtsdPBC has a step operational semantics, defined
via labeled proba\-bi\-lis\-tic transition systems. The de\-no\-ta\-ti\-o\-nal se\-man\-tics of dtsdPBC was defined in
terms of a subclass of labeled discrete time stochastic and deterministic Petri nets (LDTSDPNs), an enrichment of
DTSPNs \cite{Mol81,Mol85} with transition labeling and deterministic transitions, called dtsd-boxes. With the purpose
of performance evaluation in dtsdPBC, the underlying stochastic process of the process expressions was built and
analyzed, which is a semi-Markov chain (SMC). The alternative solution methods were developed, based on the underlying
discrete time Markov chain (DTMC) and its reduction (RDTMC) by eliminating vanishing states, i.e. those with zero
sojourn times. We proposed step stochastic bisimulation equivalence to identify the algebraic processes with similar
behaviour that are however differentiated by the semantics. We established consistency of the operational and
denotational semantics of dtsdPBC up to that equivalence, which was also used to reduce transition systems and SMCs,
DTMCs and RDTMCs of the process expressions while preserving the qualitative and quantitative characteristics.

\subsection{Our contributions}

In this paper, we extend dtsdPBC with discrete phase type distributed multiaction delays that are described by finite
absorbing DTMCs and include both geometric and non-Markovian (such as deterministic) delays as special cases. Discrete
phase type probability distributions can approximate with any required precision general discrete distributions over
the positive integers basis and are closed under minimum (alternative composition, conflict), maximum (parallel
composition, parallelism), finite convolution (sequential composition, precedence), finite weighted and infinite
geometric summations, as well as convex mixtures \cite{Neu81,BHST03,Ste09,HPV15,TBo17,LST19,HV23,HHPTV24,HRGBF24}. We
present an enrichment of dtsdPBC with phased multiactions, called {\em discrete time phased Petri box calculus}
(dtphPBC), which enhances the expressiveness of dtsdPBC and extends the application area of the associated
specification and analysis techniques. In dtphPBC, the transition probability matrices (TPMs) of finite absorbing DTMCs
(with all but one transient states and a single absorbing state) are used to specify phase type distributed delays of
phase multiactions (including the empty TPM specifying zero delay, which is the case of immediate multiactions). To
resolve conflicts among the executions with probability $1$ of the phase multiactions, they are additionally equipped
with positive real-valued weights. The step operational semantics of dtsdPBC is constructed with the use of labeled
probabilistic transition sys\-tems. With a number of examples we demonstrate how to construct the transition systems of
the dynamic expressions with different operations.

Our novel approach was inspired by some ideas on incorporating phase type delays into stochastic Petri nets and
stochastic process algebras. Some known SPNs with phase type transition de\-lays are: SPNs with phase-type distributed
transition times (PTDTT-SPNs) \cite{Cum85} and phased delay PNs (PDPNs) \cite{JC02} (the both SPN classes with discrete
and continuous time), as well as defective discrete phase SPNs (DDP-SPNs) \cite{Cia95}, discrete deterministic and
stochastic PNs (DDSPNs) \cite{ZC96,ZCH97} and non-Markovian SPNs (NMSPNs) \cite{HPST00} (the three SPN classes with
discrete time). Among all those SPN classes, only NMSPNs have a non-interleaving transition firing semantics being
technically involved though.

Some existing SPAs with phase type action delays are: a modification of $PA_{GS}$ \cite{Kat96}, $PEPA_{ph}^\infty$
\cite{ERKN99}, SB-LOTOS \cite{HK00}, PTP \cite{Wolf08}, CCC \cite{Pul09,PH15,BHKS23} and PHASE \cite{CR14,CR15,Cio23}
(the six SPAs with continuous time), as well as a variant of WSCCS \cite{Tof00} (with discrete time). All those SPAs
have only interleaving operational and no SPN-based denotational semantics. Those interleaving phase SPAs are either
theoretically-oriented and lacking practical case studies, or very specialized, or with restricted specification
capabilities. In detail, $PA_{GS}$ formalizes sophisticated compositions of stochastic bundle event structures,
$PEPA_{ph}^\infty$ describes certain classes of non-Mar\-ko\-vi\-an systems including phase-type delayed queueing
net\-works, SB-LOTOS separates actions and delays thus implementing orthogonal time, WSCCS has a technically complex
and non-sufficiently intuitive syn\-tax exploiting weights and priorities, PTP only specifies cooperating processes
that cannot be synchronized by shared activities, CCC does not have actions, synchronization and recursion, whereas
PHASE offers just a few operators and distinguishes between the action and delay transitions, aiming to easy
implementation. Unlike $PA_{GS}$, $PEPA_{ph}^\infty$, SB-LOTOS, PTP, CCC and PHASE with continuous time, WSCCS adapts a
discrete time model, but the semantics of WSCCS is still interleaving. Hence, it is actual to construct a discrete time
SPA with phase type delays and non-interleaving semantics: operational one (on the labeled transition systems with
parallel executions of ac\-ti\-vi\-ti\-es) and denotational one (on the SPNs with phase delays and parallel
firings~of~transitions).

Thus, the main contributions of the paper are the following.
\begin{itemize}

\item Syntax of new powerful and expressive discrete time SPA with phased activities, called dtphPBC.

\item Parallel step operational semantics of dtsdPBC in terms of labeled probabilistic transition systems.

\item Examples of the transition systems constructed for the process expressions with different operations.

\end{itemize}

\subsection{Structure of the paper}

The paper is organized as follows. In Section \ref{syntax.sec}, the syntax of the algebra dtphPBC is presented. In
Section \ref{opersem.sec}, we construct the step operational semantics of the calculus in terms of labeled
probabilistic transition systems. Finally, Section \ref{conclusion.sec} summarizes the results obtained and outlines
research perspectives in this area.

\section{Syntax}
\label{syntax.sec}

In this section, we propose the syntax of dtphPBC.

\subsection{Activities and operations}

We recall a definition of multiset that is an extension of the set notion by allowing several identical elements.

\begin{definition}
Let $X$ be a set. A finite {\em multiset (bag)} $M$ over $X$ is a mapping $M:X\rightarrow\nat$ such that $|\{x\in X\mid
M(x)>0\}|<\infty$, i.e. it can contain a finite number of elements only.
\end{definition}

We denote the {\em set of all finite multisets} over a set $X$ by $\nat_{fin}^X$. Let $M,M'\in\nat_{fin}^X$. The {\em
cardinality} of $M$ is defined as $|M|=\sum_{x\in X}M(x)$. We write $x\in M$ if $M(x)>0$ and $M\subseteq M'$ if
$\forall x\in X\ M(x)\leq M'(x)$. We define $(M+M')(x)=M(x)+M'(x)$ and $(M-M')(x)=\max\{0,M(x)-M'(x)\}$. When $\forall
x\in X,\ M(x)\leq 1,\ M$ can be interpreted as a proper set and denoted by $M\subseteq X$. The {\em set of all subsets
(powerset)} of $X$ is denoted by $2^X$.

We now define discrete phase type probability distributions.

\begin{definition}
A {\em (discrete) phase-type (DPH) distribution} is defined on $\real_{\geq 0}$ as the time to absorption in a finite
state discrete time Markov chain (DTMC) with a finite number of (or zero) transient states $n,\ldots ,1$ and a single
absorbing state $0$, called {\em phases} (often written in the decreasing order for convenience).

The {\em representation} of DPH distribution is a pair $(\varpi ,{\bf A})$, where the row vector $\varpi$ of $n$
elements $\varpi_i\ (1\leq i\leq n)$ denotes the {\em initial probability distribution over the transient states} and
the square matrix ${\bf A}$ of order $n=ord({\bf A})$ with the elements ${\cal A}_{ij},\ i,j\in\{1,\ldots ,n\}$, is the
sub-stochastic transient transition probability matrix (TPM) specifying the {\em transition probabilities among the
transient states}. The column vector ${\bf a}$ of $n$ elements ${\cal A}_i,\ i\in\{1,\ldots ,n\}$, denotes the {\em
transition probabilities towards absorption}, i.e. to the absorbing state $0$.

Thus, the {\em TPM of the absorbing DTMC} is
$\check{\bf A}=\left(\begin{array}{cc}
{\bf A} & {\bf a}\\
{\bf 0} & 1
\end{array}\right)$, where ${\bf 0}$ is a row vector of $n$ values $0$. The {\em initial probability distribution
vector of the absorbing DTMC} is $\pi =(\varpi ,\pi_0)$, where $\pi_0=1-\varpi{\bf 1}^T$ and ${\bf 1}$ is a row vector
of $n$ values $1$.

Let ${\bf I}$ is the identity matrix of order $n$. Each DPH distributed random variable $\xi$ has the following
characteristics:
\begin{itemize}

\item The {\em probability distribution function (PDF)} is $F_\xi (k)={\sf P}(\xi <k)=1-\varpi{\bf A}^k{\bf 1}^T\
(k\in\nat_{\geq 1})$.

\item The {\em probability mass function (PMF)} is $p_\xi (k)={\sf P}(\xi =k)=\varpi{\bf A}^{k-1}{\bf a}\
(k\in\nat_{\geq 1})$.

\item The {\em mean value (average, expectation)} is ${\sf M}(\xi )=\sum_{k=1}^\infty k p_\xi (k)=
\varpi({\bf I}-{\bf A})^{-1}{\bf 1}^T$.

\item The {\em dispersion (variance)} is ${\sf D}(\xi )=\sum_{k=1}^\infty (k-{\sf M}(\xi ))^2 p_\xi (k)=
2\varpi({\bf I}-{\bf A})^{-2}{\bf A}{\bf 1}^T$.

\end{itemize}
\end{definition}

In $\varpi$ and ${\bf A}$, the absorbing state $0$ is implicit, since its initial probability $\pi_0=1-\varpi{\bf 1}^T$
and the vector of transition probabilities towards it ${\bf a}=({\bf I}-{\bf A}){\bf 1}^T$ are fully determined by the
former ones. Let $\xi\sim DPH(\varpi ,{\bf A})$ denote that a random variable $\xi$ is DPH distributed with the
representation $(\varpi ,{\bf A})$.

For DPH distributions, we shall consider only {\em irreducible representations} $(\varpi ,{\bf A})$, i.e. those where
no transient state is visited with probability $0$ starting from the initial distribution $\varpi$. We are mainly
interested in {\em acyclic} DPH (ADPH) distributions \cite{BHST03}, whose transient TPM has the associated graph
without cycles. ADPH distribution has at least one representation with the transient TPM being (up to permutation of
states) in the upper triangular form, i.e. such that all its elements below the main diagonal are zeros.

The operations over DPH distributions are defined using the Kronecker (tensor) product of matrices.

\begin{definition}
Let ${\bf A}$ be a $n\times m$ matrix with the elements ${\cal A}_{ij}\ (1\leq i\leq n,\ 1\leq j\leq m)$ and ${\bf B}$
be a $k\times l$ matrix. The Kronecker (tensor) product of ${\bf A}$ and ${\bf B}$ is a $nk\times ml$ matrix, defined
as
$${\bf A}\otimes{\bf B}=\left(\begin{array}{ccc}
{\cal A}_{11}{\bf B} & \ldots & {\cal A}_{1m}{\bf B}\\
\vdots & \vdots & \vdots\\
{\cal A}_{n1}{\bf B} & \ldots & {\cal A}_{nm}{\bf B}
\end{array}\right).$$

Let ${\bf A}$ be a square matrix of order $n$ and ${\bf B}$ be a square matrix of order $m$. Let ${\bf I}_n$ is the
identity matrix of order $n$ and ${\bf I}_m$ is the identity matrix of order $m$. The Kronecker (tensor) sum of ${\bf
A}$ and ${\bf B}$ is a square matrix of order $nm$, defined as
$${\bf A}\oplus{\bf B}={\bf A}\otimes{\bf I}_m+{\bf I}_n\otimes{\bf B}.$$

\end{definition}

DPH distributions are closed under minimum and maximum (of the DPH distributed random variables). Let $\xi$ and $\zeta$
be two random variables with $\xi\sim DPH(\varpi_A,{\bf A})$ and $\zeta\sim DPH(\varpi_B,{\bf B})$.
\begin{itemize}

\item We have $\min (\xi ,\zeta )\sim DPH(\varpi_C,{\bf C})$, where
$${\bf C}={\bf A}\sqcap{\bf B}={\bf A}\otimes{\bf B}\mbox{ and }\varpi_C=\varpi_A\otimes\varpi_B.$$

Then the transition probabilities towards absorption vector and TPM of the absorbing DTMC are
$${\bf c}=({\bf I}-{\bf A}\otimes{\bf B}){\bf 1}^T\mbox{ and }
\check{\bf C}=\left(\begin{array}{cc}
{\bf A}\otimes{\bf B} & ({\bf I}-{\bf A}\otimes{\bf B}){\bf 1}^T\\
{\bf 0} & 1
\end{array}\right).$$

\item We have $\max (\xi ,\zeta )\sim DPH(\varpi_D,{\bf D})$, where
$${\bf D}={\bf A}\sqcup{\bf B}=\left(\begin{array}{ccc}
{\bf A}\otimes{\bf B} & {\bf A}\otimes{\bf b} & {\bf a}\otimes{\bf B}\\
0 & {\bf A} & 0\\
0 & 0 & {\bf B}
\end{array}\right)\mbox{ and }\varpi_D=(\varpi_A\otimes\varpi_B,(1-\varpi_B{\bf 1}^T)\varpi_A,
(1-\varpi_A{\bf 1}^T)\varpi_B).$$

Then the transition probabilities towards absorption vector and TPM of the absorbing DTMC are
$${\bf d}=\left(\begin{array}{c}
{\bf a}\otimes{\bf b}\\
{\bf a}\\
{\bf b}
\end{array}\right)\mbox{ and }
\check{\bf D}=\left(\begin{array}{cccc}
{\bf A}\otimes{\bf B} & {\bf A}\otimes{\bf b} & {\bf a}\otimes{\bf B} & {\bf a}\otimes{\bf b}\\
0 & {\bf A} & 0 & {\bf a}\\
0 & 0 & {\bf B} & {\bf b}\\
0 & 0 & 0 & 1
\end{array}\right).$$

\end{itemize}

For the non-empty transient TPMs, we shall only consider the initial distributions over transient states like $\varpi
=(1,0,\ldots ,0)={\bf e}_1$ (hence, $\pi_0=0$), i.e. we start from phase $n$ (the first phase in the decreasing order)
with probability $1$. In such a case, the transient TPM ${\bf A}\neq\emptyset$ will completely define the corresponding
DPH distribution ans we shall denote $DPH({\bf A})=DPH({\bf e}_1,{\bf A})$. This is not a severe restriction, since we
can always add to the absorbing DTMC a new starting state with the initial probability $1$ and outgoing transitions
towards the ``old'' initial states with the initial probabilities of those states (like the ``begin'' state from
\cite{Cia95,ZC96}). Then the empty transient TPM $\emptyset$ will define the Dirac distribution with the value
parameter $0$ (fixed, i.e. deterministic, zero delay) and support $\{0\}$, i.e. $DPH(\emptyset )=DPH(\varepsilon
,\emptyset )=Dirac(0)$, where $\varepsilon$ is the zero length vector.

\begin{example}
Let ${\bf A}$ be the transient TPM of the (discrete) uniform distribution $Unif(1,3)$ with the value border parameters
$1,3$ and support $\{1,2,3\}$ ($DPH({\bf A})=Unif(1,3)$). Let ${\bf B}$ be the transient TPM of the geo\-met\-ric
distribution $Geom(\frac{1}{3})$ with the probability parameter $\frac{1}{3}$ and support $\nat_{\geq 1}$ ($DPH({\bf
B})=Geom(\frac{1}{3})$). The transient TPMs ${\bf A},\ {\bf B}$, and the corresponding vectors of the transition
pro\-ba\-bi\-li\-ti\-es towards absorption ${\bf a},\ {\bf b}$ are

$${\bf A}=\left(\begin{array}{ccc}
0 & \frac{2}{3} & 0\\
0 & 0 & \frac{1}{2}\\
0 & 0 & 0
\end{array}\right),\
{\bf a}=\left(\begin{array}{c}
\frac{1}{3}\\
\frac{1}{2}\\
1
\end{array}\right),\
{\bf B}=\frac{2}{3},\
{\bf b}=\frac{1}{3}.$$

Let $\xi$ and $\zeta$ be two random variables with $\xi\sim DPH({\bf A})$ and $\zeta\sim DPH({\bf B})$.
\begin{itemize}

\item We have $\min (\xi ,\zeta )\sim DPH({\bf C})$, where
$${\bf C}={\bf A}\sqcap{\bf B}=\left(\begin{array}{ccc}
0 & \frac{4}{9} & 0\\
0 & 0 & \frac{1}{3}\\
0 & 0 & 0
\end{array}\right).$$

Then the transition probabilities towards absorption vector and TPM of the absorbing DTMC are
$${\bf c}=\left(\begin{array}{c}
\frac{5}{9}\\
\frac{2}{3}\\
1
\end{array}\right)\mbox{ and }
\check{\bf C}=\left(\begin{array}{cccc}
0 & \frac{4}{9} & 0 & \frac{5}{9}\\
0 & 0 & \frac{1}{3} & \frac{2}{3}\\
0 & 0 & 0 & 1\\
0 & 0 & 0 & 1
\end{array}\right).$$

\item We have $\max (\xi ,\zeta )\sim DPH({\bf D})$, where
$${\bf D}={\bf A}\sqcup{\bf B}=\left(\begin{array}{ccccccc}
0 & \frac{4}{9} & 0 & \frac{2}{9} & 0 & 0 & \frac{2}{9}\\
0 & 0 & \frac{1}{3} & 0 & 0 & \frac{1}{6} & \frac{1}{3}\\
0 & 0 & 0 & 0 & 0 & 0 & \frac{2}{3}\\
0 & 0 & 0 & 0 & \frac{2}{3} & 0 & 0\\
0 & 0 & 0 & 0 & 0 & \frac{1}{2} & 0\\
0 & 0 & 0 & 0 & 0 & 0 & 0\\
0 & 0 & 0 & 0 & 0 & 0 & \frac{2}{3}\\
\end{array}\right).$$

Then the transition probabilities towards absorption vector and TPM of the absorbing DTMC are
$${\bf d}=\left(\begin{array}{c}
\frac{1}{9}\\
\frac{1}{6}\\
\frac{1}{3}\\
\frac{1}{3}\\
\frac{1}{2}\\
1\\
\frac{1}{3}
\end{array}\right)\mbox{ and }
\check{\bf D}=\left(\begin{array}{cccccccc}
0 & \frac{4}{9} & 0 & \frac{2}{9} & 0 & 0 & \frac{2}{9} & \frac{1}{9}\\
0 & 0 & \frac{1}{3} & 0 & 0 & \frac{1}{6} & \frac{1}{3} & \frac{1}{6}\\
0 & 0 & 0 & 0 & 0 & 0 & \frac{2}{3} & \frac{1}{3}\\
0 & 0 & 0 & 0 & \frac{2}{3} & 0 & 0 & \frac{1}{3}\\
0 & 0 & 0 & 0 & 0 & \frac{1}{2} & 0 & \frac{1}{2}\\
0 & 0 & 0 & 0 & 0 & 0 & 0 & 1\\
0 & 0 & 0 & 0 & 0 & 0 & \frac{2}{3} & \frac{1}{3}\\
0 & 0 & 0 & 0 & 0 & 0 & 0 & 1
\end{array}\right).$$

\end{itemize}
\label{dphminmax.exm}
\end{example}

Let $Act=\{a,b,\ldots\}$ be the set of {\em elementary actions}. Then $\widehat{Act}=\{\hat{a},\hat{b},\ldots\}$ is the
set of {\em conjugated actions (conjugates)} such that $\hat{a}\neq a$ and $\hat{\hat{a}}=a$. Let ${\cal
A}=Act\cup\widehat{Act}$ be the set of {\em all actions}, and ${\cal L}=\nat_{fin}^{\cal A}$ be the set of {\em all
multiactions}. Note that $\emptyset\in{\cal L}$, this corresponds to an internal move, i.e. the execution of a
multiaction that contains no visible action names. The {\em alphabet} of $\alpha\in{\cal L}$ is defined as ${\cal
A}(\alpha )=\{x\in{\cal A}\mid\alpha (x)>0\}$.

A {\em timed (positively phased) multiaction} is a pair $(\alpha ,\partial_l^{\bf A})$, where $\alpha\in{\cal L},\ {\bf
A}\neq\emptyset$ is a non-empty transient TPM of the {\em positive delay} whose DPH distribution is represented by the
pair $({\bf e}_1,{\bf A})$, and $l\in\real_{>0}=(0;\infty )$ is the positive real-valued {\em weight} of the
multiaction $\alpha$. This weight is interpreted as a measure of importance (urgency, interest) or a bonus reward
associated with the {\em execution with probability $1$ of the timed multiaction while changing the current phase to
the zero phase}, i.e. at the time step at which the associated delay is finally expired. Such weights are used to
calculate the probability to execute the set of timed multiactions after their time delays when the execution
probability of each such timed multiaction is $1$. If the execution probability of each timed multiaction from the set
is less than $1$ then the weights of those timed multiactions are not taken into account. In such a case the overall
execution probability is calculated directly from the individual execution probabilities. Thus, the overall execution
probability can be calculated only if each individual execution probability is equal to $1$ or each of them is less
than $1$. We suppose a {\em priority} of the executions with probability $1$ (called {\em definite executions}) over
those with probabilities less than $1$ (called {\em probabilistic executions}). We do not mix the two kinds of
executions with a goal to avoid technical difficulties related to conditioning events with probability $0$, as well as
rather intricate many-variant computations. Let ${\cal TL}$ be the set of {\em all timed multiactions}.

The above reasoning on the probabilistic and definite executions of a timed multiaction is particularly inspired by the
two kinds of timed multiaction with the {\em one-element transient TPM} ${\bf A}={\cal A}_{11}$ of the delay. They have
the following analogues in dtsdPBC. A {\em geometric multiaction} $(\alpha ,\partial_l^{{\cal A}_{11}})$ with $0<{\cal
A}_{11}<1$ is also denoted by $(\alpha ,\rho )$ like a stochastic multiaction in dtsdPBC, where $\rho ={\cal
A}_1=1-{\cal A}_{11}$ is the {\em probability} of the multiaction, and its weight $l$ is omitted, as not used in the
resolving probabilistic choices. A {\em unit (tick) multiaction} $(\alpha ,\partial_l^0)$ with ${\cal A}_{11}=0$ is
also denoted by $(\alpha ,\natural_l^1)$ like a one-unit-delayed waiting multiaction in dtsdPBC, where the superscript
$1={\cal A}_1=1-{\cal A}_{11}$ is the {\em (fixed, deterministic) delay} of one time unit. Unit multiactions have a
{\em priority} over geometric ones, similar to waiting multiactions having a priority over stochastic ones in dtsdPBC.
The remaining kinds of timed multiactions cannot be ordered by priority, and they also cannot be compared by priority
with geometric and unit multiactions. In general, different executions of the same timed multiaction may have
probabilities as less than $1$ (like geometric multiactions), as equal to $1$ (like unit multiactions).

An {\em immediate (zero phased) multiaction} is a pair $(\alpha ,\partial_l^\emptyset )$, where $\alpha\in{\cal L},\
\emptyset$ is the empty transient TPM of the {\em zero delay} (whose DPH distribution is represented by the pair
$(\varepsilon ,\emptyset )$, with $\varepsilon$ being the zero length vector) and $l\in\real_{>0}$ is the positive
real-valued {\em weight} of the multiaction $\alpha$. This weight is interpreted like that for timed multiaction, but
it is associated with the {\em execution with probability $1$ of the immediate multiaction while staying in the zero
phase}. Immediate multiactions have a {\em priority} over timed ones. One can assume that all immediate multiactions
have (the highest) priority $1$ and all timed multiactions have (the lowest) priority $0$. This means that in a state
where the two kinds of multiactions can occur, immediate multiactions always occur before timed ones. Different types
of multiactions cannot participate together in some step (parallel execution), i.e. just the steps consisting only of
immediate multiactions or only of timed ones are allowed. An immediate multiaction $(\alpha ,\partial_l^\emptyset )$ is
also denoted by $(\alpha ,\natural_l^0)$ like immediate multiaction in dtsdPBC, where the superscript $0$ is the {\em
(fixed) delay} of zero time. Let ${\cal IL}$ be the set of {\em all immediate multiactions}.

Let us note that the same multiaction $\alpha\in{\cal L}$ may have different delays and weights in the same
specification. An {\em activity} is a timed or an immediate multiaction. Let ${\cal PHL}={\cal TL}\cup{\cal IL}$ be the
set of {\em all (phased) activities}. The {\em alphabet} of an activity $(\alpha ,\kappa )\in{\cal PHL}$ is defined as
${\cal A}((\alpha ,\kappa ))={\cal A}(\alpha )$. The {\em alphabet} of a multiset of activities
$\Upsilon\in\nat_{fin}^{\cal PHL}$ is defined as ${\cal A}(\Upsilon )=\cup_{(\alpha ,\kappa )\in\Upsilon}{\cal
A}(\alpha )$. For an activity $(\alpha ,\kappa )\in{\cal PHL}$ we define its {\em multiaction part} as ${\cal L}(\alpha
,\kappa )=\alpha$ and its {\em weight part} as $\Omega (\alpha ,\kappa )=l$, where $\kappa =\partial_l^{\bf A},\
l\in\real_{>0}$. The {\em multiaction part} of a multiset of activities $\Upsilon\in\nat_{fin}^{\cal PHL}$ is defined
as ${\cal L}(\Upsilon )=\sum_{(\alpha ,\kappa )\in\Upsilon}\alpha$.

Activities are combined into formulas (process expressions) by the following operations: {\em sequence} $;$, {\em
choice} $\cho$, {\em parallelism} $\|$, {\em relabeling} $[f]$ of actions, {\em restriction} $\!\!\rs\!\!$ over a
single action, {\em synchronization} $\!\!\sy\!\!$ on an action and its conjugate, and {\em iteration} $[\,*\,*\,]$
with three arguments: initialization, body and termination.

Sequence (sequential composition) and choice (choice composition) have a standard interpretation, like in other process
algebras, but parallelism (parallel composition) does not include synchronization, unlike the corresponding operation
in CCS \cite{Mil89}.

Relabeling functions $f:{\cal A}\rightarrow{\cal A}$ are bijections preserving conjugates, i.e. $\forall x\in{\cal A}\
f(\hat{x})=\widehat{f(x)}$. Relabeling is extended to multiactions in the usual way: for $\alpha\in{\cal L}$ we define
$f(\alpha )=\sum_{x\in\alpha}f(x)$. Relabeling is extended to activities: for $(\alpha ,\kappa )\in{\cal PHL}$ we
define $f(\alpha ,\kappa )=(f(\alpha ),\kappa )$. Relabeling is extended to the multisets of activities as follows: for
$\Upsilon\in\nat_{fin}^{\cal PHL}$ we define $f(\Upsilon )=\sum_{(\alpha ,\kappa )\in\Upsilon}(f(\alpha ),\kappa )$.

Restriction over an elementary action $a\in Act$ means that, for a given expression, any process behaviour containing
$a$ or its conjugate $\hat{a}$ is not allowed.

Let $\alpha ,\beta\in{\cal L}$ be two multiactions such that for some elementary action $a\in Act$ we have $a\in\alpha$
and $\hat{a}\in\beta$, or $\hat{a}\in\alpha$ and $a\in\beta$. Then, synchronization of $\alpha$ and $\beta$ by $a$ is
defined as $\alpha\oplus_a\beta =\gamma$, where

$$\gamma (x)=\left\{
\begin{array}{ll}
\alpha (x)+\beta (x)-1, & x=a\mbox{ or }x=\hat{a};\\
\alpha (x)+\beta (x), & \mbox{otherwise}.
\end{array}
\right.$$ In other words, we require that $\alpha\oplus_a\beta =\alpha +\beta -\{a,\hat{a}\}$, i.e. we remove one
exemplar of $a$ and one exemplar of $\hat{a}$ from the multiset sum $\alpha +\beta$, since the synchronization of $a$
and $\hat{a}$ produces $\emptyset$. Activities are synchronized with the use of their multiaction parts, i.e. the
synchronization by $a$ of two activities, whose multiaction parts $\alpha$ and $\beta$ possess the properties mentioned
above, results in the activity with the multiaction part $\alpha\oplus_a\beta$. We may synchronize activities of the
same type only: either both timed multiactions {\em with the same DPH distribution of delay} or both immediate ones,
since timed and immediate multiactions have different priorities, and diverse delays of multiactions would contradict
their joint timing. Hence, the multiactions of different types cannot be executed together (note also that the
execution of immediate multiactions takes no time, unlike that of timed ones). Synchronization by $a$ means that, for a
given expression with a process behaviour containing two concurrent activities that can be synchronized by $a$, there
exists also the process behaviour that differs from the former only in that the two activities are replaced with the
result of their synchronization.

In the iteration, the initialization subprocess is executed first, then the body is performed zero or more times, and
finally, the termination subprocess is executed.

\subsection{Process expressions}

Static expressions specify the structure of processes, i.e. how activities are com\-bined by operations in order to
con\-struct the composite process-algebraic formulas.

We assume that every timed multiaction has a {\em phase indicator} associated, whose value is the state (phase) of the
absorbing DTMC defining the DPH distribution of the corresponding delay. The zero phase is not written, since in this
phase the timed multiaction will be already executed and this will be explicit. Remember that the initial distribution
of the DTMC is always in the form of $\varpi =(1,0,\ldots ,0)={\bf e}_1$, so that the DPH distribution of the delay is
fully characterized by its transient TPM ${\bf A}$. Therefore, besides standard (non-superscribed) timed multiactions
in the form of $(\alpha ,\partial_l^{\bf A})\in{\cal TL}$ with ${\bf A}\neq\emptyset$, a special case of the {\em
superscribed} timed multiactions should be considered in the definition of static expressions. Each superscribed timed
multiaction in the form of $(\alpha ,\partial_l^{\bf A})^i$ has an extra superscript $i\in\{1,\ldots ,ord({\bf A})\}$
that indicates the {\em current} phase of the DPH distributed delay of that multiaction. The standard timed
multiactions have no superscripts, in order to demonstrate irrelevance of the delay phases for them. For example, their
phase indicators have not yet started or have already finished running after the absorbing DTMCs has entered the
absorbing state. The notions of the alphabet, multiaction part, weight part for (the multisets of) superscribed timed
multiactions are respectively defined like those for (the multisets of) non-superscribed timed multiactions.

By reasons of simplicity, we do not assign the delay phase superscripts $i$ to immediate multiactions, which are a
special (zero) case of phased multiactions $(\alpha ,\partial_l^\emptyset )$ with the delay $0$ represented by the
empty transient TPM $\emptyset$, since their delay phases can only be equal to $0$.

\begin{definition}
Let $(\alpha ,\kappa )\in{\cal PHL},\ (\alpha ,\partial_l^{\bf A})\in{\cal TL},\ i\in\{1,\ldots ,ord({\bf A})\}$ and
$a\in Act$. A {\em static expression} of dtsdPBC is defined as

$$E::=\ (\alpha ,\kappa )\mid (\alpha ,\partial_l^{\bf A})^i\mid E;E\mid E\cho E\mid E\| E\mid E[f]\mid
E\rs a\mid E\sy a\mid [E*E*E].$$

\end{definition}

Let $StatExpr$ denote the set of {\em all static expressions} of dtsdPBC.

To make the grammar above unambiguous, one can add parentheses in the productions with binary operations: $(E;E),\
(E\cho E),\ (E\| E)$. However, here and further we prefer the PBC approach and add them to resolve ambiguities only.

To avoid technical difficulties with the iteration operator, we should not allow any concurrency at the highest level
of the second argument of iteration. This is not a severe restriction though, since we can always prefix parallel
expressions by an activity with the empty multiaction part. Alternatively, we can use a different, safe, version of the
iteration operator, but its net translation has six arguments. See also \cite{BDK01} for discussion~on~this~subject.

\begin{definition}
Let $(\alpha ,\kappa )\in{\cal PHL},\ (\alpha ,\partial_l^{\bf A})\in{\cal TL},\ i\in\{1,\ldots ,ord({\bf A})\}$
and $a\in Act$. A {\em regular static expression} of dtsdPBC is defined as

$$\begin{array}{c}
E::=\ (\alpha ,\kappa )\mid (\alpha ,\partial_l^{\bf A})^i\mid E;E\mid E\cho E\mid E\| E\mid E[f]\mid E\rs a\mid
E\sy a\mid [E*D*E],\\
\mbox{where }D::=\ (\alpha ,\kappa )\mid (\alpha ,\partial_l^{\bf A})^i\mid D;E\mid D\cho D\mid D[f]\mid
D\rs a\mid D\sy a\mid [D*D*E].
\end{array}$$

\end{definition}

Let $RegStatExpr$ denote the set of {\em all regular static expressions} of dtsdPBC.

Let $E$ be a regular static expression. The {\em underlying phase-free regular static expression}
$\downharpoonleft\!\!E$ of $E$ is obtained by removing from it all delay phase superscripts.

The set of {\em all subexpressions of} a regular static expression $E$ is defined in a structural way. Let $(\alpha
,\kappa )\in{\cal PHL},\ (\alpha ,\partial_l^{\bf A})\in{\cal TL},\ i\in\{1,\ldots ,ord({\bf A})\},\ E,F,K\in
RegStatExpr$ and $a\in Act$. Then
\begin{enumerate}

\item $Sub((\alpha ,\kappa ))=\{(\alpha ,\kappa )\}$;

\item $Sub((\alpha ,\partial_l^{\bf A})^i)=\{(\alpha ,\partial_l^{\bf A})^i\}$;

\item $Sub(E\circ F)=\{E\circ F\}\cup Sub(E)\cup Sub(F)\ (\circ\in\{;,\cho ,\|\})$;

\item $Sub(E[f])=\{E[f]\}\cup Sub(E)$;

\item $Sub(E\circ a)=\{E\circ a\}\cup Sub(E)\ (\circ\in\{\!\!\rs\!\!,\!\!\sy\!\!\})$;

\item $Sub([E*F*K])=\{[E*F*K]\}\cup Sub(E)\cup Sub(F)\cup Sub(K)$.

\end{enumerate}

Further, the set of {\em all timed multiactions (from the syntax) of} a regular static expression $E$ is ${\cal
TL}(E)=\{(\alpha ,\partial_l^{\bf A})\mid ((\alpha ,\partial_l^{\bf A})\in Sub(E))\vee ((\alpha ,\partial_l^{\bf
A})^i\in Sub(E)),\ i\in\{1,\ldots ,ord({\bf A})\}$. The set of {\em all immediate multiactions (from the syntax) of}
$E$ is ${\cal IL}(E)=\{(\alpha ,\partial_l^\emptyset )\mid (\alpha ,\partial_l^\emptyset )\in Sub(E)\}$. Thus, the set
of {\em all activities (phased multiactions) (from the syntax) of} $E$ is ${\cal PHL}(E)={\cal TL}(E)\cup{\cal IL}(E)$.

The set of {\em all activities (phased multiactions) (from the syntax) of} a regular static expression can also be
defined in a structural way. Let $(\alpha ,\kappa )\in{\cal PHL},\ (\alpha ,\partial_l^{\bf A})\in{\cal TL},\
i\in\{1,\ldots ,ord({\bf A})\},\ E,F,K\in RegStatExpr$ and $a\in Act$. Then
\begin{enumerate}

\item ${\cal PHL}((\alpha ,\kappa ))=\{(\alpha ,\kappa )\}$;

\item ${\cal PHL}((\alpha ,\partial_l^{\bf A})^i)=\{(\alpha ,\partial_l^{\bf A})\}$;

\item ${\cal PHL}(E\circ F)={\cal PHL}(E)\cup{\cal PHL}(F)\ (\circ\in\{;,\cho ,\|\})$;

\item ${\cal PHL}(E[f])={\cal PHL}(E\rs a)={\cal PHL}(E\sy a)={\cal PHL}(E)$;

\item ${\cal PHL}([E*F*K])={\cal PHL}(E)\cup{\cal PHL}(F)\cup{\cal PHL}(K)$.

\end{enumerate}
The set of {\em all timed multiactions (from the syntax) of} $E$ is ${\cal TL}(E)={\cal PHL}(E)\cap{\cal TL}$. The set
of {\em all immediate multiactions (from the syntax) of} $E$ is ${\cal IL}(E)={\cal PHL}(E)\cap{\cal IL}$.

Dynamic expressions specify the states of processes, i.e. some particular points of the process behaviour. Dynamic
expressions are obtained from static ones, by annotating them with upper or lower bars which specify the active
components of the system at the current moment of time. The dynamic expression with upper bar (the overlined one)
$\overline{E}$ denotes the {\em initial}, and that with lower bar (the underlined one) $\underline{E}$ denotes the {\em
final} state of the process specified by a static expression $E$.

For every overlined superscribed timed multiaction in the form of $\overline{(\alpha ,\partial_l^{\bf A})^i}$, the
superscript $i\!\in\!\{1,\ldots ,ord({\bf A})\}$ specifies the {\em current} value of the {\em running} phase indicator
associated with the timed multiaction. That phase indicator is started with the {\em initial} value $ord({\bf A})$ (the
initial delay phase of that timed multiaction) at the moment when the timed multiaction becomes overlined. Then such a
newly overlined superscribed timed multiaction $\overline{(\alpha ,\partial_l^{\bf A})^{ord({\bf A})}}$ may be seen
similar to the freshly overlined non-superscribed timed multiaction $\overline{(\alpha ,\partial_l^{\bf A})}$. Such
similarity will be captured by the structural equivalence, to be defined later.

While the superscribed timed multiaction stays overlined with the specified process execution, the phase indicator
changes its value with each global time tick until the delay phase becomes $i\in\{1,\ldots ,ord({\bf A})\}$ such that
${\cal A}_i>0$, where ${\cal A}_i\ (1\leq i\leq ord({\bf A}))$ is the $i$-th element of the column vector ${\bf a}$ of
the transition probabilities towards absorption for the absorbing DTMC defining the DPH distributed delay of that timed
multiaction. Its execution in the phase $i$ should follow in the next moment with probability ${\cal A}_i$, in case
there are no conflicting with it immediate multiactions or conflicting timed multiactions with the similar property of
the delay phase, and it is not affected by restriction. An activity is said to be affected by restriction, if it is
within the scope of a restriction operation with the argument action, such that it or its conjugate is contained in the
multiaction part of that activity.

Thus, when overlined and superscribed, the timed multiaction $(\alpha ,\partial_l^{\bf A})$ can be executed in the phase $i$
probabilistically or definitely, depending on the (positive) value of ${\cal A}_i$. The {\em execution of the timed
multiaction $(\alpha ,\partial_l^{\bf A})$ in the phase $i$}, denoted by $(\alpha ,\partial_l^{\bf A})^i$, is {\em
probabilistic} if $0<{\cal A}_i<1$, or {\em definite} if ${\cal A}_i=1$. The set of {\em all probabilistic executions
(of timed multiactions)} is ${\cal PE}=\{(\alpha ,\partial_l^{\bf A})^i\mid (\alpha ,\partial_l^{\bf A})\in{\cal TL},\
i\in\{1,\ldots ,ord({\bf A})\},\ 0<{\cal A}_i<1\}$. The set of {\em all definite executions (of timed multiactions)} is
${\cal DE}=\{(\alpha ,\partial_l^{\bf A})^i\mid (\alpha ,\partial_l^{\bf A})\in{\cal TL},\ i\in\{1,\ldots ,ord({\bf
A})\},\ {\cal A}_i=1\}$. Definite executions have a {\em priority} over probabilistic ones. The set of {\em all timed
executions (executions of timed multiactions)} is ${\cal TE}={\cal PE}\cup{\cal DE}$. The executions of immediate
multiactions are called {\em immediate executions} and coincide with ${\cal IL}$, the set of all immediate
multiactions, since their zero delays have no positive phases to indicate. Immediate executions have a {\em priority}
over both definite and probabilistic ones, since immediate multiactions have a priority over timed multiactions.

\begin{definition}
Let $E\in StatExpr$ and $a\in Act$. A {\em dynamic expression} of dtsdPBC is defined as

$$G::=\ \overline{E}\mid\underline{E}\mid G;E\mid E;G\mid G\cho E\mid E\cho G\mid G\| G\mid G[f]\mid G\rs a\mid
G\sy a\mid [G*E*E]\mid [E*G*E]\mid [E*E*G].$$

\end{definition}

Let $DynExpr$ denote the set of {\em all dynamic expressions} of dtsdPBC.

Let $G$ be a dynamic expression. The {\em underlying static (line-free) expression} $\lfloor G\rfloor$ of $G$ is
obtained by removing from it all upper and lower bars.

\begin{definition}
A dynamic expression $G$ is {\em regular} if its underlying static expression $\lfloor G\rfloor$ is regular.
\end{definition}

Let $RegDynExpr$ denote the set of {\em all regular dynamic expressions} of dtsdPBC.

Let $G$ be a regular dynamic expression. The {\em underlying phase-free regular dynamic expression}
$\downharpoonleft\!\!G$ of $G$ is obtained by removing from it all delay phase superscripts.

Further, the set of {\em all timed multiactions (from the syntax) of} $G$ is ${\cal TL}(G)={\cal TL}(\lfloor G\rfloor
)$. The set of {\em all immediate multiactions (from the syntax) of} $G$ is ${\cal IL}(G)={\cal IL}(\lfloor G\rfloor
)$. Thus, the set of {\em all activities (from the syntax) of} $G$ is ${\cal PHL}(G)={\cal TL}(G)\cup{\cal IL}(G)$.

\section{Operational semantics}
\label{opersem.sec}

In this section, we define the step operational semantics in terms of labeled transition systems.

\subsection{Inaction rules}

The inaction rules for dynamic expressions describe their structural transformations in the form of
$G\Rightarrow\widetilde{G}$ which do not change the states of the specified processes. The goal of those syntactic
transformations is to obtain the well-structured resulting expressions called operative ones to which no inaction rules
can be further~applied.

Thus, an application of every inaction rule does not require any discrete time delay, i.e. the dynamic expression
transformation described by the rule is accomplished instantly.

In Table \ref{inactrulesphm1.tab}, we define inaction rules for regular dynamic expressions being overlined and
underlined static ones. In this table, $(\alpha ,\partial_l^{\bf A})\in{\cal TL},\ i\in\{1,\ldots ,ord({\bf A})\},\
E,F,K\in RegStatExpr$ and $a\in Act$. The first inaction rule sug\-gests that the delay phase of each newly overlined
timed multiaction is set to the order of the absorbing DTMC defining the DPH distributed delay of that timed
multiaction.

\begin{table}[h]
\caption{Inaction rules for overlined and underlined regular static expressions}
\label{inactrulesphm1.tab}
\begin{center}
$\small\begin{array}{|llll|}
\hline
\rule{0mm}{4mm}
\overline{(\alpha ,\partial_l^{\bf A})}\Rightarrow\overline{(\alpha ,\partial_l^{\bf A})^{ord({\bf A})}} &
\overline{E;F}\Rightarrow\overline{E};F &
\underline{E};F\Rightarrow E;\overline{F} &
E;\underline{F}\Rightarrow\underline{E;F}\\[1mm]

\overline{E\cho F}\Rightarrow\overline{E}\cho F &
\overline{E\cho F}\Rightarrow E\cho\overline{F} &
\underline{E}\cho F\Rightarrow\underline{E\cho F} &
E\cho\underline{F}\Rightarrow\underline{E\cho F}\\[1mm]

\overline{E\| F}\Rightarrow\overline{E}\|\overline{F} &
\underline{E}\|\underline{F}\Rightarrow\underline{E\| F} &
\overline{E[f]}\Rightarrow\overline{E}[f] &
\underline{E}[f]\Rightarrow\underline{E[f]}\\[1mm]

\overline{E\rs a}\Rightarrow\overline{E}\rs a &
\underline{E}\rs a\Rightarrow\underline{E\rs a} &
\overline{E\sy a}\Rightarrow\overline{E}\sy a &
\underline{E}\sy a\Rightarrow\underline{E\sy a}\\[1mm]

\overline{[E*F*K]}\Rightarrow [\overline{E}*F*K] &
[\underline{E}*F*K]\Rightarrow [E*\overline{F}*K] &
[E*\underline{F}*K]\Rightarrow [E*\overline{F}*K] &
[E*\underline{F}*K]\Rightarrow [E*F*\overline{K}]\\[1mm]

[E*F*\underline{K}]\Rightarrow\underline{[E*F*K]} & & & \\[1mm]
\hline
\end{array}$
\end{center}
\end{table}

In Table \ref{inactrulesphm2.tab}, we introduce inaction rules for regular dynamic expressions in the arbitrary form. In
this table, $E,F\in RegStatExpr,\ G,H,\widetilde{G},\widetilde{H}\in RegDynExpr$ and $a\in Act$. By reason of brevity,
two distinct inaction rules with the same premises are collated in some cases, resulting in the inaction rules with
double conclusion.

\begin{table}[h]
\caption{Inaction rules for arbitrary regular dynamic expressions}
\label{inactrulesphm2.tab}
\begin{center}
$\small\begin{array}{|llll|}
\hline
\rule{0mm}{6.5mm}
\dfrac{G\Rightarrow\widetilde{G},\ \circ\in\{;,\cho\}}{G\circ E\Rightarrow\widetilde{G}\circ E,\ E\circ G\Rightarrow
E\circ\widetilde{G}} &
\dfrac{G\Rightarrow\widetilde{G}}{G\| H\Rightarrow\widetilde{G}\| H,\ H\| G\Rightarrow H\|\widetilde{G}} &
\dfrac{G\Rightarrow\widetilde{G}}{G[f]\Rightarrow\widetilde{G}[f]} &
\dfrac{G\Rightarrow\widetilde{G},\ \circ\in\{\!\!\rs\!\!,\!\!\sy\!\!\}}{G\circ a\Rightarrow\widetilde{G}\circ a}\\[4mm]

\dfrac{G\Rightarrow\widetilde{G}}{[G*E*F]\Rightarrow [\widetilde{G}*E*F]} &
\multicolumn{3}{l|}{\dfrac{G\Rightarrow\widetilde{G}}{[E*G*F]\Rightarrow [E*\widetilde{G}*F],\ [E*F*G]\Rightarrow
[E*F*\widetilde{G}]}}\\[4mm]
\hline
\end{array}$
\end{center}
\end{table}

\begin{example}
Let $E=(\{a\},\partial_1^{\bf A})\cho (\{b\},\partial_2^{\bf B})$, where ${\bf A}$ is the transient TPM of the
(discrete) uniform distribution $Unif(1,3)$ ($DPH({\bf A})=Unif(1,3)$) and ${\bf B}$ is the transient TPM of the
geometric distribution $Geom(\frac{1}{3})$ ($DPH({\bf B})=Geom(\frac{1}{3})$), i.e. ${\bf A}$ and ${\bf B}$ are from
Example \ref{dphminmax.exm}. The following inferences by the inaction rules are possible from $\overline{E}$:
$$\begin{array}{l}
\overline{(\{a\},\partial_1^{\bf A})\cho (\{b\},\partial_2^{\bf B})}\Rightarrow
\overline{(\{a\},\partial_1^{\bf A})}\cho (\{b\},\partial_2^{\bf B})\Rightarrow
\overline{(\{a\},\partial_1^{\bf A})^3}\cho (\{b\},\partial_2^{\bf B}),\\
\overline{(\{a\},\partial_1^{\bf A})\cho (\{b\},\partial_2^{\bf B})}\Rightarrow
(\{a\},\partial_1^{\bf A})\cho\overline{(\{b\},\partial_2^{\bf B})}\Rightarrow
(\{a\},\partial_1^{\bf A})\cho\overline{(\{b\},\partial_2^{\bf B})^1}.
\end{array}$$
\label{overlinph.exm}
\end{example}

\begin{definition}
A regular dynamic expression $G$ is {\em operative} if no inaction rule can be applied to it.
\end{definition}

Let $OpRegDynExpr$ denote the set of {\em all operative regular dynamic expressions} of dtsdPBC.

Note that any dynamic expression can be always transformed into a (not necessarily unique) operative one by using the
inaction rules.

In the following, we consider regular expressions only and omit the word ``regular''.

\begin{definition}
The relation $\approx\ =(\Rightarrow\cup\Leftarrow )^*$ is a {\em structural equivalence} of dynamic expressions in
dtsdPBC, where $^*$ is the reflexive and transitive closure operation. Thus, two dynamic expressions $G$ and $G'$ are
{\em structurally equivalent}, denoted by $G\approx G'$, if they can be reached from each other by applying the
inaction rules in a forward or a backward direction.
\end{definition}

Let $X$ be some set. We denote the Cartesian product $X\times X$ by $X^2$. Let ${\cal E}\subseteq X^2$ be an
equivalence relation on $X$. Then the {\em equivalence class} (with respect to ${\cal E}$) of an element $x\in X$ is
defined by $[x]_{\cal E}=\{y\in X\mid (x,y)\in{\cal E}\}$. The equivalence ${\cal E}$ partitions $X$ into the {\em set
of equivalence classes} $X/_{\cal E}=\{[x]_{\cal E}\mid x\in X\}$.

Let $G$ be a dynamic expression. Then $[G]_\approx =\{H\mid G\approx H\}$ is the equivalence class of $G$ with respect
to the structural equivalence, called the (corresponding) {\em state}. Next, $G$ is an {\em initial} dynamic
expression, denoted by $init(G)$, if $\exists E\in RegStatExpr\ G\in [\overline{E}]_\approx$. Further, $G$ is a {\em
final} dynamic expression, denoted by $final(G)$, if $\exists E\in RegStatExpr\ G\in [\underline{E}]_\approx$.

\begin{example}
Let $E$ be from Example \ref{overlinph.exm}. We have $init(\overline{E})$ and $[\overline{E}]_\approx =
\{\overline{(\{a\},\partial_1^{\bf A})}\cho (\{b\},\partial_2^{\bf B}),\\
\overline{(\{a\},\partial_1^{\bf A})^3}\cho (\{b\},\partial_2^{\bf B}),
\overline{(\{a\},\partial_1^{\bf A})}\cho (\{b\},\partial_2^{\bf B})^1,
\overline{(\{a\},\partial_1^{\bf A})^3}\cho (\{b\},\partial_2^{\bf B})^1,
(\{a\},\partial_1^{\bf A})\cho\overline{(\{b\},\partial_2^{\bf B})},
(\{a\},\partial_1^{\bf A})^3\cho\overline{(\{b\},\partial_2^{\bf B})},\\
(\{a\},\partial_1^{\bf A})\cho\overline{(\{b\},\partial_2^{\bf B})^1},
(\{a\},\partial_1^{\bf A})^3\cho\overline{(\{b\},\partial_2^{\bf B})^1},
\overline{(\{a\},\partial_1^{\bf A})\cho (\{b\},\partial_2^{\bf B})},
\overline{(\{a\},\partial_1^{\bf A})^3\cho (\{b\},\partial_2^{\bf B})},
\overline{(\{a\},\partial_1^{\bf A})\cho (\{b\},\partial_2^{\bf B})^1},\\
\overline{(\{a\},\partial_1^{\bf A})^3\cho (\{b\},\partial_2^{\bf B})^1}\}$.
Then $[\overline{E}]_\approx\cap OpRegDynExpr=
\{\overline{(\{a\},\partial_1^{\bf A})^3}\cho (\{b\},\partial_2^{\bf B}),
\overline{(\{a\},\partial_1^{\bf A})^3}\cho(\{b\},\partial_2^{\bf B})^1,\\
(\{a\},\partial_1^{\bf A})\cho\overline{(\{b\},\partial_2^{\bf B})^1},
(\{a\},\partial_1^{\bf A})^3\cho\overline{(\{b\},\partial_2^{\bf B})^1}\}$.
\label{streqclsph.exm}
\end{example}

The set of all subexpressions of a dynamic expression is defined like that of a regular static expression. Let $G$ be a
dynamic expression and $s=[G]_\approx$ be the (corresponding) state. The set of {\em all enabled timed multiactions of
$s$} is $EnaTime(s)=\{(\alpha ,\partial_l^{\bf A})\in{\cal TL}\mid\exists H\in s\cap OpRegDynExpr\ \overline{(\alpha
,\partial_l^{\bf A})^i}\in Sub(H),\ i\in\{1,\ldots ,ord({\bf A})\}\}$, i.e. it consists of all timed multiactions that,
being superscribed with their delay phases and overlined, are the subexpressions of some operative dynamic expression
from the state $s$. The set of {\em all newly enabled timed multiactions of $s$} is $EnaTimeNew(s)=\{(\alpha
,\partial_l^{\bf A})\in{\cal TL}\mid\exists H\in s\cap OpRegDynExpr\ \overline{(\alpha ,\partial_l^{\bf A})^{ord({\bf
A})}}\in Sub (H)\}$, i.e. it consists of all timed multiactions that, being superscribed with their {\em initial} delay
phases (the orders of the absorbing DTMCs defining the delays of those timed multiactions) and overlined, are the
subexpressions of some operative dynamic expression from the state $s$. Analogously, the set of {\em all enabled
immediate multiactions of $s$} is $EnaImm(s)=\{(\alpha ,\partial_l^\emptyset )\in{\cal IL}\mid\exists H\in s\cap
OpRegDynExpr\ \overline{(\alpha ,\partial_l^\emptyset )}\in Sub(H)\}$.

Thus, the set of {\em all enabled activities of $s$} is $Ena(s)=EnaTime(s)\cup EnaImm(s)$. Note that the activities,
resulted from syn\-chro\-ni\-za\-ti\-on, are not present explicitly in the syntax of the dynamic expressions.
Nevertheless, their enabledness status can be recovered by observing that of the pair of synchronized activities from
the syntax (they both should be enabled for enabling their synchronous product), even if they are affected by
restriction after the synchronization.

\begin{example}
Let $E$ be from Example \ref{overlinph.exm}. Then we have $EnaTime([\overline{E}]_\approx )=
EnaTimeNew([\overline{E}]_\approx )=\\
\{(\{a\},\partial_1^{\bf A}),(\{b\},\partial_2^{\bf B})\}$ and $EnaImm([\overline{E}]_\approx )=\emptyset$, hence,
$Ena([\overline{E}]_\approx )=\{(\{a\},\partial_1^{\bf A}),(\{b\},\partial_2^{\bf B})\}$.
\label{enablednsph.exm}
\end{example}

\begin{definition}
An operative dynamic expression $G$ is {\em saturated (with the delay phases)}, if each enabled timed multiaction of
$[G]_\approx$, being superscribed with their delay phases and possibly overlined, is the subexpression of $G$.
\end{definition}

Let $SatOpRegDynExpr$ denote the set of {\em all saturated operative dynamic expressions} of dtsdPBC.

\begin{proposition}
Any operative dynamic expression can be always transformed into the saturated one by applying the inaction rules in a
forward or a backward direction.
\label{saturation.pro}
\end{proposition}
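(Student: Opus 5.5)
The argument is by structural induction on the operative dynamic expression $G$, and it builds the saturated expression explicitly. Fix the state $s=[G]_\approx$. Every $(\alpha ,\partial_l^{\bf A})\in EnaTime(s)$ comes from some operative $H\in s$ that contains $\overline{(\alpha ,\partial_l^{\bf A})^i}$ as a subexpression. Since $\lfloor G\rfloor$ and $\lfloor H\rfloor$ differ only in phase superscripts, this occurrence sits at a definite syntactic position, i.e. a leaf, of the underlying phase-free expression $\downharpoonleft\!\!G$. The aim is to copy the superscript from $H$ onto the corresponding leaf of $G$ whenever that leaf of $G$ is a non-superscribed $(\alpha ,\partial_l^{\bf A})$. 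A leaf that is already superscribed in $G$ keeps its superscript. The result $G'$ should be operative, lie in $s$, and contain every enabled timed multiaction in superscribed form.

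The key lemma is that the inaction rules of Tables \ref{inactrulesphm1.tab} and \ref{inactrulesphm2.tab} can move bars without touching phase superscripts. The only rule that creates or removes a superscript is $\overline{(\alpha ,\partial_l^{\bf A})}\Rightarrow\overline{(\alpha ,\partial_l^{\bf A})^{ord({\bf A})}}$. Every other rule, whether applied forward or backward, acts on bars only and is insensitive to superscripts below the bars. So to put a superscript on a non-overlined leaf of $G$, I would do the following.
\begin{enumerate}
\item Use backward rules to carry bars out of the enclosing context, up to an overlined ancestor, or use the $\underline{E};F\Rightarrow E;\overline{F}$ or iteration-type rules.
\item Push the bar down to the leaf $(\alpha ,\partial_l^{\bf A})$.
\item Apply the first rule, which gives $\overline{(\alpha ,\partial_l^{\bf A})^{ord({\bf A})}}$.
\item Reverse the bar moves from steps 1 and 2, which leaves the superscript behind as $(\alpha ,\partial_l^{\bf A})^{ord({\bf A})}$.
\end{enumerate}
If the leaf is already overlined and non-superscribed in $G$, then $G$ is not operative, so this case cannot occur. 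For leaves in the scope of a choice, I would use the $\cho$ rules $\overline{E\cho F}\Leftarrow\overline{E}\cho F$ followed by $\overline{E\cho F}\Rightarrow E\cho\overline{F}$ to reach the other branch. This is exactly how Example \ref{streqclsph.exm} obtains $\overline{(\{a\},\partial_1^{\bf A})^3}\cho (\{b\},\partial_2^{\bf B})^1$.

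The structural induction checks, operator by operator, that each bar detour exists and that the detour can be inverted, giving a round trip in $\approx$ whose only net effect is one new superscript. The binary operators are handled via the contextual rules of Table \ref{inactrulesphm2.tab}. Iteration is handled via the loop rules $[E*\underline{F}*K]\Rightarrow [E*\overline{F}*K]$. The regularity restriction, which forbids concurrency at the top of the body, keeps the bars of parallel components consistent. After processing each enabled timed multiaction that lacks a superscript in $G$, which is finitely many, the resulting expression is operative: no overlined non-superscribed timed leaf has been created, and all other bars agree with $G$. It is saturated by construction.

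The main obstacle is the round-trip step when the leaf lies in a context where the bar cannot be reached without passing through a state of $G$'s other components. Examples are a leaf in the initialization of an iteration whose body is currently active, or a left sequence component when the right one is active. There I would try first to use the same path through which $H$ was reached from $G$ in $\approx$. That path must have created the superscript at this position via the first rule, and I would replay it forward to that rule application and then backward. This works because every step of the path other than that single rule application is superscript-neutral, so replaying the path in reverse restores $G$'s bars while keeping the superscript. The value $i$ obtained is $ord({\bf A})$, or $H$'s own $i$ if it was inherited, and either choice suffices for saturation.
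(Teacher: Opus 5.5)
Your last paragraph is essentially the paper's proof. The paper takes an $\approx$-path from $G$ to an operative $H$ containing $\overline{(\alpha ,\partial_l^{\bf A})^{ord({\bf A})}}$, reverses it, and deletes the backward application of the first inaction rule at that leaf, relying on exactly your observation that every other inaction step is insensitive to that superscript. It also notes that non-initial superscripts can never be created or removed by inaction rules, so only newly enabled timed multiactions can be unsuperscribed in $G$; this makes your ``or $H$'s own $i$'' alternative vacuous. The path replay works uniformly for every leaf, provided you cut at the \emph{first} creation of the superscript along the path so that every replayed step stays valid. Hence the operator-by-operator bar-detour induction and the appeal to regularity are unnecessary and can simply be dropped.
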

{\em Proof.} Let $G$ be a dynamic expression, $(\alpha ,\partial_l^{\bf A})\in EnaTime([G]_\approx )$ with $ord({\bf
A})\geq 2$ and there exists $H\in [G]_\approx\cap OpRegDynExpr$ that contains a subexpression $\overline{(\alpha
,\partial_l^{\bf A})^i},\ i\in\{1,\ldots ,ord({\bf A})-1\}$. Then all operative dynamic expressions from
$[G]_\approx\cap OpRegDynExpr$ contain a subexpression $\overline{(\alpha ,\partial_l^{\bf A})^i}$ or $(\alpha
,\partial_l^{\bf A})^i$, i.e. the (possibly overlined) enabled timed multiaction $(\alpha ,\partial_l^{\bf A})$ with
the non-initial delay phase superscript $i$. Note that the delay phase superscript $i$ is the same for all such
structurally equivalent operative dynamic expressions. Indeed, all inaction rules, besides the first one, do not change
the values of delay phases, but only modify the overlines and underlines of dynamic expressions. The first inaction
rule just sets up the phase indicator of each overlined timed multiaction $\overline{(\alpha ,\partial_l^{\bf A})}$
with the initial value $i=ord({\bf A})$, equal to the initial delay phase of that timed multiaction, as follows:
$\overline{(\alpha ,\partial_l^{\bf A})^{ord({\bf A})}}$. Then the remaining inaction rules can shift out the overline
of that enabled timed multiaction before setting up its delay phase, which results in a non-overlined enabled timed
multiaction without delay phase superscript $(\alpha ,\partial_l^{\bf A})$. Thus, for $(\alpha ,\partial_l^{\bf A})\in
EnaTime([G]_\approx )$, it may happen that $\overline{(\alpha ,\partial_l^{\bf A})^{ord({\bf A})}}$ a subexpression of
some $H\in [G]_\approx\cap OpRegDynExpr$ and $(\alpha ,\partial_l^{\bf A})$ is a subexpression of a different $H'\in
[G]_\approx\cap OpRegDynExpr$.

Let now $G$ be an operative dynamic expression that is not saturated. By the arguments above, the saturation can be
violated only if $G$ contains as a subexpression at least one newly enabled timed multiaction $(\alpha ,\partial_l^{\bf
A})$ of $[G]_\approx$ that is not superscribed with the delay phase. By the definition of the new-enabling, there
exists $H\in [G]_\approx\cap OpRegDynExpr$ such that $\overline{(\alpha ,\partial_l^{\bf A})^{ord({\bf A})}}$ is a
subexpression of $H$. Since $G\approx H$, there is a sequence of the inaction rules applications (in a forward or a
backward direction) that transforms $G$ into $H$. Then the reverse sequence transforms $H$ into $G$. Let us remove from
that reverse sequence the following backward application of the first inaction rule: $\overline{(\alpha
,\partial_l^\theta )}\Leftarrow\overline{(\alpha ,\partial_l^{\bf A})^{ord({\bf A})}}$. Then such a reduced reverse
sequence will turn $H$ into $G_1\in [G]_\approx\cap OpRegDynExpr$, obtained from $G$ by replacing $(\alpha
,\partial_l^{\bf A})$ with $(\alpha ,\partial_l^{\bf A})^{ord({\bf A})}$. Let us start from $G_1$ and apply the above
procedure to the remaining non-superscribed with the delay phases newly enabled timed multiactions of $[G]_\approx
=[G_1]_\approx$. After repeated application of the mentioned procedure for all $n\geq 1$ non-superscribed newly enabled
timed multiactions of $G$, we shall get from it the saturated operative dynamic expression $G_n=\widetilde{G}\in
[G]_\approx\cap OpRegDynExpr$. \hfill $\eop$

Thus, any dynamic expression can be always transformed into a (not necessarily unique) saturated operative one by
(possibly reverse) applying the inaction rules.

\begin{example}
Let $E$ be from Example \ref{overlinph.exm}. We have $[\overline{E}]_\approx\cap SatOpRegDynExpr=
\{\overline{(\{a\},\partial_1^{\bf A})^3}\cho (\{b\},\partial_2^{\bf B})^1,\\
(\{a\},\partial_1^{\bf A})^3\cho\overline{(\{b\},\partial_2^{\bf B})^1}\}$. Consider the sequence of inaction rules,
applied (in a forward or a backward direction) while the following trans\-formation of a non-saturated $G\in
[\overline{E}]_\approx\cap OpRegDynExpr$ with the non-superscribed with the phase values enabled timed multiactions
$(\{a\},\partial_1^{\bf A})$ and $(\{b\},\partial_2^{\bf B})$. In this way, $G$ is transformed into a saturated $H\in
[\overline{E}]_\approx\cap OpRegDynExpr$, in which $(\{a\},\partial_1^{\bf A})$ and $(\{b\},\partial_2^{\bf B})$ are
superscribed:
$$\begin{array}{c}
G=(\{a\},\partial_1^{\bf A})\cho\overline{(\{b\},\partial_2^{\bf B})}\approx
(\{a\},\partial_1^{\bf A})\cho\overline{(\{b\},\partial_2^{\bf B})^1}\approx
\overline{(\{a\},\partial_1^{\bf A})\cho (\{b\},\partial_2^{\bf B})^1}\approx
\overline{(\{a\},\partial_1^{\bf A})}\cho (\{b\},\partial_2^{\bf B})^1\approx\\
\overline{(\{a\},\partial_1^{\bf A})^3}\cho (\{b\},\partial_2^{\bf B})^1=H.
\end{array}$$

The reduced reverse sequence of inaction rules induces the following transformations of $H$ that result in a saturated
and identically (to $G$) overlined $G_1=\widetilde{G}\in [\overline{E}]_\approx\cap OpRegDynExpr$, in which
$(\{a\},\partial_1^{\bf A})$ and $(\{b\},\partial_2^{\bf B})$ are superscribed:
$$\begin{array}{c}
H=\overline{(\{a\},\partial_1^{\bf A})^3}\cho (\{b\},\partial_2^{\bf B})^1\approx
\overline{(\{a\},\partial_1^{\bf A})^3\cho (\{b\},\partial_2^{\bf B})^1}\approx
(\{a\},\partial_1^{\bf A})^3\cho\overline{(\{b\},\partial_2^{\bf B})^1}=G_1=\widetilde{G}.
\end{array}$$
\label{saturateph.exm}
\end{example}

\subsection{Action and empty move rules}

The action rules are applied when some activities are executed. With these rules we capture the prioritization among
different types of multiactions. We also have the empty move rules which are used to capture a delay of one discrete
time unit when no timed or immediate multiactions are executable. In this case, the empty multiset of activities is
executed. The action and empty move rules will be used later to determine all executions from the structural
equivalence class of every dynamic expression (i.e. from the state of the corresponding process). This information
together with that about probabilities or weights of the activities to be executed from the current process state will
be used to calculate the probabilities of such executions resulting to the next process~states.

The action rules with probabilistic executions (definite executions or executions of immediate multiactions, called
immediate executions), describe dynamic expression transformations in the form of
$G\stackrel{\Gamma}{\rightarrow}\widetilde{G}$ ($G\stackrel{V}{\rightarrow}\widetilde{G}$ or
$G\stackrel{I}{\rightarrow}\widetilde{G}$, respectively) with non-empty multisets $\Gamma$ of probabilistic executions
($V$ of definite or $I$ of immediate executions, respectively). The rules represent possible state changes of the
specified processes when accomplishing some non-empty multisets of probabilistic executions (definite or immediate
executions, respectively).

The empty move rules describes dynamic expression transformations in the form of
$G\stackrel{\emptyset}{\rightarrow}\widetilde{G}$, called the {\em empty moves}, due to execution of the empty multiset
of activities at a discrete time tick while making the transition from positive to positive delay phases of the enabled
timed multiactions from the syntax of $G$. When no delay phases are changed within $G$ with the empty multiset
execution at the next moment, we have $\widetilde{G}=G$. For example, this is the case if $G$ contains no superscribed
timed multiactions, or all timed multiactions are superscribed with the positive delay phases from which the
transitions with probability $1$ exist to the zero phases while those superscribed timed multiactions are either
affected by restriction or not overlined. In such a case, the empty move from $G$ is in the form of
$G\stackrel{\emptyset}{\rightarrow}G$, called the {\em empty loop}.

Thus, an application of every action rule with timed multiactions or the (basic) empty move rule requires one discrete
time unit delay, i.e. accomplishing (possibly empty) multiset of executions of timed multiactions leading to the
dynamic expression transformation described by the rule is accomplished instantly after one time unit. An application
of every action rule with immediate executions does not take any time, i.e. the execution of a (non-empty) multiset of
immediate multiactions is accomplished instantly at the current moment of time.

Note that expressions of dtsdPBC can contain identical activities. To avoid technical difficulties, such as the proper
calculation of the state change probabilities for multiple transitions, we can always enumerate coinciding activities
from left to right in the syntax of expressions. The new activities, resulted from synchronization will be annotated
with concatenation of numberings of the activities they come from, hence, the numbering should have a tree structure to
reflect the effect of multiple synchronizations. We now define the numbering which encodes a binary tree with the
leaves labeled by natural numbers.

\begin{definition}
The {\em numbering} of expressions is defined as $\iota ::=\ n\mid (\iota )(\iota )$, where $n\in\nat$.
\end{definition}

Let $Num$ denote the set of {\em all numberings} of expressions.

The new activities resulting from synchronizations in different orders should be considered up to permutation of their
numbering. In this way, we shall recognize different instances of the same activity. If we compare the contents of
different numberings, i.e. the sets of natural numbers in them, we shall be able to identify the mentioned instances.

The {\em content} of a numbering $\iota\in Num$ is

$$Cont(\iota )=\left\{
\begin{array}{ll}
\{\iota\}, & \iota\in\nat ;\\
Cont(\iota_1)\cup Cont(\iota_2), & \iota =(\iota_1)(\iota_2).
\end{array}
\right.$$

After the enumeration, the multisets of activities from the expressions will become the proper sets. In the following,
we suppose that the identical activities are enumerated when needed to avoid ambiguity. This enumeration is considered
to be implicit.

\begin{definition}
Let $G\in OpRegDynExpr$. We now define the {\em set of all non-empty multisets of potential executions}, denoted by
$Can(G)$. Let $(\alpha ,\partial_l^{\bf A})\in{\cal PHL},\ {\cal A}_{ij}\ (1\leq i,j\leq ord({\bf A}))$ are the
elements of the transient TPM ${\bf A}$, ${\cal A}_i\ (1\leq i\leq ord({\bf A}))$ are the elements of vector ${\bf a}$
of the transition probabilities towards absorption, $E,F\in RegStatExpr,\ H\in OpRegDynExpr$ and $a\in Act$.
\begin{enumerate}

\item If $final(G)$ then $Can(G)=\emptyset$.

\item If $G=\overline{(\alpha ,\partial_l^{\bf A})^i}$ and $l\in\real_{>0},\ {\bf A}\neq\emptyset ,\ i\in\{1,\ldots
,ord({\bf A})\},\ {\cal A}_i=0$, then $Can(G)=\emptyset$.

\item If $G=\overline{(\alpha ,\partial_l^{\bf A})^i}$ and $l\in\real_{>0},\ {\bf A}\neq\emptyset ,\ i\in\{1,\ldots
,ord({\bf A})\},\ {\cal A}_i>0$, then $Can(G)=\{\{(\alpha ,\partial_l^{\bf A})^i\}\}$.

\item If $G=\overline{(\alpha ,\partial_l^\emptyset )}$ and $l\in\real_{>0}$, then $Can(G)=\{\{(\alpha
,\partial_l^\emptyset ))\}\}$.

\item If $\Upsilon\in Can(G)$ then $\Upsilon\in Can(G\circ E),\ \Upsilon\in Can(E\circ G)\ (\circ\in\{;,\cho\}),\
\Upsilon\in Can(G\| H),\ \Upsilon\in Can(H\| G),\\
f(\Upsilon )\in Can(G[f]),\ \Upsilon\in Can(G\rs a)\ (\mbox{when }a,\hat{a}\not\in{\cal A}(\Upsilon )),\
\Upsilon\in Can(G\sy a),\ \Upsilon\in Can([G*E*F]),\\
\Upsilon\in Can([E*G*F]),\ \Upsilon\in Can([E*F*G])$.

\item If $\Upsilon\in Can(G)$ and $\Xi\in Can(H)$ then $\Upsilon +\Xi\in Can(G\| H)$.

\item If $\Upsilon\in Can(G\sy a)$ and $(\alpha ,\partial_l^{\bf A})^i,(\beta ,\partial_m^{\bf B})^j\in\Upsilon$ are
different both probabilistic or both definite executions (i.e. ${\cal A}_i=1\ \Leftrightarrow\ {\cal B}_j=1$)
such that $a\in\alpha ,\ \hat{a}\in\beta ,\ l,m\in\real_{>0}$, then $\Upsilon -\{(\alpha ,\partial_l^{\bf A})^i,(\beta
,\partial_m^{\bf B})^j\}+\{(\alpha\oplus_a\beta ,\partial_{l+m}^{{\bf A}\sqcup{\bf B}})^{(i,j)}\}\in Can(G\sy a)$.

\item If $\Upsilon\in Can(G\sy a)$ and $(\alpha ,\partial_l^\emptyset ),(\beta ,\partial_m^\emptyset )\in\Upsilon$ are
different immediate executions such that $a\in\alpha ,\ \hat{a}\in\beta ,\ l,m\in\real_{>0}$, then $\Upsilon
-\{(\alpha ,\partial_l^\emptyset ),(\beta ,\partial_m^\emptyset )\}+\{(\alpha\oplus_a\beta ,\partial_{l+m}^\emptyset
)\}\in Can(G\sy a)$.

When we synchronize the same multiset of executions in different orders, we obtain several executions with the same
multiaction, weight, transient TPM (up to permutation in case of probabilistic executions) and delay phase (up to
permutation in case of probabilistic executions) parts, but with different numberings having the same content. Then
we only consider a single one of the resulting activities to avoid introducing redundant ones.

For example, the synchronization of probabilistic executions $(\alpha ,\partial_l^{\bf A})_1^i$ and $(\beta
,\partial_m^{\bf B})_2^j$ in different orders generates the probabilistic executions $(\alpha\oplus_a\beta
,\partial_{l+m}^{{\bf A}\sqcup{\bf B}})_{(1)(2)}^{(i,j)}$ and $(\beta\oplus_a\alpha ,\partial_{m+l}^{{\bf
B}\sqcup{\bf A}})_{(2)(1)}^{(j,i)}$. Since $Cont((1)(2))=\{1,2\}=Cont((2)(1))$, in both cases, only the first
execution (or, symmetrically, the second one) resulting from synchronization will appear in a multiset from
$Can(G\sy a)$. Note that ${\bf A}\sqcup{\bf B}$ and ${\bf B}\sqcup{\bf A}$ may differ, but they have exactly the
same entries and are {\em permutation equivalent}, since so are ${\bf A}\otimes{\bf B}$ and ${\bf B}\otimes{\bf
A}$.

\end{enumerate}
\end{definition}

Note that if $\Upsilon\in Can(G)$ then by definition of $Can(G),\ \forall\Xi\subseteq\Upsilon ,\ \Xi\neq\emptyset$, we
have $\Xi\in Can(G)$.

Let $G\in OpRegDynExpr$ and $Can(G)\neq\emptyset$. Obviously, if there are only probabilistic (definite or immediate,
respectively) executions in the multisets from $Can(G)$ then these probabilistic (definite or immediate) executions can
be accomplished from $G$. Otherwise, besides probabilistic ones, there are also definite and/or immediate executions in
the multisets from $Can(G)$. By the note above, there are non-empty multisets of definite and/or immediate executions
in $Can(G)$ as well, i.e. $\exists\Upsilon\in Can(G)\ \Upsilon\in\nat_{fin}^{{\cal DE}\cup{\cal
IL}}\setminus\{\emptyset\}$. In this case, no probabilistic executions can be accomplished from $G$, even if $Can(G)$
contains non-empty multisets of probabilistic executions, since definite and immediate executions have a priority over
probabilistic ones, and should be accomplished first. Further, if there are no probabilistic, but both definite and
immediate executions in the multisets from $Can(G)$, then, analogously, no definite executions can be accomplished from
$G$, since immediate executions have a priority over definite ones (besides that over probabilistic ones).

When there are only definite and, possibly, probabilistic executions in the multisets from $Can(G)$ then, from above,
only definite ones can be accomplished from $G$. Then just {\em maximal} non-empty multisets of definite executions can be
accomplished from $G$, since all non-conflicting definite executions cannot wait anymore and they should occur at the next
time moment with probability $1$. The next definition formalizes these requirements.

\begin{definition}
Let $G\in OpRegDynExpr$. The {\em set of all non-empty multisets of executions which can be accomplished from $G$} is

$$Now(G)=
\left\{
\begin{array}{ll}
Can(G)\cap\nat_{fin}^{\cal IL}, & Can(G)\cap\nat_{fin}^{\cal IL}\neq\emptyset ;\\
\{W\in Can(G)\cap\nat_{fin}^{\cal DE}\mid\forall V\in Can(G)\cap\nat_{fin}^{\cal DE}\ W\subseteq V\ \Rightarrow\
V\!=\!W\}, &
(Can(G)\cap\nat_{fin}^{\cal IL}=\emptyset )\wedge\\
 & (Can(G)\cap\nat_{fin}^{\cal DE}\neq\emptyset );\\
Can(G), & \mbox{otherwise}.
\end{array}
\right.$$

\end{definition}

Consider an operative dynamic expression $G\in OpRegDynExpr$. The expression $G$ is {\em p-tangible (probabilistically
tangible)}, denoted by $ptang(G)$, if $Now(G)\subseteq\nat_{fin}^{\cal PE}\setminus\{\emptyset\}$. In particular, we
have $ptang(G)$, if $Now(G)=\emptyset$. The expression $G$ is {\em d-tangible (definitely tangible)}, denoted by
$dtang(G)$, if $\emptyset\neq Now(G)\subseteq\nat_{fin}^{\cal DE}\setminus\{\emptyset\}$. The expression $G$ is {\em
tangible}, denoted by $tang(G)$, if $ptang(G)$ or $dtang(G)$, i.e. $Now(G)\subseteq (\nat_{fin}^{\cal
PE}\cup\nat_{fin}^{\cal DE})\setminus\{\emptyset\}$. Again, we particularly have $tang(G)$, if $Now(G)=\emptyset$.
Otherwise, the expression $G$ is {\em vanishing}, denoted by $vanish(G)$, and in this case $\emptyset\neq
Now(G)\subseteq\nat_{fin}^{\cal IL}\setminus\{\emptyset\}$. Note that the operative dynamic expressions from
$[G]_\approx$ may have different types in general.

\begin{example}
Let ${\bf A}$ be the transient TPM of the (discrete) uniform distribution $Unif(1,3)$ ($DPH({\bf A})=Unif(1,3)$) and
${\bf B}$ be the transient TPM of the geometric distribution $Geom(\frac{1}{3})$ ($DPH({\bf B})=Geom(\frac{1}{3})$),
i.e. ${\bf A}$ and ${\bf B}$ are from Example \ref{dphminmax.exm}.

Let $G=\overline{(\{a\},\partial_1^{\bf A})^1}\cho (\{b\},\partial_2^{\bf B})^1$ and $G'=(\{a\},\partial_1^{\bf
A})^1\cho\overline{(\{b\},\partial_2^{\bf B})^1}$. Note that $(\{a\},\partial_1^{\bf A})^1$ is a definite execution,
since ${\cal A}_1=1$, and $(\{b\},\partial_2^{\bf B})^1$ is a probabilistic execution, since ${\cal B}_1=\frac{1}{3}$.
Then $G\approx G'$, since $G\Leftarrow G''\Rightarrow G'$ for $G''=\overline{(\{a\},\partial_1^{\bf A})^1\cho
(\{b\},\partial_2^{\bf B})^1}$, but $Can(G)=Now(G)=\{\{(\{a\},\partial_1^{\bf A})^1\}\}$ and
$Can(G')=Now(G')=\{\{(\{b\},\partial_2^{\bf B})^1\}\}$. We have $dtang(G)$, but $ptang(G')$. To make the action rules
correct under structural equivalence, the executions like $(\{b\},\partial_2^{\bf B})^1$ from $G'$ must be disabled
using preconditions in the action rules, since definite executions have a priority over probabilistic ones, hence, the
choices between them are always resolved in favour of the former.

Let $H=\overline{(\{a\},\partial_1^{\bf A})^1}\|\overline{(\{b\},\partial_2^{\bf B})^1}$. Then
$Can(H)=\{\{(\{a\},\partial_1^{\bf A})^1\},\{(\{b\},\partial_2^{\bf B})^1\},\{\{a\},\partial_1^{\bf A})^1,
(\{b\},\partial_2^{\bf B})^1\}\}$ and $Now(H)=\{\{(\{a\},\partial_1^{\bf A})^1\}\}$. We have $dtang(H)$. To make the
action rules correct under structural equivalence, the executions like $(\{b\},\partial_2^{\bf B})^1$ (and all
multisets of executions including it) from $H$ must be disabled using preconditions in the action rules, since definite
executions have a priority over probabilistic ones, hence, the former are always accomplished first.

Let $K=\overline{(\{a\},\partial_1^{\bf A})^1}\cho (\{b\},\partial_2^\emptyset )$ and $K'=(\{a\},\partial_1^{\bf
A})^1\cho\overline{(\{b\},\partial_2^\emptyset )}$. Note that $(\{b\},\partial_2^\emptyset )$ is a vanishing execution.
Then $K\approx K'$, since $K\Leftarrow K''\Rightarrow K'$ for $K''=\overline{(\{a\},\partial_1^{\bf A})^1\cho
(\{b\},\partial_2^\emptyset )}$, but $Can(K)=Now(K)=\{\{(\{a\},\partial_1^{\bf A})^1\}\}$ and
$Can(K')=Now(K')=\{\{(\{b\},\partial_2^\emptyset )\}\}$. We have $dtang(K)$, but $vanish(K')$. To make the action rules
correct under structural equivalence, the executions like $\{(\{a\},\partial_1^{\bf A})^1\}$ from $K$ must be disabled
using preconditions in the action rules, since immediate executions have a priority over definite ones, hence, the
choices between them are always resolved in favour of the former.
\label{cannowph.exm}
\end{example}

\begin{definition}
Let $G\in RegDynExpr$. The {\em set of all non-empty multisets of executions which can be accomplished from
$[G]_\approx$} is

$$Now([G]_\approx )=
\left\{
\begin{array}{ll}
\bigcup\{Now(H)\mid H\in [G]_\approx ,\ vanish(H)\}, & \exists H\in [G]_\approx\cap OpRegDynExpr\ vanish(H);\\
\bigcup\{Now(H)\mid H\in [G]_\approx ,\ dtang(H)\}, & (\exists H\in [G]_\approx\cap OpRegDynExpr\ dtang(H))\wedge\\
 & (\forall H'\in [G]_\approx\cap OpRegDynExpr\ tang(H'));\\
\bigcup\{Now(H)\mid H\in [G]_\approx\}, & \mbox{otherwise}.
\end{array}
\right.$$

\end{definition}

We write $ptang([G]_\approx )$, if $Now([G]_\approx )\subseteq\nat_{fin}^{\cal PE}\setminus\{\emptyset\}$. In
particular, we have $ptang([G]_\approx )$, if $Now([G]_\approx )=\emptyset$. We write $dtang([G]_\approx )$, if
$\emptyset\neq Now([G]_\approx )\subseteq\nat_{fin}^{\cal DE}\setminus\{\emptyset\}$. We write $tang([G]_\approx )$, if
$ptang([G]_\approx )$ or $dtang([G]_\approx )$. Again, we have $tang([G]_\approx )$, if $Now([G]_\approx )=\emptyset$.
Otherwise, we write $vanish([G]_\approx )$, and in this case $\emptyset\neq Now([G]_\approx )\subseteq\nat_{fin}^{\cal
IL}\setminus\{\emptyset\}$.

\begin{example}
Let $G$ and $G'$ be from Example \ref{cannowph.exm}. Since $G\approx G'$, we get $[G]_\approx =[G']_\approx$. Then
$Now([G]_\approx )=\{\{(\{a\},\partial_1^{\bf A})^1\}\}$ and we have $dtang([G]_\approx )$.

Let $H$ be from Example \ref{cannowph.exm}. Then $Now([H]_\approx )=\{\{(\{a\},\partial_1^{\bf A})^1\}\}$ and we have
$dtang([H]_\approx )$.

Let $K$ and $K'$ be from Example \ref{cannowph.exm}. Since $K\approx K'$, we get $[K]_\approx =[K']_\approx$. Then
$Now([K]_\approx )=\{\{(\{b\},\partial_2^\emptyset )\}\}$ and we have $vanish([K]_\approx )$.
\label{nowstatph.exm}
\end{example}

In Table \ref{actrulesphm.tab}, we define action and empty move rules. In the table, $(\alpha ,\partial_l^{\bf
A})\in{\cal TL},\ {\cal A}_{ij}\ (1\leq i,j\leq ord({\bf A}))$ are the elements of transient TPM ${\bf A},\ {\cal A}_i\
(1\leq i\leq ord({\bf A}))$ are the elements of vector ${\bf a}$ of the transition probabilities towards absorption,
$(\alpha ,\partial_l^\emptyset ),(\beta ,\partial_m^\emptyset )\in{\cal IL}$. Further, $E,F\in RegStatExpr,\ G,H\in
SatOpRegDynExpr,\ \widetilde{G},\widetilde{H}\in RegDynExpr$ and $a\in Act$. Moreover, $\Gamma
,\Delta\in\nat_{fin}^{\cal PE}\setminus\{\emptyset\},\ \Gamma '\in\nat_{fin}^{\cal PE},\ I,J\in\nat_{fin}^{\cal
IL}\setminus\{\emptyset\},\ I'\in\nat_{fin}^{\cal IL},\ V,W\in\nat_{fin}^{\cal DE}\setminus\{\emptyset\},\
V'\in\nat_{fin}^{\cal DE}$ and $\Upsilon\in\nat_{fin}^{{\cal PE}\cup{\cal DE}}\setminus\{\emptyset\},\ \Upsilon
'\in\nat_{fin}^{{\cal PE}\cup{\cal DE}}$.

We use the following abbreviations in the names of the rules from the table: ``{\bf B}'' for ``{\bf B}asis case'',
``{\bf S}'' for ``{\bf S}equence'', ``{\bf C}'' for ``{\bf C}hoice'', ``{\bf P}'' for ``{\bf P}arallel'', ``{\bf L}''
for ``re{\bf L}abeling'', ``{\bf R}'' for ``{\bf R}estriction'', ``{\bf I}'' for ``{\bf I}teraton'' and ``{\bf Sy}''
for ``{\bf Sy}nchronization''. If we cannot merge the action rules with probabilistic, definite, immediate executions
or the empty move rules (with the empty multiset of activities) in one rule for a particular operation then we get the
coupled rules. In such cases, the names of the action rules with probabilistic executions have a suffix `{\bf p}',
those with definite executions have a suffix `{\bf d}', those with immediate executions have a suffix `{\bf i}', and
the empty move rules have a suffix `{\bf e}'. The combination of suffixes denotes the combined rules. For example, the
suffix `${\bf pd}$' marks the rules with probabilistic (`${\bf p}$') and definite (`${\bf d}$') executions. The upper
four rules in the table are the basis case rules {\bf Bpd}, {\bf Bi} and {\bf Be} that describe dynamic expressions
transformations as a result of accomplishing execution of either a single activity or no activities: a timed
multiaction (probabilistically or definitely), an immediate multiaction, or the empty multiset of activities. The other
action and empty move rules describe transformations of dynamic expressions, which are built using particular algebraic
operations. To make presentation more compact, the action rules with double conclusion are combined from two distinct
action rules with the same premises.

\begin{table}[h]
\caption{Action and empty move rules}
\label{actrulesphm.tab}
\begin{center}
$\hspace{-7mm}\small\begin{array}{|ll|}
\hline
\multicolumn{2}{|l|}{\rule{0mm}{5.5mm}
{\bf Bpd}\ \frac{{\cal A}_i>0}{\overline{(\alpha ,\partial_l^{\bf A})^i}\stackrel{\{(\alpha ,\partial_l^{\bf A})^i\}}
{\longrightarrow}\underline{(\alpha ,\partial_l^{\bf A})}}\hspace{42mm}
{\bf Bi}\ \overline{(\alpha ,\partial_l^\emptyset )}\stackrel{\{(\alpha ,\partial_l^\emptyset )\}}{\longrightarrow}
\underline{(\alpha ,\partial_l^\emptyset )}\hspace{42mm}
{\bf Be}\ \frac{{\cal A}_{ij}>0}{\overline{(\alpha ,\partial_l^{\bf A})^i}\stackrel{\emptyset}{\rightarrow}
\overline{(\alpha ,\partial_l^{\bf A})^j}}}\\[3mm]

{\bf S}\ \dfrac{G\stackrel{\Upsilon '}{\rightarrow}\widetilde{G}}{G;E\stackrel{\Upsilon
'}{\rightarrow}\widetilde{G};E,\ E;G\stackrel{\Upsilon '}{\rightarrow}E;\widetilde{G}} &
{\bf C1p}\ \dfrac{G\stackrel{\Gamma}{\rightarrow}\widetilde{G},\ \neg init(G)\vee (init(G)\wedge
ptang([\overline{E}]_\approx ))}
{G\cho E\stackrel{\Gamma}{\rightarrow}\widetilde{G}\cho\!\downharpoonleft\!\!E,\
E\cho G\stackrel{\Gamma}{\rightarrow}\downharpoonleft\!\!E\cho\widetilde{G}}\\[3mm]

{\bf C1d}\ \dfrac{G\stackrel{V}{\rightarrow}\widetilde{G},\ \neg init(G)\vee (init(G)\wedge
tang([\overline{E}]_\approx ))}
{G\cho E\stackrel{V}{\rightarrow}\widetilde{G}\cho\!\downharpoonleft\!\!E,\
E\cho G\stackrel{V}{\rightarrow}\downharpoonleft\!\!E\cho\widetilde{G}} &
{\bf C1i}\ \dfrac{G\stackrel{I}{\rightarrow}\widetilde{G}}{G\cho E\stackrel{I}{\rightarrow}
\widetilde{G}\cho\!\downharpoonleft\!\!E,\
E\cho G\stackrel{I}{\rightarrow}\downharpoonleft\!\!E\cho\widetilde{G}}\\[3mm]

{\bf C1e}\ \dfrac{G\stackrel{\emptyset}{\rightarrow}\widetilde{G},\ \neg init(G)}
{G\cho E\stackrel{\emptyset}{\rightarrow}\widetilde{G}\cho\!\downharpoonleft\!\!E,\
E\cho G\stackrel{\emptyset}{\rightarrow}\downharpoonleft\!\!E\cho\widetilde{G}} &
{\bf C2e}\ \dfrac{G\stackrel{\emptyset}{\rightarrow}\widetilde{G},\ init(G)\wedge ptang([\overline{E}]_\approx ),\
[\overline{E}]_\approx\ni H\stackrel{\emptyset}{\rightarrow}\widetilde{H}}
{G\cho E\stackrel{\emptyset}{\rightarrow}\widetilde{G}\cho\lfloor\widetilde{H}\rfloor,\
E\cho G\stackrel{\emptyset}{\rightarrow}\lfloor\widetilde{H}\rfloor\cho\widetilde{G}}\\[3mm]

{\bf P1p}\ \dfrac{G\stackrel{\Gamma}{\rightarrow}\widetilde{G},\ H\stackrel{\emptyset}{\rightarrow}\widetilde{H},\
ptang([H]_\approx )}
{G\| H\stackrel{\Gamma}{\rightarrow}\widetilde{G}\|\widetilde{H},\
H\| G\stackrel{\Gamma}{\rightarrow}\widetilde{H}\|\widetilde{G}} &
{\bf P1d}\ \dfrac{G\stackrel{V}{\rightarrow}\widetilde{G},\ H\stackrel{\emptyset}{\rightarrow}\widetilde{H},\
ptang([H]_\approx )}
{G\| H\stackrel{V}{\rightarrow}\widetilde{G}\|\widetilde{H},\
H\| G\stackrel{V}{\rightarrow}\widetilde{H}\|\widetilde{G}}\\[3mm]

{\bf P1i}\ \dfrac{G\stackrel{I}{\rightarrow}\widetilde{G}}{G\| H\stackrel{I}{\rightarrow}\widetilde{G}\| H,\
H\| G\stackrel{I}{\rightarrow}H\|\widetilde{G}} &
{\bf P2p}\ \dfrac{G\stackrel{\Gamma}{\rightarrow}\widetilde{G},\ H\stackrel{\Delta}{\rightarrow}\widetilde{H}}
{G\| H\stackrel{\Gamma +\Delta}{\longrightarrow}\widetilde{G}\|\widetilde{H}}\\[3mm]

{\bf P2d}\ \dfrac{G\stackrel{V}{\rightarrow}\widetilde{G},\ H\stackrel{W}{\rightarrow}\widetilde{H}}
{G\| H\stackrel{V+W}{\longrightarrow}\widetilde{G}\|\widetilde{H}} &
{\bf P2i}\ \dfrac{G\stackrel{I}{\rightarrow}\widetilde{G},\ H\stackrel{J}{\rightarrow}\widetilde{H}}
{G\| H\stackrel{I+J}{\longrightarrow}\widetilde{G}\|\widetilde{H}}\\[3mm]

{\bf P2e}\ \dfrac{G\stackrel{\emptyset}{\rightarrow}\widetilde{G},\ H\stackrel{\emptyset}{\rightarrow}\widetilde{H}}
{G\| H\stackrel{\emptyset}{\longrightarrow}\widetilde{G}\|\widetilde{H}} &
{\bf L}\ \dfrac{G\stackrel{\Upsilon '}{\rightarrow}\widetilde{G}}{G[f]\stackrel{f(\Upsilon ')}{\longrightarrow}
\widetilde{G}[f]}\\[3mm]

{\bf R}\ \dfrac{G\stackrel{\Upsilon '}{\rightarrow}\widetilde{G},\ a,\hat{a}\not\in{\cal A}(\Upsilon ')}
{G\rs a\stackrel{\Upsilon}{\rightarrow}\widetilde{G}\rs a} &
{\bf I1}\ \dfrac{G\stackrel{\Upsilon '}{\rightarrow}\widetilde{G}}
{[G*E*F]\stackrel{\Upsilon '}{\rightarrow}[\widetilde{G}*E*F]}\\[3mm]

{\bf I2p}\ \dfrac{G\stackrel{\Gamma}{\rightarrow}\widetilde{G},\ \neg init(G)\vee (init(G)\wedge
ptang([\overline{F}]_\approx ))}
{[E*G*F]\stackrel{\Gamma}{\rightarrow}[E*\widetilde{G}*\!\downharpoonleft\!\!F],\
[E*F*G]\stackrel{\Gamma}{\rightarrow}[E*\!\downharpoonleft\!\!F*\widetilde{G}]} &
{\bf I2d}\ \!\!\dfrac{G\stackrel{V}{\rightarrow}\widetilde{G},\ \neg init(G)\vee (init(G)\wedge
tang([\overline{F}]_\approx ))}
{[E*G*F]\stackrel{V}{\rightarrow}[E*\widetilde{G}*\!\downharpoonleft\!\!F],\
[E*F*G]\stackrel{V}{\rightarrow}[E*\!\downharpoonleft\!\!F*\widetilde{G}]}\\[3mm]

{\bf I2i}\ \!\!\dfrac{G\stackrel{I}{\rightarrow}\widetilde{G}}
{[E*G*F]\stackrel{I}{\rightarrow}[E*\widetilde{G}*\!\downharpoonleft\!\!F],\
[E*F*G]\stackrel{I}{\rightarrow}[E*\!\downharpoonleft\!\!F*\widetilde{G}]} &
{\bf I2e}\ \dfrac{G\stackrel{\emptyset}{\rightarrow}\widetilde{G},\ \neg init(G)}
{[E*G*F]\stackrel{\emptyset}{\rightarrow}[E*\widetilde{G}*\!\downharpoonleft\!\!F],\
[E*F*G]\stackrel{\emptyset}{\rightarrow}[E*\!\downharpoonleft\!\!F*\widetilde{G}]}\\[3mm]

{\bf I3e}\ \dfrac{G\stackrel{\emptyset}{\rightarrow}\widetilde{G},\ init(G)\wedge ptang([\overline{F}]_\approx ),\
[\overline{F}]_\approx\ni H\stackrel{\emptyset}{\rightarrow}\widetilde{H}}
{[E*G*F]\stackrel{\emptyset}{\rightarrow}[E*\widetilde{G}*\lfloor\widetilde{H}\rfloor],\
[E*F*G]\stackrel{\emptyset}{\rightarrow}[E*\lfloor\widetilde{H}\rfloor*\widetilde{G}]} &
{\bf Sy1}\ \dfrac{G\stackrel{\Upsilon '}{\rightarrow}\widetilde{G}}
{G\sy a\stackrel{\Upsilon '}{\rightarrow}\widetilde{G}\sy a}\\[3mm]

{\bf Sy2pd}\ \dfrac{G\sy a\xrightarrow{\Gamma '+\{(\alpha ,\partial_l^{\bf A})^i\}+\{(\beta ,\partial_m^{\bf
B})^j\}}\widetilde{G}\sy a,\ a\!\!\in\!\!\alpha ,\hat{a}\!\!\in\!\!\beta ,\ {\cal A}_i\!\!=\!\!1\!\Leftrightarrow\!
{\cal B}_j\!\!=\!\!1}{G\sy a\xrightarrow{\Gamma '+\{(\alpha\oplus_a\beta ,\partial_{l+m}^{{\bf A}\sqcup
{\bf B}})^{(i,j)}\}}\widetilde{G}\sy a} &
{\bf Sy2i}\ \dfrac{G\sy a\xrightarrow{I'+\{(\alpha ,\partial_l^\emptyset )\}+\{(\beta ,\partial_m^\emptyset
)\}}\widetilde{G}\sy a,\ a\in\alpha ,\ \hat{a}\in\beta}{G\sy a\xrightarrow{I'+\{(\alpha\oplus_a\beta
,\partial_{l+m}^\emptyset )\}}\widetilde{G}\sy a}\\[5mm]
\hline
\end{array}$
\end{center}
\end{table}

In Table \ref{actrulesphm.tab}, rules {\bf Bi}, {\bf S}, {\bf P1i}, {\bf L}, {\bf R}, {\bf I1}, {\bf Sy1} resemble
those of gsPBC, but the former correspond to execution of multisets of activities, not of single activities, as in the
latter, and our rules have simpler preconditions (if any), since all immediate multiactions in dtphPBC have the same
priority level, unlike those of gsPBC. The other rules {\bf Bpd}, {\bf Be}, {\bf C1p}, {\bf C1d}, {\bf C1i}, {\bf C1e},
{\bf C2e}, {\bf P1p}, {\bf P1d}, {\bf P2p}, {\bf P2d}, {\bf P2i}, {\bf P2e}, {\bf I2p}, {\bf I2d}, {\bf I2i}, {\bf
I2e}, {\bf I3e}, {\bf Sy2pd} and {\bf Sy2i} differ from those of gsPBC in various ways, to be explained below.

The preconditions in rules {\bf C1p}, {\bf C1e}, {\bf C2e}, {\bf P1p}, {\bf I2p}, {\bf I2e} and {\bf I3e} are needed to
ensure that (possibly empty) multisets of probabilistic executions are accomplished only from {\em p-tangible}
saturated operative dynamic expressions, such that all dynamic expressions structurally equivalent to them are
p-tangible as well. For example, assuming that $ptang([G]_\approx )$ in rule {\bf C1p}, if $init(G)$ then
$G\approx\overline{F}$ for some static expression $F$ and $G\cho E\approx\overline{F}\cho E\approx\overline{F\cho
E}\approx F\cho\overline{E}$. Hence, it should be guaranteed that $ptang([F\cho\overline{E}]_\approx )$, which holds
iff $ptang([\overline{E}]_\approx )$. The case $E\cho G$ is treated similarly. Assuming that $ptang([G]_\approx )$ in
rule {\bf P1p}, it should be guaranteed that $ptang([G\| H]_\approx )$ and $ptang([H\| G]_\approx )$, which holds iff
$ptang([H]_\approx )$. The precondition in rule {\bf I2p} is analogous to that in rule {\bf C1p}. Rules {\bf C1e}, {\bf
C2e}, {\bf I2e} and {\bf I3e} differ from {\bf C1p} and {\bf I2p} in that in the former four rules the empty multiset
of activities is executed from $G$ and the premises of rules {\bf C1e} and {\bf I2e} contain only the first parts of
the conditions from those of rules {\bf C1p} and {\bf I2p}, respectively. The second parts of those conditions are
included in the premises of rules {\bf C2e} and {\bf I3e}, which also have different conclusions signifying that at the
time step, the phases are changed in the {\em both} alternative subexpressions.

Analogously, the preconditions in rules {\bf C1d} and {\bf I2d} are needed to ensure that non-empty multisets of
definite executions are accomplished only from {\em d-tangible} saturated operative dynamic expressions, such that all
dynamic expressions structurally equivalent to them are tangible. This requirement (about tangible expressions) means
that only (possibly empty) multisets of probabilistic executions or non-empty multisets of definite executions, and no
immediate executions, can be accomplished from the subprocess that is composed {\em alternatively (via choice)} with the
subprocess $G$. Hence, the multiset $V$ of definite executions, accomplished from $G$, can also be accomplished from the
composition of $G$ and that alternative subprocess, since immediate executions cannot occur from the latter. Otherwise,
it would prevent the accomplishing of $V$ from $G$ in the composite process, by disregarding the alternative choice of
the branch specified by $G$, due to the zero delays and priority (captured by all action rules) of immediate executions
over all other execution types.

The precondition in rule {\bf P1d} is an exception from the above. It also ensures that non-empty multisets of definite
executions are accomplished only from {\em d-tangible} saturated operative dynamic expressions, such that all dynamic
expressions structurally equivalent to them are tangible, but all the expressions structurally equivalent to $H$
specifying parallel with $G$ subprocess should be p-tangible. This stricter requirement (about p-tangible, instead of
just tangible, expressions) means that only (possibly empty) multisets of probabilistic executions, and no immediate or
definite executions, can be accomplished from the subprocess $H$ that is composed {\em concurrently (in parallel)} with
the subprocess $G$. Hence, the multiset $V$ of definite executions, accomplished from $G$, is also a maximal (by the
inclusion relation) multiset that can be accomplished from the parallel composition of $G$ and $H$. The reason is that
only the change of the positive phases of delays (like in rule {\bf Be}) is actually possible in $H$ while
accomplishing $V$ from $G$, due to priority (captured by all action rules) of definite executions over probabilistic
ones. Moreover, the empty move from $H$ is explicitly given in the premise of rule {\bf P1d}. Thus, taking the rule
precondition $ptang([H]_\approx )$ instead of $tang([H]_\approx )$ preserves maximality of the steps consisting of
definite executions while applying parallel composition.

In rules {\bf P1p} and {\bf P1d}, the execution of the empty multiset of activities from the subprocess $H$ that is
composed in parallel with $G$ (from which probabilistic or definite executions are respectively accomplished at the
next time tick) is used to maintain the time progress uniformity in the composite expression, as well as to indicate in
the premises of those rules the resulting subprocess $\widetilde{H}$, needed to build the resulting composite process
in the conclusions of those rules. Although rules {\bf P1p} and {\bf P1d} can be merged, we have not done it, aiming to
emphasize the exceptional precondition $ptang([H]_\approx )$ (instead of $tang([H]_\approx )$) in rule {\bf P1d}.

In rules {\bf C1p}, {\bf C1d}, {\bf C1i} and {\bf C1e}, the delay phases discarding $\downharpoonleft\!\!E$, applied to
the static expression $E$ that is composed via choice with $G$ (from which probabilistic, definite or immediate
executions or the empty multiset of activities are respectively accomplished) signifies that the delay phases of the
non-chosen subexpression (branch) become irrelevant in the composite expression and thus may be removed. Analogously,
in rules {\bf I2p}, {\bf I2d}, {\bf I2i} and {\bf I2e}, the delay phases discarding $\downharpoonleft\!\!F$ is applied
to the static expression $F$ that is an alternative to $G$ (from which activities or the empty multiset of them are
executed), since the choice is always made between the body and termination subexpressions of the composite iteration
expression (between the second and third arguments of iteration).

In the conclusion of of rule {\bf C2e}, the lines discarding $\lfloor\widetilde{H}\rfloor$ is applied to the dynamic
expression $\widetilde{H}$ that is composed via the choice operation with $\widetilde{G}$, resulted from execution of
the empty multiset of activities from $G$. This technique describes the situation when the delay phases of the
subexpression (second alternative branch) $E$, which can be potentially chosen instead of $G$ (due to the overline
shift to the second alternative branch in the structurally equivalent to $G\cho E$ expression $\overline{F}\cho E$) by
proceeding with the subexpression $H\approx\overline{E}$, are changed at the time step in the {\em both} alternative
subexpressions $G$ and $E$. If $H\stackrel{\emptyset}{\rightarrow}\widetilde{H}$ then these alternatives are
respectively transformed into the subexpressions $\widetilde{G}$ and $\lfloor\widetilde{H}\rfloor$ of the resulting
composite expression. Rule {\bf I3e} is explained similarly, by taking $F$ instead of $E$.

Rule {\bf Bpd} differs from the more standard one {\bf Bi} that resembles rule {\bf B} in gsPBC. The reason is that
{\bf Bpd} has the premise with the conditions on the probability of changing the current {\em positive phase} to the
{\em zero phase}. If the probability is less than $1$ then a probabilistic execution is accomplished while such a
change. If the probability is equal to $1$ then a definite execution is accomplished at that change.

Rule {\bf Be} corresponds to one discrete time unit delay (passage of one unit of time) while executing no activities
and therefore it has no analogues among the rules of gsPBC that adapts the continuous time model. This rule describes
the execution of the empty multiset of activities while making a transition between two {\em positive phases} of the
timed multiaction delay. Since with such a change of phases the zero phase is not entered, no activities are executed
(no execution is accomplished).

Rules {\bf P2p}, {\bf P2d}, {\bf P2i} and {\bf P2e} have no similar rules in gsPBC, since interleaving semantics of
that algebra allows no simultaneous execution of activities or their empty multisets. On the other hand, {\bf P2p},
{\bf P2d}, {\bf P2i} and {\bf P2e} have in PBC the analogous rule {\bf PAR} that is used to construct step semantics of
the calculus, but the former four rules correspond to execution of (possibly empty) multisets of activities, unlike
that of multisets of multiactions in the latter rule. Rules {\bf P2p}, {\bf P2d}, {\bf P2i} and {\bf P2e} cannot be
merged, since otherwise simultaneous accomplishing of different types of executions and execution of the empty multiset
of activities would be allowed.

Rules {\bf Sy2pd} and {\bf Sy2i} differ from the corresponding synchronization rules in gsPBC, since the probability or
the weight of synchronization in the former rules and the rate or the weight of synchronization in the latter rules are
calculated in two distinct ways. Rules {\bf Sy2pd} and {\bf Sy2i} cannot be merged, since otherwise synchronization
among probabilistic, definite and immediate executions would be allowed.

Rule {\bf Sy2pd} establishes that the synchronization of two probabilistic executions is made by taking the product of
their probabilities $\rho ={\cal A}_i$ and $\chi ={\cal B}_j$, since we are considering that both must occur for the
synchronization to happen. This corresponds, in some sense, to the probability of the independent event intersection,
but the real situation is more complex, since these probabilistic executions can also be accomplished in parallel.
Nevertheless, when scoping (the combined operation consisting of synchronization followed by restriction over the same
action \cite{BDK01}) is applied over a parallel accomplishment, we get as final result just the simple product of the
probabilities, since no normalization is needed there. Multiplication is an associative and commutative binary
operation that is distributive over addition, i.e. it fulfills all practical conditions imposed on the synchronization
operator in \cite{Hil94b}. Further, if both arguments of multiplication are from $(0;1)$ then the result belongs to the
same interval, hence, multiplication naturally maintains probabilistic compositionality in our model. Our approach is
similar to the multiplication of rates of the synchronized actions in MTIPP \cite{HR94} in the case when the rates are
less than $1$. Moreover, for the probabilities $\rho$ and $\chi$ of two probabilistic executions to be synchronized we
have $\rho\cdot\chi <\min\{\rho ,\chi\}$, i.e. multiplication meets the performance requirement stating that the
probability of the resulting synchronized probabilistic execution should be less than the probabilities of the two ones
to be synchronized. While performance evaluation, it is usually supposed that the execution of two components together
require more system resources and time than the execution of each single one. This resembles the {\em bounded capacity}
assumption from \cite{Hil94b}. Thus, multiplication is easy to handle with and it satisfies the algebraic,
probabilistic, time and performance requirements. Therefore, we have chosen the product of the probabilities for the
synchronization. See also \cite{BKLL95,BrHe01,Bri03,SS24} for a discussion about binary operations producing the rates
of synchronization in the continuous time setting.

In rules {\bf Sy2pd} and {\bf Sy2i}, we sum the weights $l$ and $m$ of two synchronized definite (immediate)
executions, since the weights can be interpreted as the rewards \cite{Ros96}, thus, we collect the rewards. Moreover,
we express that the synchronized accomplishment of definite (immediate) executions has more importance than
accomplishing each single one. The weights of definite and immediate executions can also be seen as bonus rewards
associated with transitions \cite{BBr01}. The rewards are summed during synchronized accomplishment of definite
(immediate) executions, since in that case all the synchronized executions can be seen as participated in the
accomplishment. We prefer to collect more rewards, thus, the transitions providing greater rewards will have a
preference and they will be accomplished with a greater probability. In particular, since accomplishment of immediate
executions takes no time, we prefer to collect in a step (parallel accomplishment) as many synchronized immediate
executions as possible to get more significant progress in behaviour. Under behavioural progress we understand an
advance in accomplishing executions, which does not always imply a progress in time, as in the case when the executions
are immediate ones. This aspect is used while evaluating performance via analysis of the embedded discrete time Markov
chains (EDTMCs) of expressions. Since every state change in EDTMC takes one unit of (its local) time, greater advance
in operation of the EDTMC allows one to calculate quicker many performance indices. As for definite executions, only
the maximal multisets of them, accomplishable from a state, occur with a time tick. The reason is that each definite
execution has a probability $1$ to occur in the next moment, when there exist no conflicting definite executions.
Hence, all definite executions that are accomplishable together from a state must participate in a step from that
state. Since there may exist different such maximal multisets of definite executions, a probabilistic choice among all
possible steps is made, imposed by the weights of those executions. Thus, the steps of definite executions always
produce maximal overall weights, but they are mainly used to calculate the probabilities of alternative maximal steps
rather than the cumulative bonus rewards.

We do not have self-synchronization, i.e. synchronization of an activity with itself, since all the (enumerated)
activities executed together are considered to be different. This allows us to avoid rather cumbersome and unexpected
behaviour, as well as many technical difficulties \cite{BDK01}.

Notice that the phase indicators of all enabled timed multiactions that lose their enabledness when a state change
occurs become inactive (turned off) and their values become irrelevant while the phase indicators of all those
preserving their enabledness stay active with their (possibly new, due to the phase change) values. Hence, we adapt the
{\em enabling memory} policy \cite{MBCDF95,AHR00,Bal01,Bal07} when the process states are changed and the enabledness
of deterministic multiactions is possibly modified (remember that immediate multiactions may be seen as those with the
phase indicators displaying a single value $0$, so we do not need to store their values). Then the phases of delays of
timed multiactions are taken as the enabling memory variables.

Similar to \cite{Kou00}, we are mainly interested in the dynamic expressions, inferred by applying the inaction rules
(also in the reverse direction) and action rules from the overlined static expressions, such that no superscribed (i.e.
superscribed with the delay phases) timed multiaction is a subexpression of them. The reason is to ensure that time
proceeds uniformly and only enabled timed multiactions are superscribed. We call such dynamic expressions~reachable.

\begin{definition}
A dynamic expression $G$ is {\em reachable}, if there exists a static expression $E$ without delay phase superscripts,
such that $\overline{E}\approx G$ or $\overline{E}\approx G_0\stackrel{\Upsilon_1}{\rightarrow}H_1\approx
G_1\stackrel{\Upsilon_2}{\rightarrow}\ldots\stackrel{\Upsilon_n}{\rightarrow}H_n\approx G$ for some $\Upsilon_1,\ldots
,\Upsilon_n\in\nat_{fin}^{\cal PHL}$.
\end{definition}

\begin{example}
Let ${\bf A}$ be the transient TPM of the (discrete) uniform distribution $Unif(1,3)$ ($DPH({\bf A})=Unif(1,3)$) and
${\bf B}$ be the transient TPM of the geometric distribution $Geom(\frac{1}{3})$ ($DPH({\bf B})=Geom(\frac{1}{3})$),
i.e. ${\bf A}$ and ${\bf B}$ are from Example \ref{dphminmax.exm}.

A dynamic expression $G=\overline{(\{a\},\partial_1^{\bf A})^3}\cho\underline{\{b\},\partial_2^{\bf B})}$ is
``illegal'' and $G'=\overline{(\{a\},\partial_1^{\bf A})^2}\cho\underline{(\{b\},\partial_2^{\bf B})}$ is ``le\-gal'',
since $G'$ is obtained after one phase change from the overlined static expression without phase value superscripts
$\overline{E}=\overline{(\{a\},\partial_1^{\bf A})\cho (\{b\},\partial_2^{\bf B})}$.

A dynamic expression $H=\overline{(\{a\},\partial_1^{\bf A})};(\{b\},\partial_2^{\bf B})^1$ is ``illegal'' and
$H'=\overline{(\{a\},\partial_1^{\bf A})};(\{b\},\partial_2^{\bf B})$ is ``legal'', since the timed mul\-ti\-ac\-ti\-on
$(\{b\},\partial_2^{\bf B})$ is not enabled in $[H]_\approx$ and its phase indicator cannot start before the executions
of the timed multiaction $(\{a\},\partial_1^{\bf A})$ are accomplished.
\label{reachabph.exm}
\end{example}

Enabledness of the superscribed timed multiactions is considered in the following proposition.

\begin{proposition}
Let $G$ be a reachable dynamic expression. Then only timed multiactions from\\
$EnaTime([G]_\approx )$ are superscribed in $G$.
\label{enawmstmph.pro}
\end{proposition}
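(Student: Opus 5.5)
The plan is to proceed by induction on the length $n$ of the derivation witnessing reachability of $G$, that is, on the number of action or empty move steps in $\overline{E}\approx G_0\stackrel{\Upsilon_1}{\rightarrow}H_1\approx G_1\stackrel{\Upsilon_2}{\rightarrow}\ldots\stackrel{\Upsilon_n}{\rightarrow}H_n\approx G$. The invariant to carry is slightly stronger than the statement: every superscribed timed multiaction $(\alpha ,\partial_l^{\bf A})^i$ occurring in $G$ belongs to $EnaTime([G]_\approx )$. Moreover, for each such occurrence some $H\in [G]_\approx\cap OpRegDynExpr$ contains it overlined as $\overline{(\alpha ,\partial_l^{\bf A})^i}$ with the same superscript $i$. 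The proof of Proposition \ref{saturation.pro} already shows that phase superscripts are fixed across a structural equivalence class, which helps here.

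First I treat structural equivalence, which covers both the base case and the $\approx$ parts of each inductive step. In the base case $\overline{E}$ has no superscripts. I inspect the inaction rules of Tables \ref{inactrulesphm1.tab} and \ref{inactrulesphm2.tab}. Only the first rule creates a superscript, and it creates it on an overlined occurrence $\overline{(\alpha ,\partial_l^{\bf A})^{ord({\bf A})}}$. All the other rules only move bars. So whenever a superscripted occurrence is present in some member of the class, the forward/backward chain back to an expression where it is overlined stays within the class. By the definition of $EnaTime$, after passing to an operative form via the inaction rules, this puts the multiaction in $EnaTime$.

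The subtle point is that bars may be shifted away. For example, $\overline{(\alpha ,\partial_l^{\bf A})^i}\cho F\approx\overline{(\alpha ,\partial_l^{\bf A})^i\cho F}\approx(\alpha ,\partial_l^{\bf A})^i\cho\overline{F}$. The superscript survives without the overline, but the class still contains the overlined version, so enabledness of the class is unaffected. Conversely, the backward inaction rules can never attach a superscript to a timed multiaction in a position that is not enabled, for instance the second argument of a sequence or a non-initial iteration argument. To see this, note that backward application of $\overline{(\alpha ,\partial_l^{\bf A})}\Rightarrow\overline{(\alpha ,\partial_l^{\bf A})^{ord({\bf A})}}$ only removes superscripts, and no other rule touches them.

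Next comes the action step $G_{k-1}\stackrel{\Upsilon_k}{\rightarrow}H_k$, by structural induction on the derivation of this transition in Table \ref{actrulesphm.tab}. In rule \textbf{Bpd} the result $\underline{(\alpha ,\partial_l^{\bf A})}$ carries no superscript, and in rule \textbf{Be} the result $\overline{(\alpha ,\partial_l^{\bf A})^j}$ stays overlined, hence enabled. Rules \textbf{C1p}, \textbf{C1d}, \textbf{C1i}, \textbf{C1e}, \textbf{I2p}, \textbf{I2d}, \textbf{I2i} and \textbf{I2e} apply $\downharpoonleft$ to the discarded branch, which erases its superscripts. In rules \textbf{C2e} and \textbf{I3e} the alternative branch $\lfloor\widetilde{H}\rfloor$ keeps superscripts, but $\widetilde{H}$ arises from $H\approx\overline{E}$ by an empty move. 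Hence by the induction hypothesis applied to the subterm, combined with the choice equivalence $\widetilde{G}\cho\lfloor\widetilde{H}\rfloor$, the branch can be re-overlined exactly when $init(\widetilde{G})$, and this holds because empty moves keep $\widetilde{G}$ initial. Rule \textbf{S} with $G;E$ leaves $E$ untouched, and $E$ has no superscripts by the induction hypothesis. The argument for this is that a superscript in the right operand of a sequence could only come from an earlier overlining of that operand, after which it would have been consumed. For rule \textbf{P}, enabledness of $G\| H$ is componentwise. Relabeling, restriction and synchronization preserve enabledness of subterms, and restriction only blocks execution, not enabling.

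I expect the main obstacle to be the transfer of the invariant through a single action step when the executed component reaches an underlined (final) state. For instance, $\underline{F};K\Rightarrow F;\overline{K}$, or the iteration loop $[E*\underline{F}*K]\Rightarrow [E*\overline{F}*K]$, re-enters $F$. Here I must verify that $F$ contains no stale superscripts from its previous run. My first attempt is to show, as an auxiliary lemma proved alongside the induction, that in any reachable expression every subterm lying under an underline, or to the left of a completed sequence, is superscript-free. This holds because \textbf{Bpd} erases the superscript upon execution, and the choice and iteration rules erase the superscripts of the unchosen branches via $\downharpoonleft$.
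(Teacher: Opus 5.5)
Your skeleton is the paper's: induction along the derivation witnessing reachability. Among the inaction rules only $\overline{(\alpha,\partial_l^{\bf A})}\Rightarrow\overline{(\alpha,\partial_l^{\bf A})^{ord({\bf A})}}$ creates a superscript, and only on an overlined multiaction. The action rules only update phases (\textbf{Be}, and the alternative branch in \textbf{C2e}, \textbf{I3e}) or erase them (\textbf{Bpd}, and $\downharpoonleft$ in the \textbf{C1} and \textbf{I2} rules).

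The step that fails is the auxiliary lemma you plan for the re-entry obstacle. As stated it is false, because reachability is closed under $\approx$. The iteration rule $[E*\underline{F}*K]\Rightarrow[E*\overline{F}*K]$ can then be applied backwards to a body that already carries freshly set superscripts. With $c=(\{c\},\partial_1^\emptyset)$, $t=(\{a\},\partial_1^{\bf A})$, $n=ord({\bf A})$ and any $K\in RegStatExpr$,
\[
\overline{[c*t*K]}\Rightarrow[\overline{c}*t*K]\stackrel{\{c\}}{\longrightarrow}[\underline{c}*t*K]\Rightarrow[c*\overline{t}*K]\Rightarrow[c*\overline{t^n}*K]\Leftarrow[c*\underline{t^n}*K],
\]
where the last step is an instance of that rule with $F=t^n\in RegStatExpr$. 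So $[c*\underline{t^n}*K]$ is reachable and has a superscript under an underline. With body $t;F_2$, one more backward step by $E;\underline{F}\Rightarrow\underline{E;F}$ gives the reachable $[c*t^n;\underline{F_2}*K]$, with a superscript to the left of a completed sequence. Neither contradicts the proposition, since $t\in EnaTime$ via the operative $[c*\overline{t^n}*K]$. But the lemma cannot be proved, so the device you rely on for re-entry has to be replaced.

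What you actually need is a property of the targets $H_k$ of action steps, not of whole classes, and it needs no separate lemma. Take the invariant at the saturated operative $G_{k-1}$. Add the fact that every rule which disables a timed multiaction also erases its superscript: \textbf{Bpd} for the executed one, $\downharpoonleft$ for the discarded choice or iteration branch. Then no multiaction disabled by the step keeps a superscript in $H_k$, so a completed body in $H_k$ is superscript-free. After $[E*\underline{F}*K]\Rightarrow[E*\overline{F}*K]$, the only superscripts inside $F$ within $[H_k]_\approx$ are those set afresh by the first inaction rule, hence on enabled multiactions. This is exactly how the paper argues.

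A smaller point concerns the \textbf{C2e}/\textbf{I3e} case: ``the induction hypothesis applied to the subterm'' does not typecheck, because $\overline{E}$ with superscripted $E$ need not be reachable. You need a transition-level claim, proved by induction on the rule derivation. It should say that an empty move from a saturated operative $H$, whose superscripts lie on multiactions of $EnaTime([H]_\approx)$, yields an $\widetilde{H}$ with the same property. You also need $\overline{\lfloor\widetilde{H}\rfloor}\approx\widetilde{H}$.
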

{\em Proof.} By the definition of reachability, there exists $E\in StatExpr$ without superscribed timed multiactions, such
that $G$ is derived from $\overline{E}$ by applying the (possibly reversed) inaction rules and action rules:
$\overline{E}\approx G$ or $\overline{E}\approx G_0\stackrel{\Upsilon_1}{\rightarrow}H_1\approx
G_1\stackrel{\Upsilon_2}{\rightarrow}\ldots\stackrel{\Upsilon_n}{\rightarrow}H_n\approx G_n=G$ for some
$\Upsilon_1,\ldots ,\Upsilon_n\in\nat_{fin}^{\cal PHL}$.

In that derivation, only the first {\em inaction} rule can add delay phase superscripts to the timed multiactions from
${\cal TL}(E)={\cal TL}(\overline{E})={\cal TL}(G_i)={\cal TL}(H_j)={\cal TL}(G)\ (0\leq i\leq n,\ 1\leq j\leq n)$ that
are overlined. The other inaction rules (also reversed) can just ``shift'' the up\-per bars from / to those
superscribed timed multiactions while preserving the enabledness of all timed mul\-ti\-ac\-ti\-ons from ${\cal TL}(G)$.
Thus, just the timed multiactions from $EnaTime([G_i]_\approx )$ {\em become} superscribed in $G_i\ (0\leq i\leq n)$
(in the subexpressions like $\overline{(\alpha ,\partial_l^\theta )^\theta}$ or $(\alpha ,\partial_l^\theta )^\theta$,
corresponding to the newly enabled timed~multiactions).

Further, in the derivation, the {\em action} rules cannot add new delay phase superscripts to the timed multiactions
from ${\cal TL}(G)$. Instead, the action rules can only change the phases (rule {\bf Be}) or discard them by making
such timed multiactions non-enabled (disabled). Such ``dis\-ab\-ling'' action rules correspond either to the executing
an overlined superscribed timed mul\-ti\-ac\-ti\-on (rule {\bf Bpd}) or to the choice of some alternative process
branch (rules {\bf C1p}, {\bf C1d}, {\bf C1i}, {\bf I2p}, {\bf I2d} and {\bf I2i}). In the both cases, all the disabled
timed multiactions loose their delay phase superscripts. Thus, only the timed mul\-ti\-ac\-ti\-ons from
$EnaTime([H_j]_\approx )=EnaTime([G_j]_\approx )$ {\em remain} superscribed in $H_j$ and {\em become} superscribed in
$G_j$, ob\-tai\-ned from $H_j\ (1\leq j\leq n)$ by applying the (possibly reversed) inaction~rules.

Hence, $\overline{E}$ does not contain superscribed timed multiactions and at the end of the derivation of $G$ from it, only
the timed multiactions from $EnaTime([G_n]_\approx )$ {\em remain} superscribed in $H_n$ and {\em become} superscribed in
$G_n=G$. Therefore, only timed mul\-ti\-ac\-ti\-ons from $EnaTime([G]_\approx )$ are superscribed in $G$. \hfill $\eop$

In Table \ref{rulescompphm.tab}, inaction rules, action rules (with probabilistic or definite, or immediate executions)
and empty move rules are compared according to the three questions about their application: whether it changes the
current state, whether it leads to a time progress, and whether it results in execution of some activities. Positive
answers to the questions are denoted by the plus sign while negative ones are specified by the minus sign. If both
positive and negative answers can be given to some of the questions in different cases then the plus-minus sign is
written. Notice that the process states are considered up to structural equivalence of the corresponding expressions,
and time progress is not regarded as a state change.

\begin{table}
\caption{Comparison of inaction, action and empty move rules}
\label{rulescompphm.tab}
\begin{center}
\begin{tabular}{|c||c|c|c|}
\hline
Rules & State change & Time progress & Activities execution\\
\hline\hline
Inaction rules & $-$ & $-$ & $-$\\
\hline
Action rules & $\pm$ & $+$ & $+$ \\
(probabilistic or definite executions) & & & \\
\hline
Action rules & $\pm$ & $-$ & $+$ \\
(immediate executions) & & & \\
\hline
Empty move rule & $\pm$ & $+$ & $-$ \\
\hline
\end{tabular}
\end{center}
\end{table}

\subsection{Transition systems}

We now construct labeled probabilistic transition systems associated with dynamic expressions. The transition systems
are used to define the operational semantics of dynamic expressions.

In addition to the structural equivalence laws for dynamic expressions, obtained by reflexive, symmetric and transitive
closure of the inaction rules, we use the laws, inherited from the context of PBC \cite{BDK01}. Those extra laws
include associativity of sequential, choice and parallel compositions; commutativity of choice and parallel ones;
distributivity of basic relabeling over sequential, choice and parallel ones; collapse of successive relabelings;
neutrality of the identical relabeling.

We need in the next auxiliary notion. Let $E$ be a static expression. The {\em alphabet} of $E$ is defined in a
structural way. Let $(\alpha ,\kappa )\in{\cal PHL},\ (\alpha ,\partial_l^{\bf A})\in{\cal TL},\ i\in\{1,\ldots
,ord({\bf A})\},\ E,F,K\in RegStatExpr$ and $a\in Act$. Then
\begin{enumerate}

\item ${\cal A}((\alpha ,\kappa ))={\cal A}((\alpha ,\partial_l^{\bf A})^i)={\cal A}(\alpha )$;

\item ${\cal A}(E\circ F)={\cal A}(E)\cup{\cal A}(F)\ (\circ\in\{;,\cho ,\|\})$;

\item ${\cal A}(E[f])=f({\cal A}(E))$;

\item ${\cal A}(E\rs a)={\cal A}(E)\setminus\{a,\hat{a}\}$;

\item ${\cal A}(E\sy a)={\cal A}(E)$;

\item ${\cal A}([E*F*K])={\cal A}(E)\cup{\cal A}(F)\cup{\cal A}(K)$.

\end{enumerate}
Let $G$ be a dynamic expression. The {\em alphabet} of $G$ is defined as ${\cal A}(G)={\cal A}(\lfloor G\rfloor )$.

In Table \ref{selawsrssyphm.tab}, we introduce additional structural equivalence laws for dynamic expressions with
restriction and synchronization. Most of the laws have been borrowed from PBC \cite{BDK01}, excepting the last three
laws being new and applied when the restricted action and its coaction are absent or the synchronization pair is
missing. In this table, $E,F\in RegStatExpr,\ G,H\in RegDynExpr,\ a,b\in Act$ and
$\circ\in\{\!\!\rs\!\!,\!\!\sy\!\!\}$.

\begin{table}[h]
\caption{Structural equivalence laws for dynamic expressions with restriction and synchronization}
\label{selawsrssyphm.tab}
\begin{center}
$\begin{array}{|ll|}
\hline
\rule{0mm}{4mm}
G\circ a\circ b\approx G\circ b\circ a & G\circ a\circ a\approx G\circ a\\[1mm]
G\circ\hat{a}\approx G\circ a & (G;E)\circ a\approx (G\circ a);(E\circ a)\\[1mm]
(E;G)\circ a\approx (E\circ a);(G\circ a) & (G\cho E)\circ a\approx (G\circ a)\cho (E\circ a)\\[1mm]
(E\cho G)\circ a\approx (E\circ a)\cho (G\circ a) & (G\| H)\sy a\approx ((G\sy a)\| (H\sy a))\sy a\\[1mm]
(G\| H)\rs a\approx (G\rs a)\| (H\rs a) & G\sy a\rs b\approx G\rs b\sy a\ (a\not\in\{b,\hat{b}\})\\[1mm]
(G\circ a)[f]\approx (G[f])\circ f(a) & [G*E*F]\circ a\approx [(G\circ a)*(E\circ a)*(F\circ a)]\\[1mm]
[E*G*F]\circ a\approx [(E\circ a)*(G\circ a)*(F\circ a)] &
[E*F*G]\circ a\approx [(E\circ a)*(F\circ a)*(G\circ a)]\\[1mm]
G\rs a\approx G,\ a,\hat{a}\not\in{\cal A}(G) &
G\sy a\approx G,\ (a\not\in{\cal A}(G))\vee (\hat{a}\not\in{\cal A}(G))\\[1mm]
\multicolumn{2}{|l|}{(G\| H)\sy a\approx (G\sy a)\| (H\sy a),\ (a,\hat{a}\not\in{\cal A}(G))\vee
(a,\hat{a}\not\in{\cal A}(H))\vee (a\not\in{\cal A}(G)\cup {\cal A}(H))\vee
(\hat{a}\not\in{\cal A}(G)\cup{\cal A}(H))}\\
\hline
\end{array}$
\end{center}
\end{table}

The goal of the extra structural equivalence laws above is to move the restriction and synchronization operators so to
act on the dynamic expressions of minimal length (ideally, so that the restrictions affect single activities, and the
synchronizations affect only the composed activities with the synchronization pair).

Let $G$ be a dynamic expression and $s=[G]_\approx$. The set of {\em all multisets of executions in $s$} is defined as
$Exec(s)=\{\Upsilon\mid\exists H\in s\ \exists\widetilde{H}\ H\stackrel{\Upsilon}{\rightarrow}\widetilde{H}\}$. Here
$H\stackrel{\Upsilon}{\rightarrow}\widetilde{H}$ is an inference by the rules from Table \ref{actrulesphm.tab}.

It can be proved by induction on the structure of expressions that $\Upsilon\in Exec(s)\setminus\{\emptyset\}$ implies
$\exists H\in s\ \Upsilon\in Now(H)$. The reverse statement does not hold in general, since the preconditions in the
action rules disable the executions with the lower-priority types from every $H\in s$.

\begin{example}
Let $K,K'$ be from Example \ref{cannowph.exm} and $s=[K]_\approx =[K']_\approx$. We have
$Now(K)=\{\{(\{a\},\partial_1^{\bf A})^1\}\}$ and $Now(K')=\{\{(\{b\},\partial_2^\emptyset )\}\}$. Since no rule can be
applied to $K$ while the only rules applicable to $K'$ are {\bf C1i} and {\bf Bi}, we get
$Exec(s)=\{\{(\{b\},\partial_2^\emptyset )\}\}$. Then, for $K\in s$ and $\Upsilon =\{(\{a\},\partial_1^{\bf A})^1\}\in
Now(K)$, we obtain $\Upsilon\not\in Exec(s)$.
\label{nowexph.exm}
\end{example}

The state $s$ is {\em p-tangible (probabilistically tangible)}, denoted by $ptang(s)$, if
$Exec(s)\subseteq\nat_{fin}^{\cal PE}$. For a p-tangible state $s$ we always have $\emptyset\in Exec(s)$ by applying
rules {\bf Bpd} and {\bf Be} for each timed multiaction $(\alpha ,\partial_l^{\bf A})$ from $EnaTime(s)$, superscribed
with delay phase $i\ (1<i<ord({\bf A}))$ and overlined, since in those rules ${\cal A}_i<1$ implies ${\cal A}_{ij}>0$
for some phase $j\ (1<j<ord({\bf A}))$. Hence, we may have $Exec(s)=\{\emptyset\}$, when no probabilistic executions
are accomplished and ${\cal A}_i=0$, which implies $\sum_{j=1}^{ord({\bf A})}{\cal A}_{ij}=1$. The state $s$ is {\em
d-tangible (definitely tangible)}, denoted by $dtang(s)$, if $Exec(s)\subseteq\nat_{fin}^{\cal
DE}\setminus\{\emptyset\}$. The state $s$ is {\em tangible}, denoted by $tang(s)$, if $ptang(s)$ or $dtang(s)$, i.e.
$Exec(s)\subseteq\nat_{fin}^{\cal PE}\cup\nat_{fin}^{\cal DE}$. Again, for a tangible state $s$ we may have
$\emptyset\in Exec(s)$ and $Exec(s)=\{\emptyset\}$. Otherwise, the state $s$ is {\em vanishing}, denoted by
$vanish(s)$, and in this case $Exec(s)\subseteq\nat_{fin}^{\cal IL}\setminus\{\emptyset\}$.

Since for every $H\in s,\ Now(H)$ containing the multisets of executions with the lower-priority types is not included
into $Exec(s)$, and the types of states are determined from the highest-priority types of the possible executions, the
state type definitions based on $Now(H),\ H\in s$, and on $Exec(s)$ are consistent.

Note that if $\Upsilon\in Exec(s)$ and $\Upsilon\in\nat_{fin}^{\cal PE}\cup\nat_{fin}^{\cal IL}$ then by rules {\bf
P2p}, {\bf P2i}, {\bf Sy2pd}, {\bf Sy2i} and definition of $Exec(s)\ \forall\Xi\subseteq\Upsilon ,\ \Xi\neq\emptyset$,
we have $\Xi\in Exec(s)$, i.e. $2^\Upsilon\setminus\{\emptyset\}\subseteq Exec(s)$.

Since the inaction rules only set the initial delay phases or distribute and move upper and lower bars along the syntax
of dynamic expressions, all $H\in s$ have the same phase-free underlying static expression $F$. Process expressions
always have a finite length, hence, the number of all (enumerated) activities and the number of all operations in the
syntax of $F$ are finite as well. The action rules {\bf Sy2pd} and {\bf Sy2i} are the only ones that generate new
activities. They result from the handshake synchronization of actions and their conjugates belonging to the multiaction
parts of the first and second constituent execution, respectively. Since we have a finite number of operators $\sy$ in
$F$ and all the multiaction parts of the activities are finite multisets, the number of the new synchronized executions
is also finite. The action rules contribute to $Exec(s)$ (in addition to the empty set) only the sets consisting of
both activities from $F$ being superscribed and the new activities, produced by {\bf Sy2pd} and {\bf Sy2i}, being
superscribed. Since we have a finite number $n$ of all such activities, we get $|Exec(s)|\leq 2^n<\infty$. Thus,
summation and multiplication by elements from the finite set $Exec(s)$ are well-defined. Similar reasoning can be used
to demonstrate that for all dynamic expressions $H$ (not just for those from $s$), $Now(H)$ is a finite~set.

\begin{definition}
The {\em derivation set} of a dynamic expression $G$, denoted by $DR(G)$, is the minimal set such that
\begin{itemize}

\item $[G]_\approx\in DR(G)$;

\item if $[H]_\approx\in DR(G)$ and $\exists\Upsilon\ H\stackrel{\Upsilon}{\rightarrow}\widetilde{H}$ then
$[\widetilde{H}]_\approx\in DR(G)$.

\end{itemize}
\end{definition}

The set of {\em all p-tangible states from $DR(G)$} is denoted by $DR_{PT}(G)$, and the set of {\em all d-tangible
states from $DR(G)$} is denoted by $DR_{DT}(G)$. The set of {\em all tangible states from $DR(G)$} is denoted by
$DR_T(G)=DR_{PT}(G)\cup DR_{DT}(G)$. The set of {\em all vanishing states from $DR(G)$} is denoted by $DR_V(G)$.
Obviously, $DR(G)=DR_T(G)\cup DR_V(G)=DR_{PT}(G)\cup DR_{DT}(G)\cup DR_V(G)$.

Let now $G$ be a dynamic expression and $s,\tilde{s}\in DR(G)$.

Let $\Upsilon\in Exec(s)\setminus\{\emptyset\}$ and for some $H\in s,\ \widetilde{H}\in\tilde{s}$ we have
$H\stackrel{\Upsilon}{\rightarrow}\widetilde{H}$. The {\em probability that the multiset of probabilistic executions
$\Upsilon$ is ready to accomplish in $s$ and move to $\tilde{s}$} or the {\em weight of the multiset of definite
(immediate) executions $\Upsilon$ which is ready to accomplish in $s$ and move to $\tilde{s}$} is

$$PF(\Upsilon ,s,\tilde{s})\!\!=\!\!
\left\{\!\!\!\!
\begin{array}{ll}
\displaystyle\prod_{(\alpha ,\partial_l^{\bf A})^i\in\Upsilon}{\cal A}_i\cdot
\displaystyle\prod_{\{\{(\beta ,\partial_m^{\cal B})^j\}\in Exec(s)\mid
(\beta ,\partial_m^{\cal B})^j\not\in\Upsilon ,\
\Upsilon +\{(\beta ,\partial_m^{\cal B})^j\}\not\in Exec(s)\}}(1-{\cal B}_j)\cdot & \\
\displaystyle\prod_{\{\{(\beta ,\partial_m^{\cal B})^j\}\in Exec(s)\mid
(\beta ,\partial_m^{\cal B})^j\not\in\Upsilon ,\
\Upsilon +\{(\beta ,\partial_m^{\cal B})^j\}\in Exec(s),\ \exists\widetilde{H}\in\tilde{s}\
\overline{(\beta ,\partial_m^{\bf B})^k}\in Sub(\widetilde{H})\}}{\cal B}_{jk}\cdot & \\
\displaystyle\prod_{\{\{(\beta ,\partial_m^{\cal B})^j\}\not\in Exec(s)\mid\exists H\in s\
\overline{(\beta ,\partial_m^{\bf B})^j}\in Sub(H),\ \exists\widetilde{H}\in\tilde{s}\
\overline{(\beta ,\partial_m^{\bf B})^k}\in Sub(\widetilde{H})\}}\!\!{\cal B}_{jk},
& \!\!\!s\!\in\!DR_{PT}(G);\\
\displaystyle\sum_{(\alpha ,\partial_l^{\bf A})^i\in\Upsilon}l, & \!\!\!s\!\in\!DR_{DT}(G);\\
\displaystyle\sum_{(\alpha ,\partial_l^\emptyset )\in\Upsilon}l, & \!\!\!s\!\in\!DR_V(G).
\end{array}
\right.$$

Let $\Upsilon =\emptyset ,\ s\in DR_{PT}(G)$ and for some $H\in s,\ \widetilde{H}\in\tilde{s}$ we have
$H\stackrel{\emptyset}{\rightarrow}\widetilde{H}$. We define

$$PF(\emptyset ,s,\tilde{s})=
\displaystyle\prod_{\{(\beta ,\partial_m^{\cal B})\in EnaTime(s)\mid\exists H\in s\
\overline{(\beta ,\partial_m^{\bf B})^j}\in Sub(H),\ \exists\widetilde{H}\in\tilde{s}\
\overline{(\beta ,\partial_m^{\bf B})^k}\in Sub(\widetilde{H})\}}{\cal B}_{jk}.$$

If $s\in DR_{PT}(G)$ and $Exec(s)\neq\{\emptyset\}$ then $PF(\Upsilon ,s,\tilde{s})$ can be interpreted as a {\em
joint} probability of independent events (in a probability sense, i.e. the probability of intersection of these events
is equal to the product of their probabilities). Each such an event consists in the decision of a particular enabled
timed multiaction to accomplish at the next time step either the corresponding probabilistic execution or the empty
multiset of activities. Thus, every enabled timed multiaction decides probabilistically (using the absorbing DTMC
defining its delay) and independently (from others), if it wants to accomplish the derived probabilistic execution or
the empty multiset of activities in $s$ and move to $\tilde{s}$. If $\Upsilon$ is a multiset of all probabilistic
executions whose accomplishments have been decided in this way and $\Upsilon\in Exec(s)$ then $\Upsilon$ is ready for
accomplishing in $s$ and moving to $\tilde{s}$. The multiplication in the definition is used because it reflects the
probability of the independent events intersection. The {\em first product} for the case $s\in DR_{PT}(G)$ in the above
definition of $PF(\Upsilon ,s,\tilde{s})$ involves the {\em positive} probabilities of all probabilistic executions
{\em included} in $\Upsilon$ being the probabilities to change the current phase of the corresponding enabled timed
mutiaction to the zero phase by accomplishing a probabilistic execution from $\Upsilon$. The {\em second product}
involves the {\em negative} probabilities of all possible in the current phase probabilistic executions {\em
non-included} in $\Upsilon$ and {\em conflicting} with those from $\Upsilon$, i.e. those executions are {\em not}
possible together with $\Upsilon$. When the current phase of each corresponding enabled timed multiaction is changed to
the zero phase due to accomplishing a probabilistic execution from $\Upsilon$, it prevents accomplishing a conflicting
probabilistic execution {\em not} from $\Upsilon$. The {\em third product} involves the execution probabilities of the
{\em empty multisets} of activities when the possible in the current phase probabilistic executions are {\em
non-included} in $\Upsilon$ and are {\em parallel} with those from $\Upsilon$, i.e. those executions are possible
together with $\Upsilon$. When the current phase of each corresponding enabled timed multiaction is changed to the zero
phase due to accomplishing a probabilistic execution from $\Upsilon$, it does not prevent accomplishing a parallel
probabilistic execution {\em not} from $\Upsilon$. It has been decided not to accomplish that probabilistic execution
(by non-including it in $\Upsilon$), and to execute the empty multiset of activities instead. The {\em fourth product}
involves the execution probabilities of the {\em empty multisets} of activities when the probabilistic executions {\em
cannot be accomplished} in the current phase of the corresponding enabled timed multiactions. Since for all timed
multiactions that can accomplish probabilistic executions in particular phases the probabilities of the positive phase
changes from those phases are greater than $0$, for $s\in DR_{PT}(G)$ we have $PF(\emptyset ,s,\tilde{s})\in (0;1]$.
Hence, we always execute at the next time moment the empty multiset of activities in $s$ and move to $\tilde{s}$ with a
certain positive probability.

If $s\in DR_{DT}(G)\cup DR_V(G)$ then $PF(\Upsilon ,s,\tilde{s})$ could be interpreted as the {\em overall
(cumulative)} weight of the definite (immediate, respectively) executions from $\Upsilon$, i.e. the sum of all their
weights. The summation here is used since the weights can be seen as the rewards which are collected \cite{Ros96}. This
means that parallel accomplishment of the definite (immediate) executions has more importance than accomplishing each
single one. The weights of definite (immediate) executions can also be interpreted as bonus rewards of transitions
\cite{BBr01}. The rewards are summed when definite (immediate) executions are accomplished in parallel, because all of
them participated in the accomplishment. In particular, since immediate executions take no time, we prefer to collect
in a step (parallel accomplishment) as many parallel immediate executions as possible to get more progress in
behaviour. Concerning definite executions, only the maximal multisets of them possible from a state occur in the next
moment. Therefore, the steps of definite executions produce maximal overall weights, which are used to calculate
probabilities of alternative maximal steps rather than the cumulative bonuses. Note that this reasoning is the same as
that used to define the weight of synchronized timed (immediate) executions in rules {\bf Sy2pd} and {\bf Sy2i}.

Note that the definition of $PF(\Upsilon ,s,\tilde{s})$ (as well as the definitions of other probability functions
which we shall present) is based on the enumeration of activities which is considered implicit.

Let $\Upsilon\in Exec(s)$. Accomplishing the same multiset of executions in $s$ may result in the states different from
$\tilde{s}$ due to various changes of the positive phases related to executing the empty multisets of activities
accompanying that (non-empty) multiset. The {\em probability that the multiset of probabilistic executions $\Upsilon$
is ready for accomplishing in $s$} or the {\em weight of the multiset of definite (immediate) executions $\Upsilon$
which is ready for accomplishing in $s$} is $PF(\Upsilon ,s)=\sum_{\tilde{s}\in DR(G)}PF(\Upsilon ,s,\tilde{s})$.
Moreover, some other multisets of executions may be ready for accomplishing in $s$, besides $\Upsilon$. Hence, a kind
of conditioning or normalization is needed to calculate the execution probability. The {\em probability to move from
$s$ to $\tilde{s}$ by accomplishing the multiset of executions $\Upsilon$} is

$$PT(\Upsilon ,s,\tilde{s})=\frac{PF(\Upsilon ,s,\tilde{s})}{\sum_{\Xi\in Exec(s)}\sum_{\tilde{s}\in DR(G)}PF(\Xi
,s,\tilde{s})}.$$

If $s\in DR_{PT}(G)$ then $PT(\Upsilon ,s,\tilde{s})$ can be interpreted as the {\em conditional} probability to
accomplish $\Upsilon$ in $s$ calculated using the conditional probability formula in the form of ${\sf
P}(Z|W)=\frac{{\sf P}(Z\cap W)}{{\sf P}(W)}$. The event $Z$ consists in the exclusive accomplishing $\Upsilon$ in $s$
and moving to $\tilde{s}$, hence, ${\sf P}(Z)=PF(\Upsilon ,s,\tilde{s})$. The event $W$ consists in the exclusive
accomplishing of any set (including the empty one) $\Xi\in Exec(s)$ in $s$ (and moving to any $\tilde{s}\in DR(G)$).
Thus, $W=\cup_j Z_j$, where $\forall j,\ Z_j$ are mutually exclusive events (in a probability sense, i.e. intersection
of these events is the empty event) and $\exists i,\ Z=Z_i$. We have ${\sf P}(W)=\sum_j{\sf P}(Z_j)=\sum_{\Xi\in
Exec(s)}PF(\Xi ,s)=\sum_{\Xi\in Exec(s)}\sum_{\tilde{s}\in DR(G)}PF(\Xi ,s,\tilde{s})$, because summation reflects the
probability of the mutually exclusive event union. Since $Z\cap W=Z_i\cap (\cup_j Z_j)=Z_i=Z$, we have ${\sf
P}(Z|W)=\frac{{\sf P}(Z)}{{\sf P}(W)}=\frac{PF(\Upsilon ,s,\tilde{s})}{\sum_{\Xi\in Exec(s)}\sum_{\tilde{s}\in
DR(G)}PF(\Xi ,s,\tilde{s})}$. One can also treat $PT(\Upsilon ,s,\tilde{s})$ and $PF(\Upsilon ,s,\tilde{s}$ as the {\em
actual} and {\em potential} probabilities to accomplish $\Upsilon$ in $s$ and move to $\tilde{s}$, respectively. Note
that for $s\in DR_{PT}(G)$ we have $PT(\emptyset ,s,\tilde{s})\in (0;1]$. Hence, there is a non-zero probability to
execute the empty multiset of activities in $s$ and move to $\tilde{s}$ at the next time moment.

If $s\in DR_{DT}(G)\cup DR_V(G)$ then $PT(\Upsilon ,s,\tilde{s})$ can be interpreted as the weight of the set of
definite (immediate, respectively) executions $\Upsilon$ which is ready to accomplish in $s$ and move to $\tilde{s}$
{\em normalized} by the weights of {\em all} the sets that can be accomplished in $s$ (and moved to any $\tilde{s}\in
DR(G)$). This approach is analogous to that used in the EMPA definition of the probabilities of immediate actions
executable from the same process state \cite{BGo98} (inspired by way in which the probabilities of conflicting
immediate transitions in GSPNs are calculated \cite{Bal07}). The only difference is that we have a step semantics and,
for every set of definite (immediate) executions accomplished in parallel, we should use its cumulative weight. To get
the analogy with EMPA possessing interleaving semantics, we should interpret the weights of immediate actions of EMPA
as the cumulative weights of the sets of definite (immediate) executions of dtsdPBC.

The advantage of our two-stage approach to definition of the probability to execute a set of activities is that the
probability formula $PT(\Upsilon ,s,\tilde{s})$ is valid both for (sets of) timed (i.e. probabilistic or definite) and
immediate executions. It al\-lows one to unify the notation used later while constructing the operational semantics.

Note that the sum of outgoing probabilities for the expressions belonging to the derivations of $G$ is equal to $1$.
More formally, $\forall s\in DR(G)\ \sum_{\Upsilon\in Exec(s)}\sum_{\tilde{s}\in DR(G)}PT(\Upsilon ,s,\tilde{s})=1$.
This, obviously, follows from the definition of $PT(\Upsilon ,s,\tilde{s})$, and guarantees that it defines a
probability distribution.

The {\em probability to move from $s$ to $\tilde{s}$ by accomplishing any multiset of executions} is

$$PM(s,\tilde{s})=\sum_{\{\Upsilon\mid\exists H\in s\ \exists\widetilde{H}\in\tilde{s}\
H\stackrel{\Upsilon}{\rightarrow}\widetilde{H}\}}PT(\Upsilon ,s,\tilde{s}).$$

The summation in the definition above reflects the probability of the mutually exclusive event union, since
$\sum_{\{\Upsilon\mid\exists H\in s,\ \exists\widetilde{H}\in\tilde{s},\
H\stackrel{\Upsilon}{\rightarrow}\widetilde{H}\}}PT(\Upsilon ,s,\tilde{s})=\frac{1}{\sum_{\Xi\in
Exec(s)}\sum_{\tilde{s}\in DR(G)}PF(\Xi ,s,\tilde{s})}\cdot\sum_{\{\Upsilon\mid\exists H\in s,\
\exists\widetilde{H}\in\tilde{s},\ H\stackrel{\Upsilon}{\rightarrow}\widetilde{H}\}}PF(\Upsilon ,s,\tilde{s})$, where
for each $\Upsilon ,\ PF(\Upsilon ,s,\tilde{s})$ is the probability of the exclusive execution of $\Upsilon$ in $s$
and moving to $\tilde{s}$. Note that $\forall s\in DR(G)\ \sum_{\{\tilde{s}\mid\exists H\in s\
\exists\widetilde{H}\in\tilde{s}\ \exists\Upsilon\
H\stackrel{\Upsilon}{\rightarrow}\widetilde{H}\}}PM(s,\tilde{s})=\sum_{\{\tilde{s}\mid\exists H\in s\
\exists\widetilde{H}\in\tilde{s}\ \exists\Upsilon\ H\stackrel{\Upsilon}{\rightarrow}\widetilde{H}\}}
\sum_{\{\Upsilon\mid\exists H\in s\ \exists\widetilde{H}\in\tilde{s}\
H\stackrel{\Upsilon}{\rightarrow}\widetilde{H}\}}PT(\Upsilon ,s,\tilde{s})$,\\
which, by associativity and commutativity of addition, equals $\sum_{\Upsilon\in Exec(s)}\sum_{\tilde{s}\in
DR(G)}PT(\Upsilon ,s,\tilde{s})=1$.

\begin{example}
Let $E=(\{a\},\partial_1^{\bf A})\cho (\{b\},\partial_2^{\bf B})$, where ${\bf A}$ is the transient TPM of the
(discrete) uniform distribution $Unif(1,3)$ ($DPH({\bf A})=Unif(1,3)$) and ${\bf B}$ is the transient TPM of the
geometric distribution $Geom(\frac{1}{3})$ ($DPH({\bf B})=Geom(\frac{1}{3})$), i.e. ${\bf A}$ and ${\bf B}$ are from
Example \ref{dphminmax.exm}. $DR(\overline{E})$ consists of the equivalence classes
$$\begin{array}{l}
s_1=[\overline{(\{a\},\partial_1^{\bf A})^3}\cho (\{b\},\partial_2^{\bf B})^1]_\approx =
[(\{a\},\partial_1^{\bf A})^3\cho\overline{(\{b\},\partial_2^{\bf B})^1}]_\approx ,\\[1mm]
s_2=[\overline{(\{a\},\partial_1^{\bf A})^2}\cho (\{b\},\partial_2^{\bf B})^1]_\approx =
[(\{a\},\partial_1^{\bf A})^2\cho\overline{(\{b\},\partial_2^{\bf B})^1}]_\approx ,\\[1mm]
s_3=[\overline{(\{a\},\partial_1^{\bf A})^1}\cho (\{b\},\partial_2^{\bf B})^1]_\approx =
[(\{a\},\partial_1^{\bf A})^1\cho\overline{(\{b\},\partial_2^{\bf B})^1}]_\approx ,\\[1mm]
s_4=[\underline{(\{a\},\partial_1^{\bf A})\cho (\{b\},\partial_2^{\bf B})}]_\approx .
\end{array}$$
We have $DR_{PT}(\overline{E})=\{s_1,s_2,s_4\},\ DR_{DT}(\overline{E})=\{s_3\}$ and $DR_V(\overline{E})=\emptyset$. The
probabilities to accomplish the mul\-ti\-sets of exe\-cu\-ti\-ons in $s_1$
and move to some state are calculated as follows. Since $Exec(s_1)=\{\emptyset ,\{(\{a\},\partial_1^{\bf A})^3\},\\
\{(\{b\},\partial_2^{\bf B})^1\}\}$, we get $PF(\{(\{a\},\partial_1^{\bf
A})^3\},s_1,s_4)\!=\!\frac{1}{3}\cdot\frac{2}{3}\!=\! \frac{2}{9},\ PF(\{(\{b\},\partial_2^{\bf
B})^1\},s_1,s_4)\!=\!\frac{2}{3}\cdot\frac{1}{3}=\frac{2}{9}$ and $PF(\emptyset
,s_1,s_2)\!=\!\frac{2}{3}\cdot\frac{2}{3}\!=\!\frac{4}{9}$. Then $\sum_{\Xi\in Exec(s_1)}\sum_{\tilde{s}\in
DR(\overline{E})}PF(\Xi ,s_1,\tilde{s})\!=\!\frac{2}{9}+\frac{2}{9}+\frac{4}{9}\!=\!\frac{8}{9}$. Thus,
$PT(\{(\{a\},\partial_1^{\bf A})^3\},s_1,s_4)=\frac{\frac{2}{9}}{\frac{8}{9}}=\frac{1}{4},\
PT(\{((\{b\},\partial_2^{\bf B})^1\},s_1,s_4)=\frac{\frac{2}{9}}{\frac{8}{9}}=\frac{1}{4}$ and $PT(\emptyset
,s_1,s_2)=PM(s_1,s_2)=\frac{\frac{4}{9}}{\frac{8}{9}}=\frac{1}{2}$. Finally, $PM(s_1,s_4)=PT(\{\{a\},\partial_1^{\bf
A})^3\},s_1,s_4)+PT(\{(\{b\},\partial_2^{\bf B})^1\},s_1,s_4)=\frac{1}{4}+\frac{1}{4}=\frac{1}{2}$. In Table
\ref{trprobpht.tab}, the calculation of the probability functions $PF(\Upsilon ,s_1,\tilde{s}),\ PT(\Upsilon
,s_1,\tilde{s}),\ PM(s_1,\tilde{s})$ is explained, where $\Upsilon\in Exec(s_1),\ \tilde{s}\in\{s_2,s_4\}$ (the value
of $\tilde{s}$ is depicted in the parentheses near the value of $PM(s_1,\tilde{s})$) and $\Sigma =\sum_{\Xi\in
Exec(s_1)}\sum_{\tilde{s}\in DR(\overline{E})}PX(\Xi ,s_1,\tilde{s}),\ PX\in\{PF,PT,PM\}$.

Let $E'=(\{a\},\partial_l^\emptyset )\cho (\{b\},\partial_m^\emptyset )$, where $l,m\in\real_{>0}$. $DR(\overline{E'})$
consists of the equivalence classes
$$\begin{array}{l}
s_1'=[\overline{(\{a\},\partial_l^\emptyset )}\cho (\{b\},\partial_m^\emptyset )]_\approx =
[(\{a\},\partial_l^\emptyset )\cho\overline{(\{b\},\partial_m^\emptyset )}]_\approx,\\
s_2'=[\underline{(\{a\},\partial_l^\emptyset )\cho (\{b\},\partial_m^\emptyset )}]_\approx .
\end{array}$$
We have $DR_{PT}(\overline{E'})=\{s_2'\},\ DR_{DT}(\overline{E'})=\emptyset$ and $DR_V(\overline{E'})=\{s_1'\}$. The
probabilities to accomplish the mul\-ti\-sets of exe\-cu\-ti\-ons in $s_1'$ and move to some state are calculated as
follows. Since $Exec(s_1')\!=\!\{\{(\{a\},\partial_l^\emptyset )\},\{(\{b\},\partial_m^\emptyset )\}\}$, we get
$PF(\{(\{a\},\partial_l^\emptyset )\},s_1',s_2')\!=\!l$ and $PF(\{(\{b\},\partial_m^\emptyset )\},s_1',s_2')\!=\!m$.
Then $\sum_{\Xi\in Exec(s_1')}\sum_{\tilde{s}\in DR(\overline{E'})}PF(\Xi ,s_1',\tilde{s})\!=\!l+m$. Thus,
$PT(\{(\{a\},\partial_l^\emptyset )\},s_1',s_2')=\frac{l}{l+m}$ and $PT(\{(\{a\},\partial_l^\emptyset
)\},s_1',s_2')=\frac{m}{l+m}$. Finally, $PM(s_1',s_2')=PT(\{(\{a\},\partial_l^\emptyset
)\},s_1',s_2')+PT(\{(\{b\},\partial_m^\emptyset )\},s_1',s_2')=\frac{l}{l+m}+\frac{m}{l+m}=1$. In Table
\ref{trprobphi.tab}, the calculation of the pro\-ba\-bi\-li\-ty func\-ti\-ons $PF(\Upsilon ',s_1',\tilde{s}'),\
PT(\Upsilon ',s_1',\tilde{s}'),\ PM(s_1',\tilde{s}')$ is explained, where $\Upsilon '\in Exec(s_1'),\
\tilde{s}'\in\{s_2'\}$ (the value of $\tilde{s}'$ is de\-pic\-ted in the parentheses near the va\-lue of
$PM(s_1',\tilde{s}')$) and $\Sigma '=\sum_{\Xi\in Exec(s_1')}\sum_{\tilde{s}'\in DR(\overline{E'})}PX(\Xi
,s_1',\tilde{s}'),\\
PX\in\{PF,PT,PM\}$.
\label{trprobph.exm}
\end{example}

\begin{table}
\caption{Calculation of the probability functions $PF,\ PT,\ PM$ for $s_1\in DR(\overline{E})$ and
$E=(\{a\},\partial_1^{\bf A})\cho (\{b\},\partial_2^{\bf B})$, where $DPH({\bf A})=Unif(1,3)$ and
$DPH({\bf B})=Geom(\frac{1}{3})$}
\label{trprobpht.tab}
\begin{center}
$\begin{array}{|c||c|c|c|c|}
\hline
s_1\backslash\Upsilon ,\tilde{s}& \emptyset ,s_2 & \{(\{a\},\partial_1^{\bf A})^3\},s_4 &
\{(\{b\},\partial_2^{\bf B})^1\},s_4 & \Sigma\\
\hline\hline
PF & \frac{4}{9} & \frac{2}{9} & \frac{2}{9} & \frac{8}{9}\\
\hline
PT & \frac{1}{2} & \frac{1}{4} & \frac{1}{4} & 1\\
\hline
PM & \frac{1}{2}\ (s_2) & \multicolumn{2}{c|}{\frac{1}{2}\ (s_4)} & 1\\
\hline
\end{array}$
\end{center}
\end{table}

\begin{table}
\caption{Calculation of the probability functions $PF,\ PT,\ PM$ for $s_1'\in DR(\overline{E}')$ and
$E'=(\{a\},\partial_l^\emptyset )\cho (\{b\},\partial_m^\emptyset )$}
\label{trprobphi.tab}
\begin{center}
$\begin{array}{|c||c|c|c|}
\hline
s_1'\backslash\Upsilon ',\tilde{s}'& \{(\{a\},\partial_l^\emptyset )\},s_2' & \{(\{b\},\partial_m^\emptyset )\},s_2' &
\Sigma '\\
\hline\hline
PF & l & m & l+m\\
\hline
PT & \frac{l}{l+m} & \frac{m}{l+m} & 1\\
\hline
PM & \multicolumn{2}{c|}{1\ (s_2')} & 1\\
\hline
\end{array}$
\end{center}
\end{table}

Let $G$ be a dynamic expression, $s\in DR(G)$ and $\Upsilon\in Exec(s)$. The delay phases discarding
$\downharpoonleft\!\!\Upsilon$ of $\Upsilon$ is obtained by removing all delay phase superscripts from the executions
contained in the multisets belonging to $\Upsilon$. Then we get $\downharpoonleft\!\!\Upsilon\in\nat_{fin}^{\cal TL}$.

\begin{definition}
Let $G$ be a dynamic expression. The {\em (labeled probabilistic) transition system} of $G$ is a quadruple
$TS(G)=(S_G,L_G,{\cal T}_G,s_G)$, where
\begin{itemize}

\item the set of {\em states} is $S_G=DR(G)$;

\item the set of {\em labels} is $L_G=\nat_{fin}^{\cal PHL}\times (0;1]$;

\item the set of {\em transitions} is ${\cal T}_G=\{(s,(\Upsilon ,PT(\Upsilon ,s,\tilde{s})),\tilde{s})\mid
s,\tilde{s}\in DR(G),\ \exists H\in s\ \exists\widetilde{H}\in\tilde{s}\
H\stackrel{\Upsilon}{\rightarrow}\widetilde{H}\}\cup\\
\{(s,(\emptyset ,1),s)\mid s\in DR(G),\ Exec(s)=\emptyset\}$;

\item the {\em initial state} is $s_G=[G]_\approx$.

\end{itemize}
\end{definition}

\begin{example}
Let $E$ be from Example \ref{overlinph.exm}. The next inferences by rules {\bf C2e} and {\bf Be} are possible from the
elements of $[\overline{E}]_\approx$:
$$\begin{array}{l}
\overline{(\{a\},\partial_1^{\bf A})\cho (\{b\},\partial_2^{\bf B})}\approx
\overline{(\{a\},\partial_1^{\bf A})^3}\cho (\{b\},\partial_2^{\bf B})^1\stackrel{\emptyset}{\rightarrow}
\overline{(\{a\},\partial_1^{\bf A})^2}\cho (\{b\},\partial_2^{\bf B})^1,\\[1mm]
\overline{(\{a\},\partial_1^{\bf A})\cho (\{b\},\partial_2^{\bf B})}\approx
(\{a\},\partial_1^{\bf A})^3\cho\overline{(\{b\},\partial_2^{\bf B})^1}\stackrel{\emptyset}{\rightarrow}
(\{a\},\partial_1^{\bf A})^2\cho\overline{(\{b\},\partial_2^{\bf B})^1}.
\end{array}$$

The both inferences suggest the empty move transition $[\overline{E}]_\approx\stackrel{\emptyset}{\rightarrow}
[\overline{(\{a\},\partial_1^{\bf A})^2\cho (\{b\},\partial_2^{\bf B})^1}]_\approx\neq [\overline{E}]_\approx$. The
intuition is that the phase indicator of the enabled timed multiaction $(\{a\},\partial_1^{\bf A})$ is changed in the
both cases, whenever it is overlined or not.
\label{determph.exm}
\end{example}

In the above definition, ${\cal T}_G$ consists of two distinct sets. The first set collects the {\em standard}
transitions, de\-fi\-ned by the rules $H\stackrel{\Upsilon}{\rightarrow}\widetilde{H}$ from Table \ref{actrulesphm.tab}
and the probability function $PT(\Upsilon ,s,\tilde{s})$. The second set collects the {\em ad\-di\-ti\-o\-nal}
self-loop empty set transitions with probability $1$, associated with the terminal states in which no multiset of
executions exists, even the empty (multi)set of executions. The reason is that from them there is no transition that is
defined by the rules from Table \ref{actrulesphm.tab}. Those extra self-loops should be added ``artificially'' to the
transition system in order to guarantee that the Markov chain to be extracted from it has no states with zero exit
probabilities (they should be equal to $1$ by the definition of Markov chains). The situation when no transition exists
from a state may happen when all multisets of executions from a d-tangible (or vanishing) state contain as executions
only the superscribed timed (or immediate) multiactions, later affected by restriction. Since the state is d-tangible
or vanishing, all multisets of executions from it contain no empty set as the element that would remain after applying
the restriction. Particularly, for $s=[\overline{(\{a\},\partial_1^1)}\rs a]_\approx$ we have $Exec(s)=\emptyset$, and
for $s'=[\overline{(\{a\},\partial_1^\emptyset )}\rs a]_\approx$ we have $Exec(s')=\emptyset$, hence, the extra
self-loops with the empty set and probability $1$ should be associated with $s$ and $s'$.

The definition of $TS(G)$ is correct, i.e. for every state, the sum of the probabilities of all the transitions
starting from it is $1$. This is guaranteed by the note after the definition of $PT(\Upsilon ,s,\tilde{s})$, and by the
extra self-loops with the empty set and probability $1$ in the terminal states without executions. Thus, we have
defined a {\em generative} model of probabilistic processes, according to the classification from \cite{GSS95}. The
reason is that the sum of the probabilities of the transitions with all possible labels should be equal to $1$, not
only of those with the same labels (up to enumeration of activities they include) as in the {\em reactive} models, and
we do not have a nested probabilistic choice as in the {\em stratified} models.

The transition system $TS(G)$ associated with a dynamic expression $G$ describes all the steps (parallel
accomplishments) that occur at discrete time moments with some (one-step) probability and consist of multisets of
executions. Every step consisting of probabilistic (definite, respectively) executions or the empty step (i.e. that
consisting of the empty multiset of activities) occurs instantly after one discrete time unit delay. Each step
consisting of immediate executions occurs instantly without any delay. The step can change the current state to a
different one. The states are the structural equivalence classes of dynamic expressions obtained by application of
action rules starting from the expressions belonging to $[G]_\approx$. A transition $(s,(\Upsilon ,{\cal
P}),\tilde{s})\in{\cal T}_G$ will be written as $s\stackrel{\Upsilon}{\rightarrow}_{\cal P}\tilde{s}$. It is
interpreted as follows: the probability to change from state $s$ to $\tilde{s}$ as a result of accomplishing $\Upsilon$
is ${\cal P}$.

The empty move from $s$ in the form of $s\stackrel{\emptyset}{\rightarrow}_{\cal P}s$ is called the {\em empty loop}.
For d-tangible and vanishing states $\Upsilon$ cannot be the empty multiset, since we must accomplish some immediate
(definite, respectively) executions from them at the current (next, respectively) time moment.

The step probabilities belong to the interval $(0;1]$, being $1$ in the case when the only transition from an
p-tangible state $s$ is the empty move one $s\stackrel{\emptyset}{\rightarrow}_1 \tilde{s}$, or if there is just a
single transition from a d-tangible or a vanishing state.

We write $s\stackrel{\Upsilon}{\rightarrow}\tilde{s}$ if $\exists{\cal P}\ s\stackrel{\Upsilon}{\rightarrow}_{\cal
P}\tilde{s}$ and $s\rightarrow\tilde{s}$ if $\exists\Upsilon\ s\stackrel{\Upsilon}{\rightarrow}\tilde{s}$.

The first equivalence we are going to introduce is isomorphism which is a coincidence of systems up to renaming of
their components or states.

\begin{definition}
Let $G,G'$ be dynamic expressions and $TS(G)\!=\!(S_G,L_G,{\cal T}_G,s_G),TS(G')\!=\!(S_{G'},L_{G'},{\cal
T}_{G'},s_{G'})$ be their transition systems. A mapping $\beta :S_G\rightarrow S_{G'}$ is an {\em isomorphism} between
$TS(G)$ and $TS(G')$, denoted by $\beta :TS(G)\simeq TS(G')$, if
\begin{enumerate}

\item $\beta$ is a bijection such that $\beta (s_G)=s_{G'}$;

\item $\forall s,\tilde{s}\in S_G\ \forall\Upsilon\ s\stackrel{\Upsilon}{\rightarrow}_{\cal P}\tilde{s}\
\Leftrightarrow\ \beta (s)\stackrel{\Upsilon}{\rightarrow}_{\cal P}\beta (\tilde{s})$.

\end{enumerate}
Two transition systems $TS(G)$ and $TS(G')$ are {\em isomorphic}, denoted by $TS(G)\!\simeq\!TS(G')$, if
$\exists\beta\!:\!TS(G)\!\simeq\!TS(G')$.
\end{definition}

Transition systems of static expressions can be defined as well. For $E\in RegStatExpr$, let $TS(E)=TS(\overline{E})$.

\begin{definition}
Two dynamic expressions $G$ and $G'$ are {\em equivalent with respect to transition systems}, denoted by $G=_{ts}G'$,
if $TS(G)\simeq TS(G')$.
\end{definition}

\subsection{Examples of transition systems}

We now present a series of examples that demonstrate how to construct the transition systems of the dynamic expressions
that include various compositions of probabilistic, definite and immediate executions. In the transition systems, the
p-tangible and d-tangible states are depicted in ordinary and double ovals, respectively, and the vanishing ones are
depicted in boxes. To simplify the graphical representation, the singleton multisets of executions are written without
outer braces.

\begin{example}
Let $E_1=(\{a\},\partial_1^{\bf A}),\ E_2=(\{b\},\partial_2^{\bf B})$ and $E=E_1\cho E_2=(\{a\},\partial_1^{\bf A})\cho
(\{b\},\partial_2^{\bf B})$. Here ${\bf A}$ is the transient TPM of the (discrete) uniform distribution $Unif(1,2)$
with the value border parameters $1,2$ and support $\{1,2\}$ ($DPH({\bf A})=Unif(1,2)$). Next, ${\bf B}$ is the
transient TPM of the (discrete) uniform distribution $Unif(2,3)$ and with the value border parameters $2,3$ and support
$\{2,3\}$ ($DPH({\bf B})=Unif(2,3)$). The transient TPMs ${\bf A},\ {\bf B}$, and the corresponding vectors of the
transition pro\-ba\-bi\-li\-ti\-es towards absorption ${\bf a},\ {\bf b}$ are

$${\bf A}=\left(\begin{array}{cc}
0 & \frac{1}{2}\\
0 & 0
\end{array}\right),\
{\bf a}=\left(\begin{array}{c}
\frac{1}{2}\\
1
\end{array}\right),\
{\bf B}=\left(\begin{array}{ccc}
0 & 1 & 0\\
0 & 0 & \frac{1}{2}\\
0 & 0 & 0
\end{array}\right),\
{\bf b}=\left(\begin{array}{c}
0\\
\frac{1}{2}\\
1
\end{array}\right).$$

$DR(\overline{E_1})$ consists of the equivalence classes

$$\begin{array}{lll}
s_{11}=[\overline{(\{a\},\partial_1^{\bf A})^2}]_\approx , &
s_{12}=[\overline{(\{a\},\partial_1^{\bf A})^1}]_\approx , &
s_{13}=[\underline{(\{a\},\partial_1^{\bf A})}]_\approx .
\end{array}$$

We have $DR_{PT}(\overline{E_1})=\{s_{11},s_{13}\},\ DR_{DT}(\overline{E_1})=\{s_{12}\}$ and
$DR_V(\overline{E_1})=\emptyset$.

$DR(\overline{E_2})$ consists of the equivalence classes

$$\begin{array}{llll}
s_{21}=[\overline{(\{b\},\partial_2^{\bf B})^3}]_\approx , &
s_{21}=[\overline{(\{b\},\partial_2^{\bf B})^2}]_\approx , &
s_{21}=[\overline{(\{b\},\partial_2^{\bf B})^1}]_\approx , &
s_{22}=[\underline{(\{b\},\partial_2^{\bf B})}]_\approx .
\end{array}$$

We have $DR_{PT}(\overline{E_2})=\{s_{21},s_{22},s_{24}\},\ DR_{DT}(\overline{E_2})=\{s_{23}\}$ and
$DR_V(\overline{E_2})=\emptyset$.

$DR(\overline{E})$ consists of the equivalence classes

$$\begin{array}{l}
s_1=[\overline{(\{a\},\partial_1^{\bf A})^2}\cho (\{b\},\partial_2^{\bf B})^3]_\approx =
[(\{a\},\partial_1^{\bf A})^2\cho\overline{(\{b\},\partial_2^{\bf B})^3}]_\approx ,\\[1mm]
s_2=[\overline{(\{a\},\partial_1^{\bf A})^1}\cho (\{b\},\partial_2^{\bf B})^2]_\approx =
[(\{a\},\partial_1^{\bf A})^1\cho\overline{(\{b\},\partial_2^{\bf B})^2}]_\approx ,\\[1mm]
s_3=[\underline{(\{a\},\partial_1^{\bf A})\cho (\{b\},\partial_2^{\bf B})}]_\approx .
\end{array}$$

We have $DR_{PT}(\overline{E})=\{s_1,s_3\},\ DR_{DT}(\overline{E})=\{s_2\}$ and $DR_V(\overline{E})=\emptyset$. In
Figure \ref{tschoufuf.fig}, the transition systems $TS(\overline{E_1}),\ TS(\overline{E_2})$ and $TS(\overline{E})$ are
shown. Note that $TS(\overline{E})\simeq TS(\overline{E_1})$.

This example demonstrates a choice between two uniformly delayed timed multiactions with different (one time unit
shifted) delay ranges (discrete intervals). It shows that the uniformly delayed multiaction $(\{a\},\partial_1^{\bf
A})$ with a less delay range $\{1,2\}$ always accomplish its executions first, hence, the choice is resolved in favour
of it in any case and an absorbing state is then reached, so that the uniformly delayed multiaction
$(\{b\},\partial_2^{\bf B})$ with a greater delay range $\{2,3\}$ can never accomplish its executions.
\label{tschoufuf.exm}
\end{example}

\begin{figure}
\begin{center}
\includegraphics{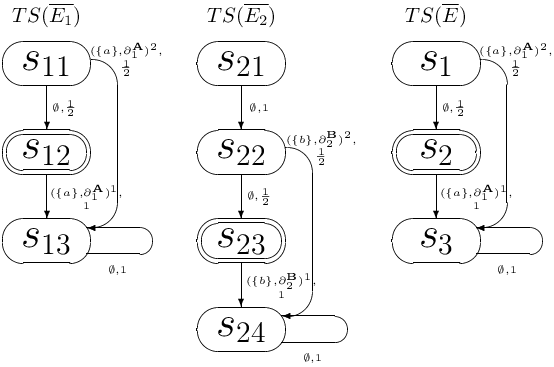}
\end{center}
\caption{The transition systems of $\overline{E_1},\ \overline{E_2}$ and $\overline{E}$ for $E_1=(\{a\},\partial_1^{\bf
A}),\ E_2=(\{b\},\partial_2^{\bf B})$ and $E=E_1\cho E_2$, where $DPH({\bf A})=Unif(1,2)$ and $DPH({\bf B})=Unif(2,3)$}
\label{tschoufuf.fig}
\end{figure}

\begin{example}
Let $E_1=(\{a\},\partial_1^{\bf A}),\ E_2=(\{b\},\partial_2^{\bf B})$ and $E=E_1\| E_2=(\{a\},\partial_1^{\bf A})\|
(\{b\},\partial_2^{\bf B})$. Here ${\bf A}$ is the
transient TPM of the (discrete) uniform distribution $Unif(1,2)$ ($DPH({\bf A})=Unif(1,2)$),
and ${\bf B}$ is the
transient TPM of the (discrete) uniform distribution $Unif(2,3)$ ($DPH({\bf B})=Unif(2,3)$), i.e. $E_1$ and $E_2$ are
from Example \ref{tschoufuf.exm}.

$DR(\overline{E})$ consists of the equivalence classes

$$\begin{array}{lll}
s_1=[\overline{(\{a\},\partial_1^{\bf A})^2}\|\overline{(\{b\},\partial_2^{\bf B})^3}]_\approx , &
s_2=[\overline{(\{a\},\partial_1^{\bf A})^1}\|\overline{(\{b\},\partial_2^{\bf B})^2}]_\approx , &
s_3=[\underline{(\{a\},\partial_1^{\bf A})}\|\overline{(\{b\},\partial_2^{\bf B})^1}]_\approx ,\\[1mm]
s_4=[\underline{(\{a\},\partial_1^{\bf A})}\|\overline{(\{b\},\partial_2^{\bf B})^2}]_\approx , &
s_5=[\underline{(\{a\},\partial_1^{\bf A})\| (\{b\},\partial_2^{\bf B})}]_\approx . &
\end{array}$$

We have $DR_{PT}(\overline{E})=\{s_1,s_4,s_5\},\ DR_{DT}(\overline{E})=\{s_2,s_3\}$ and $DR_V(\overline{E})=\emptyset$.
In Figure \ref{tsparufuf.fig}, the transition systems $TS(\overline{E_1}),\ TS(\overline{E_2})$ and $TS(\overline{E})$
are shown.

This example demonstrates a parallel composition of two uniformly delayed timed multiactions with different (one time
unit shifted) delay ranges (discrete intervals). It shows that from the two parallel timed multiactions, that
$(\{a\},\partial_1^{\bf A})$ with a less delay range $\{1,2\}$ accomplishes its executions first in any case. Next, the
executions accomplishment of the uniformly delayed multiaction $(\{b\},\partial_2^{\bf B})$ with a greater delay range
$\{2,3\}$ leads to an ab\-sor\-bing state. Thus, in spite of parallelism of those two timed multiactions, they
accomplish executions se\-quen\-ti\-al\-ly in fact, in the increasing order of their (different) delay ranges.
\label{tsparufuf.exm}
\end{example}

\begin{figure}
\begin{center}
\includegraphics{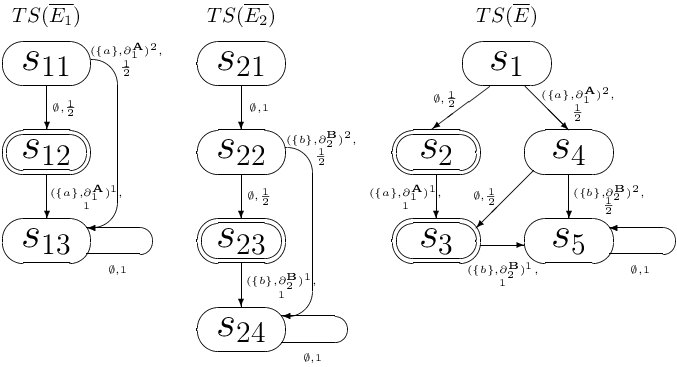}
\end{center}
\caption{The transition systems of $\overline{E_1},\ \overline{E_2}$ and $\overline{E}$ for $E_1=(\{a\},\partial_1^{\bf
A}),\ E_2=(\{b\},\partial_2^{\bf B})$ and $E=E_1\| E_2$, where $DPH({\bf A})=Unif(1,2)$ and $DPH({\bf B})=Unif(2,3)$}
\label{tsparufuf.fig}
\end{figure}

\begin{example}
Let $E_1=(\{a\},\partial_1^{\bf A}),\ E_2=(\{b\},\partial_2^{\bf B})$ and $E=E_1\cho E_2=(\{a\},\partial_1^{\bf A})\cho
(\{b\},\partial_2^{\bf B})$. Here ${\bf A}$ is the transient TPM of the (discrete) uniform distribution $Unif(1,3)$
($DPH({\bf A})=Unif(1,3)$) and ${\bf B}$ is the transient TPM of the geometric distribution $Geom(\frac{1}{3})$
($DPH({\bf B})=Geom(\frac{1}{3})$), i.e. ${\bf A}$ and ${\bf B}$ are from Example \ref{dphminmax.exm}.

$DR(\overline{E_1})$ consists of the equivalence classes

$$\begin{array}{llll}
s_{11}=[\overline{(\{a\},\partial_1^{\bf A})^3}]_\approx , &
s_{12}=[\overline{(\{a\},\partial_1^{\bf A})^2}]_\approx , &
s_{13}=[\overline{(\{a\},\partial_1^{\bf A})^1}]_\approx , &
s_{14}=[\underline{(\{a\},\partial_1^{\bf A})}]_\approx .
\end{array}$$

We have $DR_{PT}(\overline{E_1})=\{s_{11},s_{12},s_{14}\},\ DR_{DT}(\overline{E_1})=\{s_{13}\}$ and
$DR_V(\overline{E_1})=\emptyset$.

$DR(\overline{E_2})$ consists of the equivalence classes

$$\begin{array}{ll}
s_{21}=[\overline{(\{b\},\partial_2^{\bf B})^1}]_\approx , &
s_{22}=[\underline{(\{b\},\partial_2^{\bf B})}]_\approx .
\end{array}$$

We have $DR_{PT}(\overline{E_2})=\{s_{21},s_{22}\},\ DR_{DT}(\overline{E_2})=\emptyset =DR_V(\overline{E_2})$.

$DR(\overline{E})$ consists of the equivalence classes

$$\begin{array}{l}
s_1=[\overline{(\{a\},\partial_1^{\bf A})^3}\cho (\{b\},\partial_2^{\bf B})^1]_\approx =
[(\{a\},\partial_1^{\bf A})^3\cho\overline{(\{b\},\partial_2^{\bf B})^1}]_\approx ,\\[1mm]
s_2=[\overline{(\{a\},\partial_1^{\bf A})^2}\cho (\{b\},\partial_2^{\bf B})^1]_\approx =
[(\{a\},\partial_1^{\bf A})^2\cho\overline{(\{b\},\partial_2^{\bf B})^1}]_\approx ,\\[1mm]
s_3=[\overline{(\{a\},\partial_1^{\bf A})^1}\cho (\{b\},\partial_2^{\bf B})^1]_\approx =
[(\{a\},\partial_1^{\bf A})^1\cho\overline{(\{b\},\partial_2^{\bf B})^1}]_\approx ,\\[1mm]
s_4=[\underline{(\{a\},\partial_1^{\bf A})\cho (\{b\},\partial_2^{\bf B})}]_\approx .
\end{array}$$

We have $DR_{PT}(\overline{E})=\{s_1,s_2,s_4\},\ DR_{DT}(\overline{E})=\{s_3\}$ and $DR_V(\overline{E})=\emptyset$. In
Figure \ref{tschoufge.fig}, the transition systems $TS(\overline{E_1}),\ TS(\overline{E_2})$ and $TS(\overline{E})$ are
shown.

This example demonstrates a choice between a uniformly delayed timed multiaction and a geometrically delayed one. It
shows that the geometrically delayed multiaction $(\{b\},\partial_2^{\bf B})$ can accomplish the probabilistic
execution $(\{b\},\partial_2^{\bf B})^1$ until the phase of the uniformly delayed multiaction $(\{a\},\partial_1^{\bf
A})$ beco\-mes $1$, after which only the definite execution $(\{a\},\partial_1^{\bf A})^1$ can be accomplished by
$(\{a\},\partial_1^{\bf A})$ in the next moment, leading to an absorbing state. Thus, in our setting, a uniformly
delayed multiaction that cannot accomplish definite executions (but only probabilistic executions or the empty set of
activities) in the next moment and whose phase indicator is still active may be interrupted (preempted) by a
geometrically delayed multiaction that accomplishes its probabilistic execution instead.
\label{tschoufge.exm}
\end{example}

\begin{figure}
\begin{center}
\includegraphics{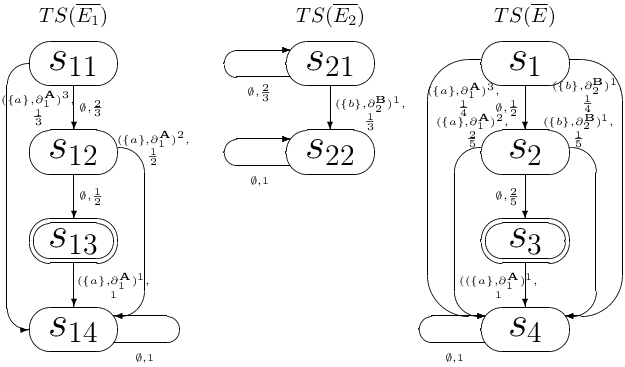}
\end{center}
\caption{The transition systems of $\overline{E_1},\ \overline{E_2}$ and $\overline{E}$ for $E_1=(\{a\},\partial_1^{\bf
A}),\ E_2=(\{b\},\partial_2^{\bf B})$ and $E=E_1\cho E_2$, where $DPH({\bf A})=Unif(1,3)$ and
$DPH({\bf B})=Geom(\frac{1}{3})$}
\label{tschoufge.fig}
\end{figure}

\begin{example}
Let $E_1=(\{a\},\partial_1^{\bf A}),\ E_2=(\{b\},\partial_2^{\bf B})$ and $E=E_1\|E_2=(\{a\},\partial_1^{\bf A})\|
(\{b\},\partial_2^{\bf B})$. Here ${\bf A}$ is the transient TPM of the (discrete) uniform distribution $Unif(1,3)$
($DPH({\bf A})=Unif(1,3)$) and ${\bf B}$ is the transient TPM of the geometric distribution $Geom(\frac{1}{3})$
($DPH({\bf B})=Geom(\frac{1}{3})$), i.e. $E_1$ and $E_2$ are from Example \ref{tschoufge.exm}.

$DR(\overline{E})$ consists of the equivalence classes

$$\begin{array}{lll}
s_1=[\overline{(\{a\},\partial_1^{\bf A})^3}\|\overline{(\{b\},\partial_2^{\bf B})^1}]_\approx , &
s_2=[\overline{(\{a\},\partial_1^{\bf A})^2}\|\overline{(\{b\},\partial_2^{\bf B})^1}]_\approx , &
s_3=[\overline{(\{a\},\partial_1^{\bf A})^1}\|\overline{(\{b\},\partial_2^{\bf B})^1}]_\approx ,\\[1mm]
s_4=[\underline{(\{a\},\partial_1^{\bf A})}\|\overline{(\{b\},\partial_2^{\bf B})^1}]_\approx , &
s_5=[\overline{(\{a\},\partial_1^{\bf A})^2}\|\underline{(\{b\},\partial_2^{\bf B})}]_\approx , &
s_6=[\overline{(\{a\},\partial_1^{\bf A})^1}\|\underline{(\{b\},\partial_2^{\bf B})}]_\approx ,\\[1mm]
s_7=[\underline{(\{a\},\partial_1^{\bf A})\| (\{b\},\partial_2^{\bf B})}]_\approx . & &
\end{array}$$

We have $DR_{PT}(\overline{E})=\{s_1,s_2,s_4,s_5,s_7\},\ DR_{DT}(\overline{E})=\{s_3,s_6\}$ and
$DR_V(\overline{E})=\emptyset$. In Figure \ref{tsparufge.fig}, the transition systems $TS(\overline{E_1}),\
TS(\overline{E_2})$ and $TS(\overline{E})$ are shown.

This example demonstrates a parallel composition of a uniformly delayed timed multiaction and a geometrically delayed
one. It shows that the geometrically delayed multiaction $(\{b\},\partial_2^{\bf B})$ can accomplish the probabilistic
execution $(\{b\},\partial_2^{\bf B})^1$ until the phase of the uniformly delayed multiaction $(\{a\},\partial_1^{\bf
A})$ beco\-mes $1$, after which only the definite execution $(\{a\},\partial_1^{\bf A})^1$ can be accomplished by
$(\{a\},\partial_1^{\bf A})$ in the next moment. The accomplishment of the latter execution leads to an absorbing state
either directly or indirectly, via executing a possible empty loop, followed (via sequential composition) by
accomplishing from $(\{b\},\partial_2^{\bf B})$ the probabilistic execution $(\{b\},\partial_2^{\bf B})^1$ that has not
been accomplished in the preceding states.
\label{tsparufge.exm}
\end{example}

\begin{figure}
\begin{center}
\includegraphics{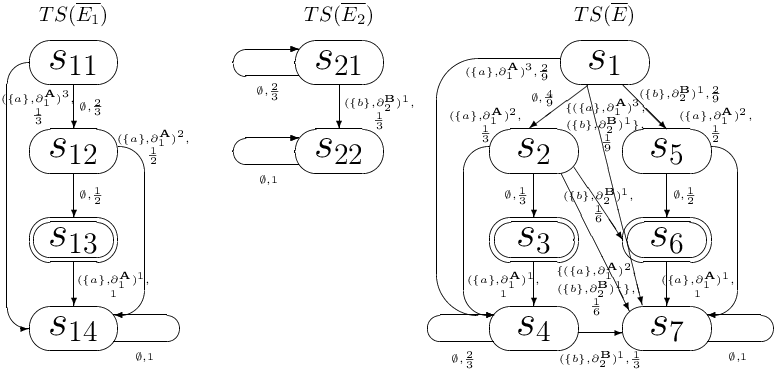}
\end{center}
\caption{The transition systems of $\overline{E_1},\ \overline{E_2}$ and $\overline{E}$ for $E_1=(\{a\},\partial_1^{\bf
A}),\ E_2=(\{b\},\partial_2^{\bf B})$ and $E=E_1\|E_2$, where $DPH({\bf A})=Unif(1,3)$ and
$DPH({\bf B})=Geom(\frac{1}{3})$}
\label{tsparufge.fig}
\end{figure}

\begin{example}
Let $E_1=(\{a\},\partial_1^{\bf A}),\ E_2=(\{\hat{a}\},\partial_2^{\bf B})$ and $E=(E_1\|E_2)\sy a=
((\{a\},\partial_1^{\bf A})\| (\{\hat{a}\},\partial_2^{\bf B}))\sy a$. Here ${\bf A}$ is the transient TPM of the
(discrete) uniform distribution $Unif(1,3)$ ($DPH({\bf A})=Unif(1,3)$), i.e. $E_1$ is from Example \ref{tschoufge.exm}.
Next, ${\bf B}$ is the transient TPM of the (discrete) uniform distribution $Unif(2,4)$ with the value border
parameters $2,4$ and support $\{2,3,4\}$ ($DPH({\bf B})=Unif(2,4)$). The transient TPM ${\bf B}$ and the corresponding
vector of the transition pro\-ba\-bi\-li\-ti\-es towards absorption ${\bf b}$ are

$${\bf B}=\left(\begin{array}{cccc}
0 & 1 & 0 & 0\\
0 & 0 & \frac{2}{3} & 0\\
0 & 0 & 0 & \frac{1}{2}\\
0 & 0 & 0 & 0
\end{array}\right),\
{\bf b}=\left(\begin{array}{c}
0\\
\frac{1}{3}\\
\frac{1}{2}\\
1
\end{array}\right).$$

$DR(\overline{E_2})$ consists of the equivalence classes

$$\begin{array}{lllll}
s_{21}=[\overline{(\{\hat{a}\},\partial_2^{\bf B})^4}]_\approx , &
s_{22}=[\overline{(\{\hat{a}\},\partial_2^{\bf B})^3}]_\approx , &
s_{23}=[\overline{(\{\hat{a}\},\partial_2^{\bf B})^2}]_\approx , &
s_{24}=[\overline{(\{\hat{a}\},\partial_2^{\bf B})^1}]_\approx , &
s_{25}=[\underline{(\{\hat{a}\},\partial_2^{\bf B})}]_\approx .
\end{array}$$

We have $DR_{PT}(\overline{E_2})=\{s_{21},s_{22},s_{23},s_{25}\},\ DR_{DT}(\overline{E_2})=\{s_{24}\}$ and
$DR_V(\overline{E_2})=\emptyset$.

$DR(\overline{E})$ consists of the equivalence classes

$$\begin{array}{ll}
s_1=[(\overline{(\{a\},\partial_1^{\bf A})^3}\|\overline{(\{\hat{a}\},\partial_2^{\bf B})^4})\sy a]_\approx , &
s_2=[(\overline{(\{a\},\partial_1^{\bf A})^2}\|\overline{(\{\hat{a}\},\partial_2^{\bf B})^3})\sy a]_\approx ,\\[1mm]
s_3=[(\overline{(\{a\},\partial_1^{\bf A})^1}\|\overline{(\{\hat{a}\},\partial_2^{\bf B})^2})\sy a]_\approx , &
s_4=[(\underline{(\{a\},\partial_1^{\bf A})}\|\overline{(\{\hat{a}\},\partial_2^{\bf B})^1})\sy a]_\approx ,\\[1mm]
s_5=[(\overline{(\{a\},\partial_1^{\bf A})^1}\|\underline{(\{\hat{a}\},\partial_2^{\bf B})})\sy a]_\approx , &
s_6=[(\underline{(\{a\},\partial_1^{\bf A})}\|\overline{(\{\hat{a}\},\partial_2^{\bf B})^3})\sy a]_\approx ,\\[1mm]
s_7=[(\underline{(\{a\},\partial_1^{\bf A})}\|\overline{(\{\hat{a}\},\partial_2^{\bf B})^2})\sy a]_\approx , &
s_8=[\underline{((\{a\},\partial_1^{\bf A})\| (\{\hat{a}\},\partial_2^{\bf B}))\sy a}]_\approx .
\end{array}$$

We have $DR_{PT}(\overline{E})=\{s_1,s_2,s_4,s_5,s_7\},\ DR_{DT}(\overline{E})=\{s_3,s_6\}$ and
$DR_V(\overline{E})=\emptyset$. In Figure \ref{tssynufuf.fig}, the transition systems $TS(\overline{E_1}),\
TS(\overline{E_2})$ and $TS(\overline{E})$ are shown.

This example demonstrates a parallel composition of two timed multiactions $(\{a\},\partial_1^{\bf A})$ and
$(\{\hat{a}\},\partial_2^{\bf B})$, whose multiaction parts are singleton multisets with an action $a$ and its
conjugate $\hat{a}$, respectively. The resulting composition is synchronized by that action. From the initial state,
only the empty multiset of activities is executed that decrements by one the values of the timers. That evolution
follows by the execution of a new timed multiaction $(\emptyset ,\partial_3^{{\bf A}\sqcup{\bf B}})$ with the empty
multiaction part, resulted from synchronization of the two timed multiactions, which leads to an absorbing state.

Note that the delay phases of the two timed multiactions and that of the new timed multiaction $(\emptyset
,\partial_3^{{\bf A}\sqcup{\bf B}})$ (being their synchronous product) coincide until all of them remain enabled with
the time progress. Thus, it is very useful that the expression syntax preserves such two enabled synchronized timed
multiactions, possibly removed by restriction from the behaviour, since their timer values suggest that of their
synchronous product, which is possibly not explicit in the syntax. Thus, the delay phases of those two possibly
``virtual'' enabled timed multiactions cannot just be marked as undefined in the syntax, provided that one keeps track
of the delay phase of their synchronous product being possibly only implicit in the syntax.

If both synchronized timed multiactions lose their enabledness with the time progress then their synchronous product
$(\emptyset ,\partial_3^{{\bf A}\sqcup{\bf B}})$ also loses its enabledness and all of them obviously loose their delay
phase annotations. It may happen that one of the synchronized timed multiactions loses its enabledness (for example,
when a conflicting timed multiaction is executed) while the other one keeps its enabledness. Then their synchronous
product also loses its enabledness, together with its delay phase annotation. In such a case, the delay phase of the
enabled synchronized timed multiaction does not suggest anymore that of the synchronous product. That ``saved'' delay
phase merely changes with every time tick unless it becomes equal to the phase $i$ such that ${\cal C}_i>1\ ({\bf
C}={\bf A}\sqcup{\bf B})$, in which either the enabled synchronized timed multiaction accomplishes its execution or it
cannot accomplish its execution by some reason (for example, when affected by restriction) and then the delay phase $i$
remains unchanged with the time progress.
\label{tssynufuf.exm}
\end{example}

\begin{figure}
\begin{center}
\includegraphics{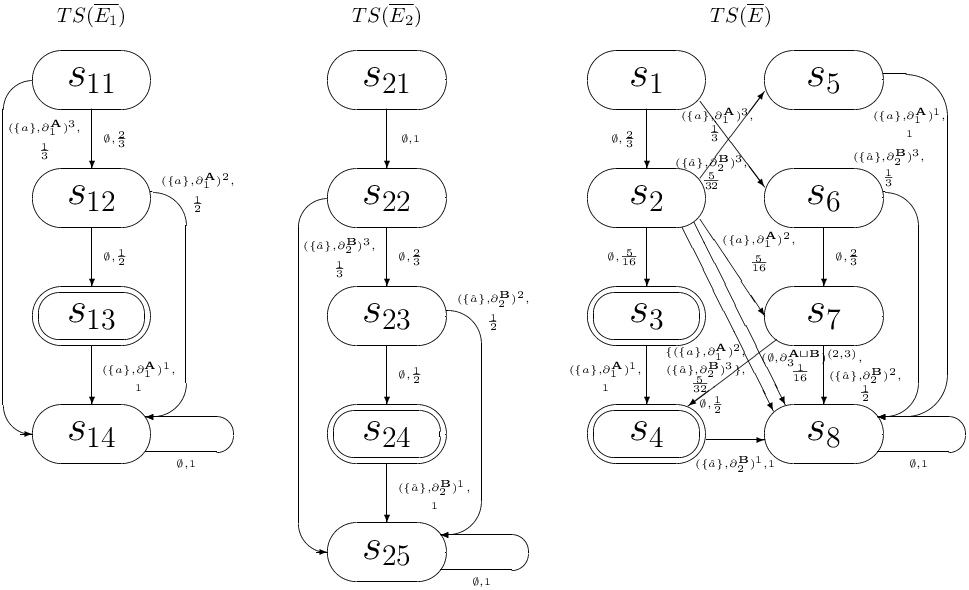}
\end{center}
\caption{The transition systems of $\overline{E_1},\ \overline{E_2}$ and $\overline{E}$ for $E_1=(\{a\},\partial_1^{\bf
A}),\ E_2=(\{\hat{a}\},\partial_2^{\bf B})$ and $E=(E_1\|E_2)\sy a$, where $DPH({\bf A})=Unif(1,3)$ and
$DPH({\bf B})=Unif(2,4)$}
\label{tssynufuf.fig}
\end{figure}

\begin{example}
Consider the expression ${\sf Stop}=(\{g\},\partial_1^{\frac{1}{2}})\rs g$ specifying a special process that is only
able to perform empty loops with probability $1$ and never terminates. We could actually use any arbitrary action from
${\cal A}$ and any probability (one-element transient TPM) belonging to the interval $(0;1)$ in the definition of ${\sf
Stop}$. Note that ${\sf Stop}$ is analogous to the one used in the examples within sPBC. The latter is a continuous
time stochastic analogue of the {\sf stop} process proposed in \cite{BDK01}. {\sf Stop} is a discrete time stochastic
analogue of the {\sf stop}.

Let $E_1=(\{a\},\partial_j^{\bf A}),\ E_2=(\{b\},\partial_k^{\bf B}),\ E_3=(\{c\},\partial_l^\emptyset ),\
E_4=(\{d\},\partial_p^{\bf D}),\ E_5=(\{e\},\partial_m^\emptyset ),\ E_6=(\{f\},\partial_q^{\bf F})$ and
$E\!\!=\!\![E_1*(E_2;((E_3;E_4)\cho (E_5;E_6)))*{\sf Stop}]\!=\![(\{a\},\partial_j^{\bf A})*((\{b\},\partial_k^{\bf
B});(((\{c\},\partial_l^\emptyset );(\{d\},\partial_p^{\bf C}))\cho ((\{e\},\partial_m^\emptyset
);(\{f\},\partial_q^{\bf D}))))*{\sf Stop}]$, where $j,k,l,m,p,q\in\real_{>0}$.

Here ${\bf A}$ is the transient TPM of the geometric distribution $Geom(\rho )$ with the probability parameter $\rho\in
(0;1)$ and support $\nat_{\geq 1}$ ($DPH({\bf A})=Geom(\rho )$). Next, ${\bf B}$ is the transient TPM of the (discrete)
uniform distribution $Unif(1,2)$ ($DPH({\bf B})=Unif(1,2)$), i.e. ${\bf B}$ is from Example \ref{tschoufuf.exm}. At
last, ${\bf D}$ is the transient TPM of the geometric distribution $Geom(\theta )$ ($DPH({\bf D})=Geom(\theta )$) and
${\bf F}$ is the transient TPM of the geometric distribution $Geom(\phi )$ ($DPH({\bf F})=Geom(\phi )$). Remember that
$\emptyset$ is the transient TPM of the Dirac distribution $Dirac(0)$ ($DPH(\emptyset )=Dirac(0)$, being the delay of
immediate multiactions $(\{c\},\partial_l^\emptyset )$ and $(\{e\},\partial_m^\emptyset )$.

The transient TPMs ${\bf A},\ {\bf D},\ {\bf F}$, and the corresponding vectors of the transition
pro\-ba\-bi\-li\-ti\-es towards absorption ${\bf a},\ {\bf d},\ {\bf f}$ are
$${\bf A}=1-\rho ,\ {\bf a}=\rho ,\ {\bf D}=1-\theta ,\ {\bf d}=\theta ,\ {\bf F}=1-\phi ,\ {\bf f}=\phi .$$

$DR(\overline{E_1})$ consists of the equivalence classes
$$\begin{array}{ll}
s_{11}=[\overline{(\{a\},\partial_j^{\bf A})^1}]_\approx , &
s_{12}=[\underline{(\{a\},\partial_j^{\bf A})}]_\approx .
\end{array}$$

We have $DR_{PT}(\overline{E_1})=\{s_{11},s_{12}\},\ DR_{DT}(\overline{E_1})=\emptyset =DR_V(\overline{E_1})$.

$DR(\overline{E_3})$ consists of the equivalence classes
$$\begin{array}{ll}
s_{31}=[\overline{(\{c\},\partial_l^\emptyset )}]_\approx , &
s_{32}=[\underline{(\{c\},\partial_l^\emptyset )}]_\approx .
\end{array}$$

We have $DR_{PT}(\overline{E_3})=\{s_{32}\},\ DR_{DT}(\overline{E_3})=\emptyset$ and $DR_V(\overline{E_3})=\{s_{31}\}$.

$DR(\overline{E_4})$ consists of the equivalence classes
$$\begin{array}{ll}
s_{41}=[\overline{(\{d\},\partial_p^{\bf D})^1}]_\approx , &
s_{42}=[\underline{(\{d\},\partial_p^{\bf D})}]_\approx .
\end{array}$$

We have $DR_{PT}(\overline{E_4})=\{s_{41},s_{42}\},\ DR_{DT}(\overline{E_4})=\emptyset =DR_V(\overline{E_4})$.

$DR(\overline{E_5})$ consists of the equivalence classes
$$\begin{array}{ll}
s_{51}=[\overline{(\{e\},\partial_m^\emptyset )}]_\approx , &
s_{52}=[\underline{(\{e\},\partial_m^\emptyset )}]_\approx .
\end{array}$$

We have $DR_{PT}(\overline{E_5})=\{s_{52}\},\ DR_{DT}(\overline{E_5})=\emptyset$ and $DR_V(\overline{E_5})=\{s_{51}\}$.

$DR(\overline{E_6})$ consists of the equivalence classes
$$\begin{array}{ll}
s_{61}=[\overline{(\{f\},\partial_q^{\bf F})^1}]_\approx , &
s_{62}=[\underline{(\{f\},\partial_q^{\bf F})}]_\approx .
\end{array}$$

We have $DR_{PT}(\overline{E_6})=\{s_{61},s_{62}\},\ DR_{DT}(\overline{E_6})=\emptyset =DR_V(\overline{E_6})$.

$DR(\overline{E})$ consists of the equivalence classes
$$\begin{array}{l}
s_1=[[\overline{(\{a\},\partial_j^{\bf A})^1}*((\{b\},\partial_k^{\bf B});(((\{c\},\partial_l^\emptyset );
(\{d\},\partial_p^{\bf D}))\cho ((\{e\},\partial_m^\emptyset );(\{f\},\partial_q^{\bf F}))))*
{\sf Stop}]]_\approx ,\\[1mm]
s_2=[[(\{a\},\partial_j^{\bf A})*(\overline{(\{b\},\partial_k^{\bf B})^2};(((\{c\},\partial_l^\emptyset );
(\{d\},\partial_p^{\bf D}))\cho ((\{e\},\partial_m^\emptyset );(\{f\},\partial_q^{\bf F}))))*
{\sf Stop}]]_\approx ,\\[1mm]
s_3=[[(\{a\},\partial_j^{\bf A})*(\overline{(\{b\},\partial_k^{\bf B})^1};(((\{c\},\partial_l^\emptyset );
(\{d\},\partial_p^{\bf D}))\cho ((\{e\},\partial_m^\emptyset );(\{f\},\partial_q^{\bf F}))))*
{\sf Stop}]]_\approx ,\\[1mm]
s_4=[[(\{a\},\partial_j^{\bf A})*((\{b\},\partial_k^{\bf B});((\overline{(\{c\},\partial_l^\emptyset )};
(\{d\},\partial_p^{\bf D}))\cho ((\{e\},\partial_m^\emptyset );(\{f\},\partial_q^{\bf F}))))*
{\sf Stop}]]_\approx =\\
\hspace{8mm}[[(\{a\},\partial_j^{\bf A})*((\{b\},\partial_k^{\bf B});(((\{c\},\partial_l^\emptyset );
(\{d\},\partial_p^{\bf D}))\cho (\overline{(\{e\},\partial_m^\emptyset )};(\{f\},\partial_q^{\bf F}))))*
{\sf Stop}]]_\approx ,\\[1mm]
s_5=[[(\{a\},\partial_j^{\bf A})*((\{b\},\partial_k^{\bf B});(((\{c\},\partial_l^\emptyset );
\overline{(\{d\},\partial_p^{\bf D})^1})\cho ((\{e\},\partial_m^\emptyset );(\{f\},\partial_q^{\bf F}))))*
{\sf Stop}]]_\approx ,\\[1mm]
s_6=[[(\{a\},\partial_j^{\bf A})*((\{b\},\partial_k^{\bf B});(((\{c\},\partial_l^\emptyset );
(\{d\},\partial_p^{\bf D}))\cho ((\{e\},\partial_m^\emptyset );\overline{(\{f\},\partial_q^{\bf F})^1})))*
{\sf Stop}]]_\approx .
\end{array}$$

We have $DR_{PT}(\overline{E})=\{s_1,s_2,s_5,s_6\},\ DR_{DT}(\overline{E})=\{s_3\}$ and $DR_V(\overline{E})=\{s_4\}$.
In Figure \ref{tsitchogeufdz.fig}, the transition system $TS(\overline{E})$ is presented.

This example demonstrates an infinite iteration loop. The loop is preceded with the iteration initiation, modeled by a
timed multiaction $(\{a\},\partial_j^{\bf A})$. The iteration body that corresponds to the loop consists of a timed
multiaction $(\{b\},\partial_k^{\bf B})$, followed (via sequential composition) by the probabilistic choice, modeled
via two conflicting immediate multiactions $(\{c\},\partial_l^\emptyset )$ and $(\{e\},\partial_m^\emptyset )$,
followed by different timed multiactions $(\{d\},\partial_p^{\bf D})$ and $(\{f\},\partial_q^{\bf F})$. The iteration
termination ${\sf Stop}$ demonstrates an empty behaviour, assuring that the iteration does not reach its final state
after any number of repeated executions of its body.
\label{tsitchogeufdz.exm}
\end{example}

\begin{figure}
\begin{center}
\includegraphics{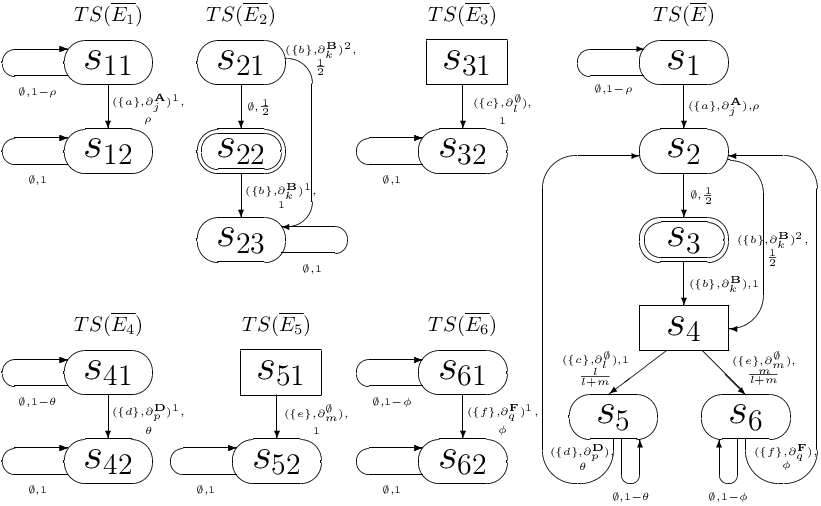}
\end{center}
\caption{The transition systems of $\overline{E_1},\ \overline{E_2},\ \overline{E_3},\ \overline{E_4},\
\overline{E_5},\ \overline{E_6}$ and $\overline{E}$ for $E_1=(\{a\},\partial_j^{\bf A}),\ E_2=(\{b\},\partial_k^{\bf
B}),\newline
E_3=(\{c\},\partial_l^\emptyset ),\ E_4=(\{d\},\partial_p^{\bf D}),\ E_5=(\{e\},\partial_m^\emptyset ),\
E_6=(\{f\},\partial_q^{\bf F})$ and $E=[E_1*(E_2;((E_3;E_4)\cho (E_5;E_6)))* {\sf Stop}]$, where $DPH({\bf
A})=Geom(\rho ),\ DPH({\bf B})=Unif(1,2),\ DPH({\bf D})=Geom(\theta )$ and $DPH({\bf F})=Geom(\phi )$}
\label{tsitchogeufdz.fig}
\end{figure}

Note that, due to the time constraints and since definite multiactions may be preempted by geometric ones, some simple
dynamic expressions can have complex transition systems, as in Examples \ref{tsparufuf.exm}--\ref{tssynufuf.exm}.

\section{Conclusion}
\label{conclusion.sec}

We now sum up the work presented in the paper and explain research perspectives along this research topic.

\subsection{Results obtained}

In this paper, we have proposed a discrete time phased extension dtphPBC of Petri Box Calculus (PBC)
\cite{BDH92,BKo95,BDK01,BD24}, by associating phase type stochastic delays with the PBC multiactions, giving rise to
the enriched syntax with phased multiactions. The resulting process algebra dtphPBC has a parallel step operational
semantics, based on labeled probabilistic transition systems. The syntax and semantics of dtphPBC have been illustrated
with a series of examples, describing probabilistic, definite, immediate executions and empty moves in the transition
systems, constructed for various dynamic expressions, combined from timed (positively phased) and immediate (zero
phased) multiactions with different operations of the calculus.

\subsection{Originality of the model}

The merit of our framework is twofold. First, one can specify in it concurrent composition and synchro\-nization of
(multi)actions, whereas this is not possible in classical Markov chains. As argued in \cite{Tri16}, (stochastic) PNs
represent the systems structure more concisely and can be an intermediate formalism for their more intuitive
translation into Markov chains. Second, algebraic formulas represent processes in a more compact way than PNs and allow
one to apply syntactic transformations and comparisons. Process algebras are compositional by definition and their
operations naturally correspond to operators of programming languages. Hence, it is much easier to construct a complex
model in the algebraic setting than in PNs. The complexity of PNs generated for practical models in the literature
demonstrates that it is not straightforward to construct such PNs directly from the system specifications.

dtphPBC is well suited for the discrete time applications, whose discrete states change with a global time tick, such
as business processes, neural and transportation networks, computer and communication systems, timed web services
\cite{VC17}, as well as for those, in which the distributed architecture or the concurrency level should be preserved
while modeling and analysis, such as genetic regulatory and cellular signalling networks (featuring maximal
parallelism) in biology \cite{DFFHK07,DFFK07,DFFK08,BFFK09,BL16} (remember that, in step semantics, we have additional
transitions due to concurrent executions).

In \cite{GHL07}, biological networks were jointly modeled by (standard, qualitative) PNs, CTSPNs and continuous PNs
(CPNs), to demonstrate their complementarity that makes necessary adding deterministic time to stochastic models, as
well as combining stochastic and continuous (deterministic) aspects into one model (such as stochastic rates of
reactions and continuous amounts of species). Like dtsdPBC, the process algebra dtphPBC is capable to model and analyze
parallel systems with fixed durations of the typical activities (loading, processing, transfer, repair, low-level
events, message delivery) and stochastic durations of the randomly occurring activities (arrival, departure, failure,
packet loss, message collision), including industrial, manufacturing, queueing, computing and network systems.

Moreover, since discrete phase type (DPH) distribution can be seen as a combination of multiple interrelated geometric
distributions, the DPH distributions are a flexible mathematical framework for analysis of complex random behaviours.
Remember that DPH distributions can approximate arbitrarily closely any non-negative discrete distributions, including
non-Markovian and finite-support ones. Therefore, DPH distributions are widely used in the distribution fitting (DF)
method that searches for optimal distribution (its type and parameters) that most closely approximates the data
produced by a given random process. The DF procedure measures the difference between the obtained theoretical (fitting)
and experimental (sample data) distributions, in order to identify the distribution type and fix the estimated
parameters for the maximal likehood \cite{BHST03,Ste09,HPV15,TBo17,LST19,HV23,Cio23,HHPTV24}. Thus, dtphPBC featuring
DPH distributed delays of activities would be appropriate to work with the non-Markovian stochastic systems from
various application areas like telecommunications, queueing theory, reliability engineering, insurance risk, inventory
management, population genetics, epidemiological models, molecular and quantum dynamics \cite{Cio23,HRGBF24}.

It is known that the attempts to combine time restrictions, parallelism and compositionality usually lead to many
technical difficulties, so that the formal models possessing all the mentioned properties have almost not been proposed
in the literature, in spite of the investigations in the related areas (for example, discrete time, generally
distributed delays, non-interleaving functional semantics in the SPA framework). To solve the mentioned problem, some
new (not existing in dtsdPBC) notions and constructions have been introduced in dtphPBC, such as phased multiactions,
delay phase indicators of timed multiactions, saturation with the phases, phase indicators discarding operation,
phases-respecting $Can$ and $Now$ functions, p-tangible and d-tangible dynamic expressions and states, action rules
respecting timed multiactions (with probabilistic and definite executions) and empty moves, transition systems with $3$
types of states (p-tangible, d-tangible and vanishing), and $4$ types of transitions (generated by probabilistic,
definite, immediate executions and empty~moves).

\subsection{Research perspectives}

We intend to construct the de\-no\-ta\-ti\-o\-nal se\-man\-tics of dtphPBC on a subclass of labeled discrete time
phased Petri nets (LDTPHPNs), an extension of DTSPNs \cite{Mol81,Mol85} with transition labeling and phased (phase type
delayed) transitions, called dtph-boxes. For evaluating performance in dtphPBC, the underlying stochastic process of
the process expressions will be extracted and analyzed, which will be a semi-Markov chain (SMC). The alternative
solution methods will be developed, based on the corresponding discrete time Markov chain (DTMC) and its reduction
(RDTMC) by eliminating vanishing (zero residence time) states.

We also plan to define behavioural equivalences for dtsdPBC, such as the step bisimulation one, aiming to reduce
behaviour of the algebraic processes by quotienting their transition systems and Markov chains. Such a reduction should
simplify the functional (qualitative) and performance (quantitative) analysis. We plan to construct some application
examples demonstrating expressiveness of the calculus and application of the behavioural analysis and performance
evaluation, both simplified using quotienting by step stochastic bisimulation. In future we can propose a congruence
relation for dtsdPBC, i.e. the equivalence that withstands application of all operations of the algebra. The first
possible candidate is a stronger version of step stochastic bisimulation equivalence, defined via transition systems
equipped with two extra transitions {\sf skip} and {\sf redo}, like those from sPBC \cite{MVCF08}.

Moreover, recursion operation could be added to dtsdPBC to increase further specification power of the algebra. It
would be very interesting to implement the class of DTSDPNs, to be able to specify them and then model their behaviour
by constructing the reahability graphs. Note that even DTSPNs of M.K. Molloy \cite{Mol81,Mol85} have never been
implemented. Mostly interleaving and continuous time variants of stochastic or timed PNs have been implemented so far.


\begin{thebibliography}{777}

\bibitem{AHR00} {\sc van der Aalst W.M.P., van Hee K.M., Reijers H.A.}
{\em Analysis of discrete-time stochastic Petri nets.}
{\sl Statistica Neerlandica} {\bf 54(2)}, p. 237--255, The Netherlands Society for Statistics and Operations Research
(VVSOR), Wiley Blackwell Publishers, July 2000,
{\tt http://tmitwww.tm.tue.nl/staff/hreijers/}\\
{\tt H.A. Reijers Bestanden/Statistica.pdf}, {\tt http://onlinelibrary.wiley.com/doi/epdf/10.1111/}\\
{\tt 1467-9574.00139}.

\bibitem{Bal01} {\sc Balbo G.}
{\em Introduction to stochastic Petri nets.}
{\sl Lecture Notes in Computer Science} {\bf 2090}, p. 84--155, 2001.

\bibitem{Bal07} {\sc Balbo G.}
{\em Introduction to generalized stochastic Petri nets.}
{\sl Lecture Notes in Computer Science} {\bf 4486}, p. 83--131, 2007.

\bibitem{BL16} {\sc Bartocci E., Li\'o P.}
{\em Computational modeling, formal analysis, and tools for systems biology.}
{\sl PLoS Com\-putational Biology} {\bf 12(1)}, p. 1--22 (e1004591), Public Library of Science, Cambridge, UK, January
2016.

\bibitem{BK85} {\sc Bergstra J.A., Klop J.W.}
{\em Algebra of communicating processes with abstraction.}
{\sl Theoretical Computer Science} {\bf 37}, p. 77--121, 1985.

\bibitem{BBr01} {\sc Bernardo M., Bravetti M.}
{\em Reward based congruences: can we aggregate more?}
{\sl Lecture Notes in Computer Science} {\bf 2165}, p. 136--151, 2001,
{\tt http://www.cs.unibo.it/\symbol{126}bravetti/papers/papm01b.ps}.

\bibitem{BGo98} {\sc Bernardo M., Gorrieri R.}
{\em A tutorial on EMPA: a theory of concurrent processes with nondeterminism, priorities, probabilities and time.}
{\sl Theoretical Computer Science} {\bf 202(1--2)}, p. 1--54, July 1998,\\
{\tt http://www.sti.uniurb.it/bernardo/documents/tcs202.pdf}.

\bibitem{BD24} {\sc Best E., Devillers R.}
{\em Petri net primer: a compendium on the core model, analysis, and synthesis.}
$1^{st}$ edition, {\sl Computer Science Foundations and Applied Logic (CSFAL) series}, 545 p., Springer International
Publishing, Cham, Switzerland / Birkh\"auser, Germany, January 2024 (ISBN 978-3-031-48277-9).

\bibitem{BDH92} {\sc Best E., Devillers R., Hall J.G.}
{\em The box calculus: a new causal algebra with multi-label communi\-cation.}
{\sl Lecture Notes in Computer Science} {\bf 609}, p. 21--69, 1992.

\bibitem{BDK01} {\sc Best E., Devillers R., Koutny M.}
{\em Petri net algebra.}
{\sl EATCS Monographs on Theoretical Computer Science}, 378 p., Springer Verlag, 2001 (ISBN 3-540-67398-9).

\bibitem{BKo95} {\sc Best E., Koutny M.}
{\em A refined view of the box algebra.}
{\sl Lecture Notes in Computer Science} {\bf 935}, p. 1--20, 1995,
{\tt http://parsys.informatik.uni-oldenburg.de/\symbol{126}best/publications/pn95.ps.gz}.

\bibitem{BHKS23} {\sc Bies A., Hermanns H., K\"ohl M.A., Schmidt A.}
{\em Matching distributions under structural constraints.}
{\sl Lecture Notes in Computer Science} {\bf 14287}, p. 221--237, 2023.

\bibitem{BHST03} {\sc Bobbio A., Horvath A., Scarpa M., Telek M.}
{\em Acyclic discrete phase type distributions: properties and a parameter estimation algorithm.}
{\sl Performance Evaluation} {\bf 54(1)}, p. 1--32, September 2003.

\bibitem{BLT90} {\sc Bolognesi T., Lucidi F., Trigila S.}
{\em From timed Petri nets to timed LOTOS.}
{\sl Proceedings of the IFIP WG6.1 $10^{th}$ International Symposium on Protocol Specification, Testing and
Verification - 90 (PSTV'90)} (L. Logrippo, R.L. Probert, H. Ural, eds.), Ottawa, Ontario, Canada, June 1990, p.
395--408, Elsevier Science Publishers (North-Holland), Amsterdam, The Netherlands, 1990.

\bibitem{BFFK09} {\sc Bonzanni N., Feenstra K.A., Fokkink W., Krepska E.}
{\em What can formal methods bring to systems biology?}
{\sl Lecture Notes in Computer Science} {\bf 5850}, p. 16--22, 2009,
{\tt https://www.cs.vu.nl/\symbol{126}wanf/pubs/}\\
{\tt position-fm4sb.pdf}.

\bibitem{Bri03} {\sc Brinksma E.}
{\em Compositional theories of qualitative and quantitative behaviour.}
{\sl Lecture Notes in Computer Science} {\bf 2679}, p. 37--42, 2003.

\bibitem{BrHe01} {\sc Brinksma E., Hermanns H.}
{\em Process algebra and Markov chains.}
{\sl Lecture Notes in Computer Science} {\bf 2090}, p. 183--231, 2001.

\bibitem{BKLL95} {\sc Brinksma E., Katoen J.-P., Langerak R., Latella D.}
{\em A stochastic causality-based process algebra.}
{\sl The Computer Journal} {\bf 38(7)}, p. 552--565, Oxford University Press, Oxford, UK, 1995,\\
{\tt http://eprints.eemcs.utwente.nl/6387/01/552.pdf}.

\bibitem{Buc95} {\sc Buchholz P.}
{\em A notion of equivalence for stochastic Petri nets.}
{\sl Lecture Notes in Computer Science} {\bf 935}, p. 161--180, 1995.

\bibitem{Buc98} {\sc Buchholz P.}
{\em Iterative decomposition and aggregation of labeled GSPNs.}
{\sl Lecture Notes in Computer Science} {\bf 1420}, p. 226--245, 1998.

\bibitem{Cia95} {\sc Ciardo G.}
{\em Discrete-time Markovian stochastic Petri nets.}
{\sl Computations with Markov Chains: Proceedings of $2^{nd}$ International Workshop on the Numerical Solution of
Markov Chains - 95 (NSMC'95)} (W.J. Stewart, ed.), Raleigh, NC, USA, January 1995, Chapter {\bf 20}, p. 339--358,
Kluwer Academic Publishers, Boston, MA, USA, February 1995,
{\tt http://www.cs.ucr.edu/\symbol{126}ciardo/pubs/1995NSMC-Discrete.pdf},\\
{\tt http://www.cs.iastate.edu/\symbol{126}ciardo/pubs/1995NSMC-Discrete.pdf}.

\bibitem{Cio23} {\sc Ciobanu G.}
{\em Analyzing non-Markovian systems by using a stochastic process calculus and a probabilistic model checker.}
{\sl Mathematics} {\bf 11}, Article 302, 17 p., Multidisciplinary Digital Publishing Institute (MDPI), Basel,
Swit\-zer\-land, January 2023.

\bibitem{CR14} {\sc Ciobanu G., Rotaru A.S.}
{\em PHASE: a stochastic formalism for phase-type distributions.}
{\sl Lecture Notes in Computer Science} {\bf 8829}, p. 91--106, 2014.

\bibitem{CR15} {\sc Ciobanu G., Rotaru A.S.}
{\em Phase-type approximations for non-Markovian systems: a case study.}
{\sl Lecture Notes in Computer Science} {\bf 8938}, p. 323--334, 2015.

\bibitem{Cum85} {\sc Cumani A.}
{\em ESP --- a package for the evaluation of stochastic Petri nets with phase-type distributed transition times.}
{\sl Proceedings of $1^{st}$ International Workshop on Timed Petri Nets - 85}, Turin, Italy, July 1985, p. 144--151,
IEEE Computer Society Press, July 1985.

\bibitem{DFFHK07} {\sc Danos V., Feret J., Fontana W., Harmer R., Krivine J.}
{\em Rule-based modelling of cellular signalling.}
{\sl Lecture Notes in Computer Science} {\bf 4703}, p. 17--41, 2007.

\bibitem{DFFK07} {\sc Danos V., Feret J., Fontana W., Krivine J.}
{\em Scalable simulation of cellular signaling networks.}
{\sl Lecture Notes in Computer Science} {\bf 4807}, p. 139--157, 2007.

\bibitem{DFFK08} {\sc Danos V., Feret J., Fontana W., Krivine J.}
{\em Abstract interpretation of cellular signalling networks.}
{\sl Lecture Notes in Computer Science} {\bf 4905}, p. 83--97, 2008.

\bibitem{ERKN99} {\sc El-Rayes A., Kwiatkowska M., Norman G.}
{\em Solving infinite stochastic process algebra models through matrix-geometric methods.}
{\sl Proceedings of $7^{th}$ International Workshop on Process Algebra and Performance Modelling - 99 (PAPM'99)} (J.
Hillston, M. Silva, eds.), Zaragoza, Spain, 1999, p. 41--62, Prensas Universitarias de Zaragoza, Spain, September 1999,
{\tt http://qav.comlab.ox.ac.uk/papers/papm99.pdf}.

\bibitem{GHL07} {\sc Gilbert D., Heiner M., Lehrack S.}
{\em A unifying framework for modelling and analysing biochemical pathways using Petri nets.}
{\sl Lecture Notes in Computer Science} {\bf 4695}, p. 200--216, 2007.

\bibitem{GSS95} {\sc van Glabbeek R.J., Smolka S.A., Steffen B.}
{\em Reactive, generative, and stratified models of pro\-ba\-bi\-lis\-tic processes.}
{\sl Information and Computation} {\bf 121(1)}, p. 59--80, August 1995,\\
{\tt http://boole.stanford.edu/pub/prob.ps.gz}.

\bibitem{Hani93} {\sc Hanish H.M.}
{\em Analysis of place/transition nets with timed-arcs and its application to batch process control.}
{\sl Lecture Notes in Computer Science} {\bf 691}, p. 282--299, 1993.

\bibitem{HK00} {\sc Hermanns H., Katoen J.-P.}
{\em Automated compositional Markov chain generation for a plain-old telephone system.}
{\sl Science of Computer Programming} {\bf 36(1)}, p. 97--127, January 2000.

\bibitem{HR94} {\sc Hermanns H., Rettelbach M.}
{\em Syntax, semantics, equivalences and axioms for MTIPP.}
{\sl Proceedings of $2^{nd}$ Workshop on Process Algebras and Performance Modelling},
Regensberg / Erlangen (Herzog U., Rettelbach M., eds.),
{\sl Arbeitsberichte des IMMD} {\bf 27}, p. 71--88, University of Erlangen, Germany, November 1994,
{\tt http://ftp.informatik.uni-erlangen.de/local/inf7/papers/Hermanns/syntax\symbol{95}semantics\symbol{95}}\\
{\tt equivalences\symbol{95}axioms\symbol{95}for\symbol{95}MTIPP.ps.gz}.

\bibitem{Hil94a} {\sc Hillston J.}
{\em A compositional approach to performance modelling.}
{\sl Ph.D. thesis}, 226 p., Department of Computer Science, University of Edinburgh, UK, April 1994.

\bibitem{Hil94b} {\sc Hillston J.}
{\em The nature of synchronisation.}
{\sl Proceedings of the $2^{nd}$ International Workshop on Process Algebra and Performance Modelling - 94 (PAPM'94)},
Regensberg / Erlangen (U. Herzog, M. Rettelbach, eds.),
{\sl Arbeitsberichte des IMMD} {\bf 27}, p. 51--70, University of Erlangen, Germany, November 1994,\\
{\tt http://www.dcs.ed.ac.uk/pepa/synchronisation.pdf}.

\bibitem{Hil96} {\sc Hillston J.}
{\em A compositional approach to performance modelling.}
158 p., Cambridge University Press, UK, 1996 (ISBN 978-0-521-67353-2),
{\tt http://www.dcs.ed.ac.uk/pepa/book.pdf}.

\bibitem{Hoa85} {\sc Hoare C.A.R.}
{\em Communicating sequential processes.}
Prentice-Hall, Englewood Cliffs, NJ, USA, 256 p., July 1985 (ISBN 978-0-13-153271-7),
{\tt http://www.usingcsp.com/cspbook.pdf}.

\bibitem{HRGBF24} {\sc Hobolth A., Rivas-Gonzalez I., Bladt M., Futschik A.}
{\em Phase-type distributions in mathematical population genetics: an emerging framework.}
{\sl Theoretical Population Biology} {\bf 157}, p. 14--32, Elsevier, June 2024.

\bibitem{HHPTV24} {\sc Horv\'ath A., Horv\'ath I., Paolieri M., Telek M., Vicario E.}
{\em Approximation of cumulative distribution functions by Bernstein phase-type distributions.}
{\sl Lecture Notes in Computer Science} {\bf 14996}, p. 90--106, 2024.

\bibitem{HPV15} {\sc Horv\'ath A., Paolieri M., Vicario E.}
{\em Approximating distributions and transient probabilities by matrix exponential distributions and functions.}
{\sl Quantitative Assessments of Distributed Systems: Methodologies and Techniques} (D. Bruneo, S. Distefano,
eds.), Chapter {\bf 5}, p. 107--127, Performability Engineering Series, Scrivener Publishing LLC, Beverly, MA, USA,
April 2015.

\bibitem{HPST00} {\sc Horv\'ath A., Puliafito A., Scarpa M., Telek M.}
{\em Analysis and evaluation of non-Markovian stochastic Petri nets.}
{\sl Lecture Notes in Computer Science} {\bf 1786}, p. 171--187, March 2000,\\
{\tt http://webspn.hit.bme.hu/\symbol{126}telek/cikkek/horv00b.ps.gz}.

\bibitem{HV23} {\sc Horv\'ath A., Vicario E.}
{\em Construction of phase type distributions by Bernstein exponentials.}
{\sl Lecture Notes in Computer Science} {\bf 14231}, p. 201--215, 2023.

\bibitem{JC02} {\sc Jones R.L., Ciardo G.}
{\em On phased delay stochastic Petri nets.}
{\sl Proceedings of $9^{th}$ International Workshop on Petri Nets and Performance Models - 01 (PNPM'01)}, Aachen,
Germany, September 2001, p. 165--174, IEEE Computer Society Press, 2001.

\bibitem{Kat96} {\sc Katoen J.-P.}
{\em Quantinative and qualitative extensions of event structures.}
{\sl Ph. D. thesis}, {\sl CTIT Ph. D.-thesis series} {\bf 96-09}, 303 p., Centre for Telematics and Information
Technology, University of Twente, Enschede, The Netherlands, 1996.

\bibitem{Kou00} {\sc Koutny M.}
{\em A compositional model of time Petri nets.}
{\sl Lecture Notes in Computer Science} {\bf 1825}, p. 303--322, 2000.

\bibitem{LST19} {\sc Lakatos L., Szeidl L., Telek M.}
{\em Introduction to queueing systems with telecommunication applications.}
$2^{nd}$ edition, 559 p., Springer Nature, Cham, Switzerland, 2019 (ISBN 978-3-030-15141-6).

\bibitem{MVCC04} {\sc Maci\`a S.H., Valero R.V., Cazorla L.D.C., Cuartero G.F.}
{\em Introducing the iteration in sPBC.}
{\sl Lecture Notes in Computer Science} {\bf 3235}, p. 292--308, October 2004,
{\tt http://www.info-ab.uclm.es/}\\
{\tt retics/publications/2004/forte04.pdf}.

\bibitem{MVCF08} {\sc Maci\`a S.H., Valero R.V., Cuartero G.F., de Frutos E.D.}
{\em A congruence relation for sPBC.}
{\sl Formal Methods in System Design} {\bf 32(2)}, p. 85--128, Springer, The Netherlands, April 2008.

\bibitem{MVCR08} {\sc Maci\`a S.H., Valero R.V., Cuartero G.F., Ruiz M.C.}
{\em sPBC: a Markovian extension of Petri box calculus with immediate multiactions.}
{\sl Fundamenta Informaticae} {\bf 87(3--4)}, p. 367--406, IOS Press, Amsterdam, The Netherlands, 2008.

\bibitem{MVCRT16} {\sc Maci\`a S.H., Valero R.V., Cuartero G.F., Ruiz M.C., Tarasyuk I.V.}
{\em Modelling a video conference system with sPBC.}
{\sl Applied Mathematics and Information Sciences} {\bf 10(2)}, p. 475--493, Natural Sciences Publishing, New York, NY,
USA, March 2016 (ISSN 1935-0090), DOI: 10.18576/amis/100210,\\
{\tt http://itar.iis.nsk.su/files/itar/pages/amis16.pdf}.

\bibitem{MVF01} {\sc Maci\`a S.H., Valero R.V., de Frutos E.D.}
{\em sPBC: a Markovian extension of finite Petri box calculus.}
{\sl Proceedings of $9^{th}$ IEEE International Workshop on Petri Nets and Performance Models - 01 (PNPM'01)}, Aachen,
Germany, 2001, p. 207--216, IEEE Computer Society Press, September 2001,\\
{\tt http://www.info-ab.uclm.es/retics/publications/2001/pnpm01.ps}.

\bibitem{MABV12} {\sc Markovski J., D'Argenio P.R., Baeten J.C.M., de Vink E.P.}
{\em Reconciling real and stochastic time: the need for probabilistic refinement.}
{\sl Formal Aspects of Computing} {\bf 24(4--6)}, p. 497--518, July 2012.

\bibitem{MVi08} {\sc Markovski J., de Vink E.P.}
{\em Extending timed process algebra with discrete stochastic time.}
{\sl Lecture Notes of Computer Science} {\bf 5140}, p. 268--283, 2008.

\bibitem{MVi09} {\sc Markovski J., de Vink E.P.}
{\em Performance evaluation of distributed systems based on a discrete real- and stochastic-time process algebra.}
{\sl Fundamenta Informaticae} {\bf 95(1)}, p. 157--186, IOS Press, Amsterdam, The Netherlands, October 2009.

\bibitem{MF00} {\sc Marroqu\'{\i}n A.O., de Frutos E.D.}
{\em TPBC: timed Petri box calculus.}
{\sl Technical Report}, Departamento de Sistemas Infofm\'aticos y Programaci\'on, Universidad Complutense de Madrid,
Madrid, Spain, 2000 (in Spanish).

\bibitem{MF01} {\sc Marroqu\'{\i}n A.O., de Frutos E.D.}
{\em Extending the Petri box calculus with time.}
{\sl Lecture Notes in Computer Science} {\bf 2075}, p. 303--322, 2001.

\bibitem{Mar90} {\sc Marsan M.A.}
{\em Stochastic Petri nets: an elementary introduction.}
{\sl Lecture Notes in Computer Science} {\bf 424}, p. 1--29, 1990.

\bibitem{MBCDF95} {\sc Marsan M.A., Balbo G., Conte G., Donatelli S., Franceschinis G.}
{\em Modelling with generalized stochastic Petri nets.}
{\sl Wiley Series in Parallel Computing}, John Wiley and Sons, 316 p., 1995 (ISBN 0-471-93059-8),
{\tt http://www.di.unito.it/\symbol{126}greatspn/GSPN-Wiley}.

\bibitem{MFa76} {\sc Merlin Ph.M., Farber D.J.}
{\em Recoverability of communication protocols: implications of a theoretical study.}
{\sl IEEE Transactions on Communications} {\bf 24(9)}, p. 1036--1043, IEEE Computer Society Press, September 1976.

\bibitem{Mil89} {\sc Milner R.A.J.}
{\em Communication and concurrency.}
260 p., Prentice-Hall, Upper Saddle River, NJ, USA, 1989 (ISBN 0-13-115007-3).

\bibitem{Mol81} {\sc Molloy M.K.}
{\em On the integration of the throughput and delay measures in distributed processing models.}
{\sl Ph.D. thesis}, {\sl Report} {\bf CSD-810-921}, 108 p., University of California, Los Angeles, CA, USA, 1981.

\bibitem{Mol82} {\sc Molloy M.K.}
{\em Performance analysis using stochastic Petri nets.}
{\sl IEEE Transactions on Computing} {\bf 31(9)}, p. 913--917, IEEE Computer Society Press, September 1982.

\bibitem{Mol85} {\sc Molloy M.K.}
{\em Discrete time stochastic Petri nets.}
{\sl IEEE Transactions on Software Engineering} {\bf 11(4)}, p. 417--423, IEEE Computer Society Press, April 1985.

\bibitem{Neu81} {\sc Neuts M.F.}
{\em Matrix-geometric solutions in stochastic models: an algorithmic approach.}
{\sl Johns Hopkins Se\-ri\-es in the Ma\-the\-ma\-ti\-cal Sciences} {\bf 2}, 352 p., Johns Hopkins University Press,
Baltimore, MD, USA, July 1981 (ISBN 978-0-801-82560-6).

\bibitem{Nia05} {\sc Niaouris A.}
{\em An algebra of Petri nets with arc-based time restrictions.}
{\sl Lecture Notes in Computer Science} {\bf 3407}, p. 447--462, 2005.

\bibitem{NK05} {\sc Niaouris A., Koutny M.}
{\em An algebra of timed-arc Petri nets.}
{\sl Technical Report} {\bf CS-TR-895}, 60 p., School of Computer Science, University of Newcastle upon Tyne, UK, March
2005,\\
{\tt http://www.cs.ncl.ac.uk/publications/trs/papers/895.pdf}.

\bibitem{Pul09} {\sc Pulungan M.R.}
{\em Reduction of acyclic phase-type representations.}
{\sl Ph.D. thesis}, 167 p., Fakult\"at f\"ur Mathematik und Informatik (MI), Universit\"at des Saarlandes,
Saarbr\"ucken, Germany, May 2009.

\bibitem{PH15} {\sc Pulungan M.R., Hermanns H.}
{\em A construction and minimization service for continuous probability distributions.}
{\sl International Journal on Software Tools for Technology Transfer} {\bf 17(1)}, p. 77--90, Springer, February 2015.

\bibitem{Ram73} {\sc Ramchandani C.}
{\em Performance evaluation of asynchronous concurrent systems by timed Petri nets.}
{\sl Ph.D. thesis}, Department of Electrical Engineering, Massachusetts Institute of Technology, Cambridge,
Massachusetts, USA, July 1973.

\bibitem{Ros96} {\sc Ross S.M.}
{\em Stochastic processes.}
$2^{nd}$ edition, John Wiley and Sons, 528 p., New York, USA, April 1996.

\bibitem{SS24} {\sc Soltanieh A., Siegle M.}
{\em Rate liting for stochastic process algebra by transition context augmentation.}
{\sl ACM Transactions on Modeling and Computer Simulation (TOMACS)}, 29 p., ACM Press, 2024, DOI: 10.1145/3656582.

\bibitem{Ste09} {\sc Stewart W.J.}
{\em Probability, Markov chains, queues, and simulation. The mathematical basis of performance modeling.}
758 p., Princeton University Press, Princeton, NJ, USA, 2009 (ISBN 978-0-691-14062-9).

\bibitem{Tar05} {\sc Tarasyuk I.V.}
{\em Discrete time stochastic Petri box calculus.}
{\sl Berichte aus dem Department f\"ur Informatik} {\bf 3/05}, 25 p., Carl von Ossietzky Universit\"at Oldenburg,
Germany, November 2005 (ISSN 1867-9218),\\
{\tt http://itar.iis.nsk.su/files/itar/pages/dtspbcib\symbol{95}cov.pdf}.

\bibitem{Tar06} {\sc Tarasyuk I.V.}
{\em Iteration in discrete time stochastic Petri box calculus.}
{\sl Bulletin of the Novosibirsk Computing Center, Series Computer Science, IIS Special Issue} {\bf 24}, p. 129--148,
NCC Publisher, Novosibirsk, 2006 (ISSN 1680-6972),
{\tt http://itar.iis.nsk.su/files/itar/pages/dtsitncc.pdf}.

\bibitem{Tar07a} {\sc Tarasyuk I.V.}
{\em Stochastic Petri box calculus with discrete time.}
{\sl Fundamenta Informaticae} {\bf 76(1--2)}, p. 189--218, IOS Press, Amsterdam, The Netherlands, February 2007
(ISSN 0169-2968),\\
{\tt http://itar.iis.nsk.su/files/itar/pages/dtspbcfi.pdf}.

\bibitem{Tar14} {\sc Tarasyuk I.V.}
{\em Equivalence relations for modular performance evaluation in dtsPBC.}
{\sl Mathematical Structures in Computer Science} {\bf 24(1)}, p. 78--154 (e240103), Cambridge University Press,
Cambridge, UK, February 2014 (ISSN 0960-1295), DOI: 10.1017/S0960129513000029,
{\tt http://itar.iis.nsk.su/files/}\\
{\tt itar/pages/dtsdphms.pdf}.

\bibitem{Tar19} {\sc Tarasyuk I.V.}
{\em Discrete time stochastic and deterministic Petri box calculus.}
{\sl CoRR} {\bf abs/1905.00456} ({\sl arXiv}:{\bf 1905.00456}), 57 p., Computing Research Repository, Cornell
University Library, Ithaca, NY, USA, May 2019,
{\tt http://itar.iis.nsk.su/files/itar/pages/dtsdpbcarxiv.pdf},\\
{\tt http://arxiv.org/pdf/1905.00456.pdf}.

\bibitem{Tar20a} {\sc Tarasyuk I.V.}
{\em Stochastic bisimulation and performance evaluation in discrete time stochastic and deter\-ministic Petri box
calculus dtsdPBC.}
{\sl HAL Open Archives} {\bf hal-02573419v1}, 113 p., France, May 2020,\\
{\tt http://itar.iis.nsk.su/files/itar/pages/dtsdpbchal.pdf},
{\tt https://hal.science/}\\
{\tt hal-02573419v1/file/dtsdpbchal.pdf}.

\bibitem{Tar20b} {\sc Tarasyuk I.V.}
{\em Discrete time stochastic and deterministic Petri box calculus dtsdPBC.}
{\sl Siberian Electronic Mathema\-tical Reports} {\bf 17}, p. 1598--1679, S.L. Sobolev Institute of Mathematics,
Novosibirsk, October 2020 (ISSN 1813-3304), DOI: 10.33048/semi.2020.17.112,
{\tt http://itar.iis.nsk.su/files/itar/pages/}\\
{\tt dtsdpbcodesemr.pdf}, {\tt http://semr.math.nsc.ru/v17/p1598-1679.pdf}.

\bibitem{Tar21} {\sc Tarasyuk I.V.}
{\em Performance evaluation in stochastic process algebra dtsdPBC.}
{\sl Siberian Electronic Mathe\-matical Reports} {\bf 18(2)}, p. 1105--1145, S.L. Sobolev Institute of Mathematics,
Novosibirsk, October 2021 (ISSN 1813-3304), DOI: 10.33048/semi.2021.18.085,
{\tt http://itar.iis.nsk.su/files/itar/pages/}\\
{\tt dtsdpbcpevsemr.pdf}, {\tt http://semr.math.nsc.ru/v18/n2/p1105-1145.pdf}.

\bibitem{Tar23} {\sc Tarasyuk I.V.}
{\em Performance preserving equivalence for stochastic process algebra dtsdPBC.}
{\sl Siberian Elec\-tro\-nic Ma\-the\-ma\-tical Reports} {\bf 20(2)}, p. 646--699, S.L. Sobolev Institute of
Mathematics, Novosibirsk, July 2023 (ISSN 1813-3304), DOI: 10.33048/semi.2023.20.039,
{\tt http://itar.iis.nsk.su/files/itar/pages/}\\
{\tt dtsdpbceqsemr.pdf}, {\tt http://semr.math.nsc.ru/v20/n2/p646-699.pdf}.

\bibitem{Tar24} {\sc Tarasyuk I.V.}
{\em Embedding and elimination for performance analysis in stochastic process algebra dtsdPBC.}
{\sl International Journal of Parallel, Emergent and Distributed Systems} {\bf 39(6)}, p. 619--652, Tailor and Francis
Gro\-up, In\-for\-ma PLC, London, UK, November 2024 (ISSN 1744-5779), DOI: 10.1080/\\
17445760.2024.2417873, {\tt http://itar.iis.nsk.su/files/itar/pages/dtsdpbcerepeds.pdf}.

\bibitem{Tar25a} {\sc Tarasyuk I.V.}
{\em Performance analysis of the shared memory system in stochastic process algebra dtsdPBC.}
{\sl International Journal of Parallel, Emergent and Distributed Systems} {\bf 40(4)}, p. 373--423, Tailor and Francis
Gro\-up, In\-for\-ma PLC, London, UK, July 2025 (ISSN 1744-5779), DOI: 10.1080/17445760.2025.2493128,\\
{\tt http://itar.iis.nsk.su/files/itar/pages/dtsdpbcshmpeds.pdf}.

\bibitem{Tar25b} {\sc Tarasyuk I.V.}
{\em Comparing dtsdPBC with other stochastic process algebras.}
{\sl International Journal of Parallel, Emergent and Distributed Systems} {\bf 40(5)}, p. 501--569, Tailor and Francis
Gro\-up, In\-for\-ma PLC, London, UK, September 2025 (ISSN 1744-5779), DOI: 10.1080/17445760.2025.2527713,\\
{\tt http://itar.iis.nsk.su/files/itar/pages/dtsdpbccmppeds.pdf}.

\bibitem{TMV10} {\sc Tarasyuk I.V., Maci\`a S.H., Valero R.V.}
{\em Discrete time stochastic Petri box calculus with immediate multiactions.}
{\sl Technical Report} {\bf DIAB-10-03-1}, 25 p., Department of Computer Systems, High School of Computer Science
Engineering, University of Castilla - La Mancha, Albacete, Spain, March 2010,\\
{\tt http://itar.iis.nsk.su/files/itar/pages/dtsipbc.pdf}, {\tt http://www.dsi.uclm.es/descargas/}\\
{\tt technicalreports/DIAB-10-03-1/dtsipbc.pdf}.

\bibitem{TMV13} {\sc Tarasyuk I.V., Maci\`a S.H., Valero R.V.}
{\em Discrete time stochastic Petri box calculus with immediate multiactions dtsiPBC.}
{\sl Proc. $6^{th}$ International Workshop on Practical Applications of Stochastic Modelling - 12 (PASM'12) and
$11^{th}$ International Workshop on Parallel and Distributed Methods in Verification - 12 (PDMC'12)},
{\sl Electronic Notes in Theoretical Computer Science} {\bf 296}, p. 229--252, Elsevier, 2013,
{\tt http://itar.iis.nsk.su/files/itar/pages/dtsipbcentcs.pdf}.

\bibitem{TMV14} {\sc Tarasyuk I.V., Maci\`a S.H., Valero R.V.}
{\em Performance analysis of concurrent systems in algebra dtsiPBC.}
{\sl Programming and Computer Software} {\bf 40(5)}, p. 229--249, Pleiades Publishing, Ltd., September 2014 (ISSN
0361-7688), DOI: 10.1134/S0361768814050089, {\tt http://itar.iis.nsk.su/files/itar/}\\
{\tt pages/pcs14.pdf}.

\bibitem{TMV15} {\sc Tarasyuk I.V., Maci\`a S.H., Valero R.V.}
{\em Stochastic process reduction for performance evaluation in dtsiPBC.}
{\sl Siberian Electronic Mathematical Reports} {\bf 12}, p. 513--551, S.L. Sobolev Institute of Mathe\-matics,
Novosibirsk, September 2015 (ISSN 1813-3304), DOI: 10.17377/semi.2015.12.044,\\
{\tt http://itar.iis.nsk.su/files/itar/pages/dtsipbcsemr.pdf}, {\tt http://semr.math.nsc.ru/v12/}\\
{\tt p513-551.pdf}.

\bibitem{TMV18} {\sc Tarasyuk I.V., Maci\`a S.H., Valero R.V.}
{\em Stochastic equivalence for performance analysis of concurrent systems in dtsiPBC.}
{\sl Siberian Electronic Mathematical Reports} {\bf 15}, p. 1743--1812, S.L. Sobolev Institute of Mathematics,
Novosibirsk, December 2018 (ISSN 1813-3304), DOI: 10.33048/semi.2018.15.144,\\
{\tt http://itar.iis.nsk.su/files/itar/pages/dtsipbceqsemr.pdf}, {\tt http://semr.math.nsc.ru/v15/}\\
{\tt p1743-1812.pdf}.

\bibitem{Tof00} {\sc Tofts C.}
{\em Symbolic approaches to probability distributions in process algebra.}
{\sl Formal Aspects of Computing} {\bf 12(5)}, p. 392--415, December 2000.

\bibitem{Tri16} {\sc Trivedi K.S.}
{\em Probability and statistics with reliability, queuing, and computer science applications.}
$2^{nd}$ edition, 857 p., John Wiley and Sons, Hoboken, NJ, USA, 2016 (ISBN 978-1-119-28542-7).

\bibitem{TBo17} {\sc Trivedi K.S., Bobbio A.}
{\em Reliability and availability engineering: modeling, analysis and applications.}
$1^{st}$ edition, 712 p., Cambridge University Press, Cambridge, UK, 2017 (ISBN 978-1-107-09950-0).

\bibitem{VC17} {\sc Valero R.V., Cambronero P.M.E.}
{\em Using unified modelling language to model the publish/subscribe paradigm in the context of timed Web services with
distributed resources.}
{\sl Mathematical and Computer Modelling of Dynamical Systems} {\bf 23(6)}, p. 570--594, Tailor and Francis, 2017.

\bibitem{Wolf08} {\sc Wolf V.}
{\em Equivalences on phase type processes.}
{\sl Ph.D. thesis}, 202 p., University of Mannheim, Mannheim, Germany, April 2008,
{\tt https://madoc.bib.uni-mannheim.de/1911/1/thesisA4.pdf}.

\bibitem{ZC96} {\sc Zijal R., Ciardo G.}
{\em Discrete deterministic and stochastic Petri nets.}
{\sl ICASE Report} {\bf 96-72}, i+23 p., Institute for Computer Applications in Science and Engineering (ICASE), NASA,
Langley Research Centre, Hampton, VA, USA, December 1996,
{\tt http://www.cs.odu.edu/\symbol{126}mln/ltrs-pdfs/icase-1996-72.pdf},\\
{\tt http://www.dtic.mil/dtic/tr/fulltext/u2/a322409.pdf}.

\bibitem{ZCH97} {\sc Zijal R., Ciardo G., Hommel G.}
{\em Discrete deterministic and stochastic Petri nets.}
{\sl Proceedings of $9^{th}$ ITG/GI Professional Meeting on Measuring, Modeling and Evaluation of Computer and
Communication Systems - 97 (MMB'97)} (K. Irmscher, Ch. Mittasch, K. Richter, eds.), Freiberg, Germany, September 1997,
Volume {\bf 1}, p. 103--117, VDE-Verlag, Berlin, Germany, 1997,
{\tt http://www.cs.ucr.edu/\symbol{126}ciardo/pubs/}\\
{\tt 1997MMB-DDSPN.pdf}.

\end{thebibliography}
\end{document}